\newtheorem{theorem}{Theorem}[section]          	
\newaliascnt{lemma}{theorem}				
\newtheorem{lemma}[lemma]{Lemma}              	
\newaliascnt{conjecture}{theorem}			
\newaliascnt{remark}{theorem}				
\newaliascnt{corollary}{theorem}			
\newtheorem{corollary}[corollary]{Corollary}      
\newaliascnt{definition}{theorem}			
\newtheorem{definition}[definition]{Definition}    
\newaliascnt{proposition}{theorem}			
\newaliascnt{example}{theorem}			
\newtheorem{example}[example]{Example}  	
\let\orgautoref\autoref                         		
\renewcommand{\autoref}[1]{
    \def\equationautorefname{Eq.}
    \def\figureautorefname{Fig.}%
    \def\subfigureautorefname{Fig.}%
    \def\lemmaautorefname{Lemma}%
    \def\conjectureautorefname{Conjecture}%
    \def\remarkautorefname{Remark}%
    \def\propositionautorefname{Prop.}%
    \def\corollaryautorefname{Corollary}%
    \def\definitionautorefname{Def.}%
    \def\sectionautorefname{Sect.}%
    \def\subsectionautorefname{Sect.}%
    \def\subsubsectionautorefname{Section}%
    \def\exampleautorefname{Example}%
    \orgautoref{#1}%
}
\newcommand{\specificref}[2]{\hyperref[#2]{#1~\ref*{#2}}}			
\renewcommand{\epsilon}{\varepsilon}    
\renewcommand{\[}{[\![}
\newcommand{\introparagraph}[1]{\textbf{#1.}}        
\definecolor{gray}{rgb}{0.5,0.5,0.5}
\definecolor{niceblue}{rgb}{.8,.85,1}
\newcommand{\set}[1]{\{#1\}}                    
\newcommand{\setof}[2]{\{{#1}\mid{#2}\}}        
\newcommand{\makeop}[2]                         
  {\ifx#2.\def\next##1{}\else\escapechar=-1     
  \def\next##1{\escapechar=92\def#2{#1}}        
  \expandafter\next\expandafter{\string#2}      
  \let\next\makeop\fi\next{#1}}                 
\def \var(#1){{\bf #1}}
\def\AddSpace#1{\ifcat#1a\ \fi#1} 
\newcommand{\silentreminder}[1]{}
\def \up(#1){[#1)}
\def \down(#1){(#1]}
\def \series(#1,#2){#1_1, \dots \; #1_{#2}}
\def \serieszero(#1,#2){#1_0, #1_1, \dots \; #1_{#2}}
\def \para(#1){{\vspace{1ex}\noindent\small\bf #1\hspace{1ex}}}
\def \myem(#1){{\vspace{1ex}\noindent\small\em #1\hspace{1ex}}}
\newcommand{\eat}[1]{}
\setlist[enumerate]{topsep=2pt}
\newcommand{\cut}[1]{}
\newenvironment{packed_item}{
\begin{itemize}
   \setlength{\itemsep}{1pt}
   \setlength{\parskip}{0pt}
   \setlength{\parsep}{0pt}
}
{\end{itemize}}
\newenvironment{packed_enum}{
\begin{enumerate}
   \setlength{\itemsep}{1pt}
  \setlength{\parskip}{0pt}
   \setlength{\parsep}{0pt}
}
{\end{enumerate}}
\newcommand{\dtr}[0]{\twoheadrightarrow}     
\newcommand{\bQ}[0]{\mathbf{Q}}       
\newcommand{\bR}[0]{\mathbf{R}}
\newcommand{\mS}[0]{\mathcal{S}} 
\newcommand{\mI}[0]{\mathcal{I}}   
\newcommand{\mC}[0]{\mathcal{C}}      
\newcommand{\mL}[0]{\mathcal{L}}   
\newcommand{\mP}[0]{\mathcal{P}}   
\newcommand{\APS}[0]{\textsf{APS}}   
\newcommand{\DPS}[0]{\textsf{DPS}} 
\newcommand{\QPS}[0]{\textsf{QPS}} 
\newcommand{\agr}[2]{\mS_{#1}(#2)} 
\newcommand{\dagr}[2]{\overline{\mS}_{#1}(#2)} 
\begin{document}

\title{The Design of Arbitrage-Free Data Pricing Schemes}

\author{Shaleen Deep$^1$ \and Paraschos Koutris$^1$\\
\and$^1$University of Wisconsin-Madison, Madison, WI\\ \{shaleen, paris\}@cs.wisc.edu
}
\date{}
\maketitle

\begin{abstract}
Motivated by a growing market that involves buying and selling data over the web, we study pricing schemes that assign value to queries issued over a database.
Previous work studied pricing mechanisms that compute the price of a query by extending a data seller's explicit prices on certain queries, or investigated the properties that a pricing function should exhibit without detailing a generic construction. 
In this work, we present a formal framework for pricing queries over data that allows the construction of general families of pricing functions, with the main goal of avoiding arbitrage. We consider two types of pricing schemes: instance-independent schemes, where the price depends only on the structure of the query, and answer-dependent schemes, where the price also depends on the query output.  Our main result is a complete characterization of the structure of pricing functions in both settings, by relating it to properties of a function over a lattice. We use our characterization, together with information-theoretic methods, to construct a variety of arbitrage-free pricing functions. Finally, we discuss various tradeoffs in the design space and present techniques for efficient computation of the proposed pricing functions. 
\end{abstract}

\section{Introduction}
\label{sec:introduction}

The commodification of data over the last decade has created many unique research challenges, among them data privacy and pricing of data. In a broad range of application areas, data today is being collected at an unprecedented scale. This phenomenon has led to a growing market for so called big data brokers, who sell this data to buyers such as financial firms, retailers and insurance companies~\cite{cukier2013rise, federal2014data}.

In this paper, we investigate the problem of {\em query-based data pricing}, where the task is to assign prices to queries over a database, such that the price captures the amount of information revealed by asking the query. Traditionally, data pricing has been done either by allowing the buyer to access only certain queries with a fixed price set by the seller, or the buyer needs to purchase the whole dataset~\cite{LK14}. Although such an approach is conceptually simple, defining a large set of queries that are representative of the user's needs is a tall task for the data seller.  Even if this is feasible, such a pricing scheme may allow {\em arbitrage}, which occurs when a data buyer can potentially buy data at a price less than what is set by the seller. It can also lead to prices that exhibit undesirable behavior.

Previous work in the area of data pricing has identified a set of {\em arbitrage conditions} that any reasonable pricing function should avoid. 
The fundamental arbitrage condition is {\em information arbitrage}, first introduced in~\cite{KUBHS12b}. Intuitively, a query $Q_1$ that reveals a subset of the information that is revealed by another query $Q_2$ should be priced at most as much as $Q_2$. If not, an arbitrage opportunity occurs: a clever buyer can pay the price of $Q_2$ and then use the result of $Q_2$ to compute $Q_1$ for a lower price. 
A second arbitrage condition is {\em bundle arbitrage}~\cite{LK14}. Intuitively, asking simultaneously for $Q_1$ and $Q_2$ (as a bundle) should cost at most the sum of asking separately for each. 
Both~\cite{KUBHS12b, LK14} propose pricing functions that avoid both arbitrage conditions. 
However, to the best of our knowledge, there exists no framework that supports a generic construction of pricing functions, and facilitates the analysis of the various tradeoffs in design choices. 
  
\introparagraph{Our Contribution}
We address the question of designing arbitrage-free pricing schemes that assign prices to queries over a database. Our main result is a complete characterization of the structure of pricing functions for two pricing schemes: {\em answer-dependent prices} (\APS), and {\em instance-independent prices} (\QPS). We use this characterization to construct a variety of pricing functions, and also discuss the various tradeoffs involved in choosing the right pricing function. We summarize below our results in more detail.

We first study \APS, where the price depends both on the query $Q$ and on the answer of the query $E = Q(D)$. To characterize such schemes, we define the {\em conflict set}, which is the set of databases such that $Q(D) \neq E$. We show that any arbitrage-free pricing function is equivalent to a monotone and subadditive function over the join-semilattice defined by the conflict sets (Theorems~\ref{thm:arbitrage-free:1} and~\ref{thm:arbitrage-free:2}). Equipped with this characterization, we present several examples of arbitrage-free functions, including the weighted coverage and the weighted set cover functions. In addition, we show that an answer-dependent pricing function with no bundle arbitrage leads to unnatural behavior: any query can cost at least half the price of the whole dataset for some databases. This suggests that there is a tradeoff that any data seller must take into account when choosing a pricing function.


Second, we examine the structure of \QPS, where the pricing function depends only on the query $Q$. We prove that any non-trivial instance-independent pricing function must have weaker arbitrage guarantees compared to an answer-dependent function.
To provide a characterize of functions in \QPS, we view the query $Q$ as a {\em partition} over the set of possible databases: our main results is that any arbitrage-free function is equivalent to a monotone and subadditive function over the elements of the join-semilattice formed from the partitions (Theorems~\ref{thm:arb:1} and~\ref{thm:arb:2}).

To design pricing functions in \QPS, we apply two methods. The first method applies an appropriate aggregate function to combine the prices of an arbitrage-free answer-dependent function (\autoref{lem:average}). The second method views the database as a random variable (with some probability distribution over the possible databases), and computes the price as the {\em information gain} of the data buyer after the answer has been revealed (Section~\ref{sec:entropy}). This approach is parallel to work on side-channel attacks~\cite{kopf2007information}, and quantitative information flow~\cite{Kopf:2010aa}. By using different entropy measures, such as {\em Shannon entropy}, or {\em min-entropy}, we obtain pricing functions that we prove to be arbitrage-free using the machinery we developed.

Third, we show how the proposed pricing functions can be computed efficiently in practical settings. We discuss two different techniques. The first method restricts the computation of a pricing function to a small set of databases (instead of all possible databases). The second method uses approximation techniques to estimate the price within a small margin of error. 

\introparagraph{Organization} 
Section 2 presents the key concepts, terminology and notation that we use throughout the paper. In Section 3, we study the construction and properties of pricing functions for the answer-dependent case. Section 4 details the corresponding problem for instance-independent pricing schemes. Section 5 discusses techniques to compute a pricing function efficiently. We present the related work and conclude in Sections 6 and 7 respectively.


\section{Notation and Framework}
\label{sec:framework}

In this section, we set up the necessary notation and formally describe the pricing framework.

\subsection{Preliminaries}

We fix a relational schema $\bR = (R_1, \dots, R_k)$; we use $D$ to denote a database instance that uses the schema. We will use $\mI$ to denote the set of possible database instances. The set $\mI$ encodes information about the database that is provided by the data seller, and is public information known to any data buyer. Further, we allow the set $\mI$ to be infinite, but countable. For example, suppose that the schema consists of a single binary relation $R(A,B)$ and we know that the domain of both attributes is $[n] = \set{1, \dots, n}$. Then, $\mI = 2^{[n] \times [n]}$, which represents equivalently the set of all possible directed graphs on the vertex set $[n]$.

We will view a {\em query} $Q$ from some query language $\mL$ as a deterministic function that takes as input a database instance $D \in \mI$ and returns an output $Q(D)$. In this paper, we do not impose any restriction on the query language $\mL$, but in the examples we will use and in some of the design tradeoffs we assume $Q$ is either a {\em conjunctive query (CQ)} or a {\em union of conjunctive queries (UCQ)}. 
 A {\em query bundle} $\bQ =  (Q_1, \dots, Q_n )$ is a finite set of queries that is asked simultaneously on the database. We denote by $B(\mL)$ the set of finite query bundles from the language $\mL$. Given two query bundles $\bQ_1, \bQ_2$, we denote their union as $\bQ = \bQ_1, \bQ_2$. 

\introparagraph{Queries as Partitions}
It will be handy to provide an alternative viewpoint of a query bundle $\bQ$ as a  partition over the set of instances $\mI$. A {\em partition} $\mP = \set{B_1, \dots, B_k}$ of $\mI$ is a set of pairwise disjoint sets $B_i \subseteq \mI$, which we call {\em blocks}, such that $\cup_{i=1}^k B_i = \mI$. Given $\bQ \in \mL$, we denote by $\mP_{\bQ}$ the partition that is induced by the following {\em equivalence relation}: $D \sim D'$ iff $\bQ(D) = \bQ(D')$ and $\bQ \in \mL$. In other words, two databases belong in the same block of the partition if and only if their output for $\bQ$ is indistinguishable. 
We use the standard notation $[D]_\bQ$ to denote the equivalence class in which $D$ belongs; in other words, $[D]_\bQ = \setof{D' \in \mI}{ \bQ(D') = \bQ(D)}$. 
For two partitions $\mP_1, \mP_2$, we say that $\mP_1$ {\em refines} $\mP_2$, and write $\mP_1 \succeq \mP_2$, if every block of $\mP_1$ is a subset of some block in $\mP_2$. In other words, $\mP_1$ is a more fine-grained partition of $\mI$ than $\mP_2$. 

\introparagraph{Lattices and Join-Semilattices} A {\em join-semilattice}  $(L, \leq)$ is a partially ordered set in which every two elements in $L$ have a unique supremum (called {\em join} and denoted as $\vee$). A {\em lattice} $(L, \leq)$ is a partially ordered set in which every two elements in $L$ have both a unique supremum, and a unique infimum (called {\em meet} and denoted $\wedge$). In this paper, we will consider two different join-semilattices. The first semilattice has elements subsets of $\mI$, which are ordered by subset inclusion $\subseteq$. The second semilattice has elements partitions of $\mI$, which are ordered by the refinement relation $\preceq$.

Let $f: L \rightarrow \mathbb{R}$ be a function defined on the elements of the join-semilattice. We say that $f$ is {\em monotone}, or {\em isotone}, if  whenever $A \leq B$, then $f(A) \leq  f(B)$. Moreover, we say that $f$ is {\em subadditive} if for any two elements $A,B$ of the semilattice we have $f(A \vee B) \leq f(A) + f(B)$.

\subsection{The Pricing Framework}

In our setting, a data seller offers a database instance $D$ for sale. Data buyers can issue queries on the database in the form of query bundles $\bQ$. For each query $\bQ$ over the instance $D$, the task in hand is to assign a {\em price} to the query answer $\bQ(D)$ that reflects the amount of information gained by the data buyer.
When a price is assigned to a query bundle $\bQ$, we can differentiate between three different pricing strategies, which depend on the parameters used to compute the price.
There are three possible parameters we can use to determine the price of a query: the query bundle $\bQ$, the answer of the query on the database $D$, denoted $E = \bQ(D)$, and the database $D$ itself. The price will obviously depend on which query $\bQ$ we issue, but there is a choice of which $D,E$ should be further used to compute the price. This choice defines three different classes of pricing schemes: 
\begin{packed_item}
\item {\bf Instance-independent (\QPS):} the price depends only on $\bQ$, in which case the pricing function is of the form $p(\bQ)$. The price is independent of the underlying data.
\item {\bf Answer-dependent (\APS):} the price depends on the answer $E = \bQ(D)$, so the price is of the form $p(\bQ, E)$. In this case, the price depends on the query and the query output.
\item {\bf Data-dependent (\DPS):} the price depends on the underlying database $D$, so the pricing function is of the form $p(\bQ, D)$.
\end{packed_item}

Any instance-independent scheme can be cast as an answer-dependent scheme, and any answer-dependent scheme as a data-dependent scheme. The distinction between \APS\ and \DPS\ was introduced in~\cite{LK14}, where the authors use the terminology {\em delayed pricing} and {\em up-front pricing} respectively. Notice that both in \QPS\ and \APS\, the prices themselves do not leak any information about the underlying data $D$.\footnote{For the case of answer-dependent prices, we must make sure that we reveal the price only if we are certain that the buyer will be charged for the cost.} In contrast, a data-dependent pricing scheme can leak information about the data (for more details see~\cite{LK14}). For this reason, in this paper we focus on the first two types of pricing schemes: \QPS\ and \APS. 

The reason we consider query bundles in our setting is that in practice a data buyer will issue over time a sequence $\bQ_1, \dots, \bQ_m$ of query bundles on the database. In this case, after issuing the first $i$ queries, the data buyer should not be charged a price of $\sum_i p(\bQ_i,D)$, but instead $p(\bQ_1, \dots, \bQ_i, D)$. Notice here that, even if a user issues only single queries, we still need to be able to price a query bundle.


\subsection{Arbitrage Conditions}

Assigning prices to query bundles without any restrictions can lead to the occurrence of arbitrage opportunities. In~\cite{KUBHS12}, the authors presented a single condition that captures arbitrage. Here, we follow~\cite{LK14}, and consider independently two different conditions where arbitrage may occur. 

\introparagraph{Information Arbitrage} The first condition captures the intuition that the price of query bundle must capture the amount of information that an answer reveals about the actual database $D$. In particular, if a query bundle $\bQ_1$ reveals a subset of information than a query bundle $\bQ_2$ reveals, the price of $\bQ_1$ must be less than the price of $\bQ_2$. If this condition is not satisfied, it creates an arbitrage opportunity, since a data buyer can purchase $\bQ_2$ instead, and use it to obtain the answer of $\bQ_1$ for a cheaper price. 

\introparagraph{Bundle Arbitrage} The second condition regards the scenario where a data buyer that wants to obtain the answer for the bundle $\bQ = \bQ_1, \bQ_2$ creates two separate accounts, and uses one to ask for $\bQ_1$ and the other to ask for $\bQ_2$. To avoid such an arbitrage situation, we must make sure that the price of $\bQ$ is at most the sum of the prices for $\bQ_1$ and  $\bQ_2$. \cite{LK14} uses the terminology {\em separate-account arbitrage} to refer to this arbitrage condition.

We will show in the next sections how to mathematically formalize information arbitrage and bundle arbitrage for both \APS\ and \QPS.

\section{Answer-Dependent Pricing}
\label{sec:aps}

In this section, we study the design of {\em answer-dependent} pricing schemes. In an \APS\ the pricing function takes the form $p(\bQ, E)$, where $\bQ$ is a query bundle and $E \in \setof{\bQ(D)}{D \in \mI}$. Throughout the section, we assume that query bundles belong to some query language $\mL$.
We first discuss how to formalize the arbitrage conditions. To formally describe information arbitrage, we use the notion of {\em data-dependent determinacy}.

\begin{definition}
We say that $\bQ_2$ determines $\bQ_1$ under database $D$, denoted $D \vdash \bQ_2 \dtr \bQ_1$ if for every database $D'$ such that $\bQ_2(D) = \bQ_2(D')$, we also have  $\bQ_1(D') = \bQ_1(D)$.
\end{definition}

The above definition of determinacy is different from {\em query determinacy}~\cite{NSV07, NSV10}, since it is defined with respect to a given database $D$. It is also easy to see that if $D \vdash \bQ_2 \dtr \bQ_1$, we also have that $D' \vdash \bQ_2 \dtr \bQ_1$ for any database $D'$ such that $\bQ_2(D) = \bQ(D')$.

\begin{definition}[\APS\ Information Arbitrage]
Let $\bQ_1, \bQ_2$ be two query bundles. We say that the pricing function $p$ has no {\em information arbitrage} if for every database $D \in \mI$, $D \vdash \bQ_2 \dtr \bQ_1$ implies that $p(\bQ_2, E_2) \geq  p(\bQ_1, E_1)$, where $E_i = \bQ_i(D)$ for $i=1,2$.
\end{definition}

This definition of information arbitrage captures both {\em post-processing arbitrage} and {\em serendipitous arbitrage}, as these are defined in~\cite{LK14}. 
For the case of bundle arbitrage, we formalize it as follows.

\begin{definition}[\APS\ Bundle arbitrage]
Let the query bundle $\bQ = \bQ_1, \bQ_2$. We say that the price function $p$ has no {\em bundle arbitrage} if for every database $D \in \mI$, we have 
$p(\bQ, E) \leq p(\bQ_1, E_1) + p(\bQ_2, E_2)$,
where $E = \bQ(D)$ and $E_i = \bQ_i(D)$ for $i=1,2$.
\end{definition}

We say that an answer-dependent pricing function is {\em arbitrage-free} if it has no information arbitrage and no bundle arbitrage.

\subsection{How to Find a Pricing Function}

In this section, we characterize the family of answer-dependent pricing functions that satisfy both arbitrage conditions. The critical component is the notion of a {\em conflict set}.

\subsubsection{Conflict Sets}

Consider a query bundle $\bQ \in B(\mL)$, a database $D \in \mI$ and let $E = \bQ(D)$. We define%
\begin{align*}
\agr{\bQ}{E}  = \setof{D' \in \mathcal{I}}{\bQ(D') = E}, \quad \quad \quad 
\dagr{\bQ}{E}  = \setof{D' \in \mathcal{I}}{\bQ(D') \neq E} 
\end{align*}
In other words, $\agr{\bQ}{E}$ computes the set of databases that ``agree'' with the view extension $E$, and $\dagr{\bQ}{E}$ contains the complement set, i.e. the set of databases that ``disagree'' with $E$. Notice that $\agr{\bQ}{\bQ(D)} = [D]_\bQ$. We refer to $\dagr{\bQ}{E}$ as the {\em conflict set} for query $\bQ$ and extension $E$, while we refer to $\agr{\bQ}{E}$ as the {\em agreement set}. It is straightforward that $\dagr{\bQ}{E} = \mI \setminus \agr{\bQ}{E}$. 


\begin{example} \label{ex:intro}
We will use the following scenario as a running example throughout this section. Suppose that we have a binary relation $R(\underline{A},B)$, where attribute $A$ is the key. The values of the $n$ keys are also publicly known $\set{a_1, a_2, \dots, a_n}$. Moreover, assume that $B$ can take two possible values from $\set{0,1}$. It is easy to see that $\mI$ consists of $2^n$ databases. For $n=2$, let $D_{ij}$ denote the database $\{(a_1,i), (a_2,j)\}$. For example $D_{01} = \{(a_1,0), (a_2,1)\}$.

Consider now the query $Q(x) = R(a_1,x)$, which asks for value of attribute $B$ for the tuple with key $A=a_1$. Assume that the underlying database is $D_{01}$. The conflict set of $Q$ and $E = Q(D_{01})$ consists of all databases $D$ for which $(a_1,1) \in D$, hence
$\agr{Q}{E} = \set{D_{10}, D_{11}}$.
\end{example}

If  $\bQ$ returns a constant answer for every database in $\mI$, the conflict set will be the empty set. On the other hand, if $\bQ$ reveals the whole database $D$, the conflict set will be $\mI \setminus \set{D}$. We can now define the set of all possible conflict sets for  a database $D$ and a given language $\mL$ as $\mS_{D}^\mL = \{\dagr{\bQ}{\bQ(D)} \mid \bQ \in B(\mL)\}$. The following lemma shows that $\mS_D^\mL$ forms a {\em join-semilattice} under the partial order $\subseteq$, where the join operator is set union. 

\begin{lemma} \label{lem:bundle:char}
Let $\bQ = \bQ_1, \bQ_2$. For a database $D \in \mI$, let $E_1 = \bQ_1(D)$, $E_2 = \bQ_2(D)$, and $E = \bQ(D)$. Then,
$ \dagr{\bQ}{E} = \dagr{\bQ_1}{E_1} \cup \dagr{\bQ_2}{E_2}.$
\end{lemma}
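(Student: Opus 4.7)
The plan is to unfold the definition of conflict sets and use the fact that the bundle $\bQ = \bQ_1, \bQ_2$ computes $\bQ_1$ and $\bQ_2$ simultaneously, so its answer on any database $D'$ may be identified with the pair $(\bQ_1(D'), \bQ_2(D'))$. In particular, $E = \bQ(D)$ corresponds to the pair $(E_1, E_2)$. This observation reduces the lemma to a straightforward set-theoretic identity.

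First, I would argue both inclusions by chasing membership. For the $\subseteq$ direction, take any $D' \in \dagr{\bQ}{E}$, i.e. $\bQ(D') \neq E$. Since equality of $\bQ$-answers on $D'$ and $D$ is equivalent to equality of both $\bQ_1$-answers and $\bQ_2$-answers, the inequality $\bQ(D') \neq E$ must be witnessed by at least one of $\bQ_1(D') \neq E_1$ or $\bQ_2(D') \neq E_2$, placing $D'$ in $\dagr{\bQ_1}{E_1} \cup \dagr{\bQ_2}{E_2}$. For the reverse direction, if $D'$ lies in $\dagr{\bQ_i}{E_i}$ for some $i \in \{1,2\}$, then $\bQ_i(D') \neq E_i$, and hence the combined answer $\bQ(D')$ cannot coincide with $E$, so $D' \in \dagr{\bQ}{E}$.

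An equivalent and perhaps slicker route is to dualize: show instead that the agreement set satisfies $\agr{\bQ}{E} = \agr{\bQ_1}{E_1} \cap \agr{\bQ_2}{E_2}$, which is immediate from the fact that $\bQ(D') = E$ iff $\bQ_1(D') = E_1$ and $\bQ_2(D') = E_2$. Taking complements in $\mI$ and applying De Morgan's law (using $\dagr{\bQ}{E} = \mI \setminus \agr{\bQ}{E}$, which was noted in the text) then yields the claimed identity.

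No real obstacle is expected here; the lemma is essentially a definitional unpacking. The only point to be careful about is to justify once and for all that equality $\bQ(D') = \bQ(D)$ for the bundle means componentwise equality on each constituent query. Having established that, the identity follows from elementary set algebra, and the lemma justifies viewing the conflict sets associated with query bundles as a join-semilattice under $\subseteq$ with $\cup$ as join, which is the structural fact relied upon later in the characterization results.
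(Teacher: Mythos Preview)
Your proposal is correct and essentially identical to the paper's own proof: the paper takes precisely your ``slicker route,'' observing that $\agr{\bQ}{E} = \agr{\bQ_1}{E_1} \cap \agr{\bQ_2}{E_2}$ and then applying De Morgan to obtain the claimed union of conflict sets. Your additional element-chasing argument is just the unfolded version of the same reasoning, so there is no substantive difference in approach.
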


\begin{proof}
Let us denote $A_1 =  \agr{\bQ_1}{E_1}$, $A_2 =  \agr{\bQ_2}{E_2}$ and $A = \agr{\bQ}{E}$. It is easy to see that $A = A_1 \cap A_2$, since by definition $A$ contains exactly the databases that agree with respect to both $\bQ_1$ and $\bQ_2$. Taking complements we obtain that  $\overline A = \overline{ A_1} \cup \overline{A_2}$. Also, notice that since both $\bQ_1,\bQ_2 \in B(\mL)$, $\bQ \in B(\mL)$.
\end{proof}

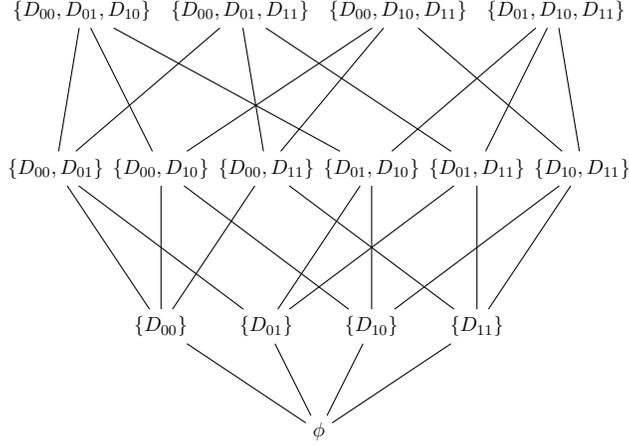
\begin{figure}
\centering
\begin{tikzpicture}[scale=0.7, transform shape]
  \node (0) at (-4.5,6) {$\set{ D_{00} ,D_{01},D_{10}}$};
  \node (1) at (-1.5,6) {$\set{ D_{00} ,D_{01},D_{11}}$};
  \node (2) at (1.5,6) {$\set{ D_{00} ,D_{10},D_{11}}$};
  \node (3) at (4.5,6) {$\set{ D_{01} ,D_{10},D_{11}}$};
  \node (e) at (-5,3) {$\{ D_{00},D_{01}\}$};
  \node (f) at (-3,3) {$\{ D_{00},D_{10}\}$};
  \node (g) at (-1,3) {$\{ D_{00},D_{11}\}$};
  \node (h) at (1,3) {$\{ D_{01},D_{10}\}$};
  \node (i) at (3,3) {$\{ D_{01},D_{11}\}$};
  \node (j) at (5,3) {$\{ D_{10},D_{11}\}$};
  \node (a) at (-3,0) {$\{ D_{00} \}$};
  \node (b) at (-1,0) {$\{ D_{01} \}$};
  \node (c) at (1,0) {$\{ D_{10} \}$};
  \node (d) at (3,0) {$\{ D_{11} \}$};
  \node (min) at (0,-2) {$\phi$};
  \draw 
  (a) -- (min) -- (b)  (c) -- (min) -- (d)
  (a) -- (e) -- (b) -- (h) -- (c) -- (j) -- (d)
  (e) -- (0) -- (f) -- (2) -- (j) -- (3) ;
  \draw
  (a) -- (f) -- (c)
  (b) -- (i) -- (d)  
  (a) -- (g) -- (d);
  \draw
  (3) -- (i) -- (1) -- (g) --(2)  
  (e) -- (1) 
  (0) -- (h) -- (3) ;
\end{tikzpicture}
\caption{A simultaneous depiction of the join-semilattices for the four databases in Example~\ref{ex:intro}.}
\label{fig:incl:lattice}
\end{figure}

The diagram in Figure~\ref{fig:incl:lattice} depicts simultaneously the four join-semilattices for each of the databases in Example~\ref{ex:intro}. We next prove a lemma that connects the notion of a conflict set with data-dependent determinacy.

\begin{lemma} \label{lem:equiv:dtr}
Let $\bQ_1, \bQ_2$ be two query bundles, and $D \in \mI$ be a database. Let $E_i = \bQ_i(D)$ for $i=1,2$.
The following two statements are equivalent:
\begin{packed_enum}
\item $D \vdash \bQ_2 \dtr \bQ_1$
\item $\dagr{\bQ_2}{E_2} \supseteq \dagr{\bQ_1}{E_1}$
\end{packed_enum}
\end{lemma}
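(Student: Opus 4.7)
The plan is to prove the equivalence by translating the data-dependent determinacy condition into a statement about \emph{agreement sets}, and then taking complements to pass to conflict sets. Both directions of the iff should fall out from a direct unfolding of definitions, so no clever construction is required.

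First, I would rewrite the determinacy condition. By definition, $D \vdash \bQ_2 \dtr \bQ_1$ says: every $D' \in \mI$ with $\bQ_2(D') = \bQ_2(D) = E_2$ satisfies $\bQ_1(D') = \bQ_1(D) = E_1$. Equivalently, every $D' \in \agr{\bQ_2}{E_2}$ lies in $\agr{\bQ_1}{E_1}$, i.e.,
\[
D \vdash \bQ_2 \dtr \bQ_1 \iff \agr{\bQ_2}{E_2} \subseteq \agr{\bQ_1}{E_1}.
\]

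Second, I would use the fact (already noted in the paper) that $\dagr{\bQ}{E} = \mI \setminus \agr{\bQ}{E}$. Taking complements in $\mI$ reverses the direction of inclusion, so
\[
\agr{\bQ_2}{E_2} \subseteq \agr{\bQ_1}{E_1} \iff \dagr{\bQ_2}{E_2} \supseteq \dagr{\bQ_1}{E_1}.
\]
Chaining the two equivalences yields the claim.

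Since this is really just two substitutions and one complementation, there is no substantive obstacle; the only thing to be careful about is making sure the $E_i$ are the values $\bQ_i(D)$ consistent with the hypothesis, so that $D$ itself lies in $\agr{\bQ_i}{E_i}$ for $i=1,2$, which is what makes the definition of determinacy at the instance $D$ coincide with the global inclusion of agreement sets. Both directions $(1) \Rightarrow (2)$ and $(2) \Rightarrow (1)$ are then obtained by reading the biconditional chain in the respective direction.
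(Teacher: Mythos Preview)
Your proposal is correct and follows essentially the same approach as the paper: both translate the determinacy condition into the inclusion $\agr{\bQ_2}{E_2} \subseteq \agr{\bQ_1}{E_1}$ of agreement sets and then pass to conflict sets by complementation. The only cosmetic difference is that you package the argument as a single biconditional chain, whereas the paper writes out the two implications separately.
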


\begin{proof}
$1 \implies 2$. Consider a database $D' \in \agr{\bQ_2}{E_2} $. By definition, it must be that $\bQ_2(D') = E_2 = \bQ_2(D)$. By the definition of data-dependent determinacy, this implies that $\bQ_1(D') = \bQ_1(D) = E_1$, and thus $D' \in \agr{\bQ_1}{E_1}$. This implies in turn that $\agr{\bQ_2}{E_2} \subseteq \agr{\bQ_1}{E_1}$. Taking the complement, we obtain $\dagr{\bQ_2}{E_2} \supseteq \dagr{\bQ_1}{E_1}$.  

 $2 \implies 1$. Consider a database $D'$ such that $\bQ_2(D') = E_2 = \bQ_2(D)$. Then, by definition $D' \in \agr{\bQ_2}{E_2}$, which implies that $D' \in \agr{\bQ_1}{E_1}$. But then we have that $\bQ_1(D') = E_1 = \bQ_1(D)$. Thus, $D \vdash \bQ_2 \dtr \bQ_1$. 
\end{proof} 


Lemma~\ref{lem:equiv:dtr} and Lemma~\ref{lem:bundle:char} demonstrate that information and bundle arbitrage can be cast as conditions on the elements of the semilattice of conflict sets.

\begin{example}
Continuing Example~\ref{ex:intro}, consider the queries $Q_1(x) = R(a_1,x)$ and $Q_2() = R(x,1)$. Let $D_{00}$ be the underlying database. It is easy to see that $D_{00} \vdash Q_2 \dtr Q_1 $, since after asking $Q_2$ we learn that the database contains no 1 values for $B$, and thus it must have only 0 values. The conflict sets for $E_1 = Q_1(D_{00})$, $E_2 = Q_2(D_{00})$ are $\dagr{Q_1}{E_1} = \set{D_{11}, D_{10}}$ and $\dagr{Q_2}{E_2} = \set{D_{01}, D_{10}, D_{11}}$ respectively. 
\end{example}

%

\subsubsection{A Characterization of Arbitrage-Free \APS\ }

We can now use the notion of a conflict set to define  pricing functions of the form
$ p(\bQ, E) = f(\dagr{\bQ}{E})$,
where $f: 2^{\mathcal{I}} \setminus \{I\} \rightarrow  \mathbb{R}_+$ is a {\em set function}. It is straightforward to see that such a pricing function is by construction in \APS, since the computation depends only on $\bQ$ and $E$, and not on the database $D$.
For example, if $\bQ$ returns a constant answer for every database in $\mI$, $p(\bQ, E) = f(\emptyset)$. On the other hand, if $\bQ$ reveals the whole database $D$, $p(\bQ,E) = f(\mI \setminus \set{D})$. 
We can now show a necessary and sufficient characterization of answer-dependent functions with no information arbitrage in terms of such a function $f$.

\begin{theorem}
\label{thm:arbitrage-free:1}
Let $p$ be an answer-dependent pricing function. The following two statements are equivalent:
\begin{packed_enum}
\item $p$ has no information arbitrage. 
\item $p(\bQ, E) = f(\dagr{\bQ}{E})$, where $f$ is a monotone function over every semilattice $\mS_D^\mL$.
\end{packed_enum} 
\end{theorem}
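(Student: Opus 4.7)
The plan is to establish the two implications separately, leveraging Lemma~\ref{lem:equiv:dtr} as the main bridge between data-dependent determinacy and set inclusion of conflict sets.

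For the direction $(2) \Rightarrow (1)$, I would assume $p(\bQ,E) = f(\dagr{\bQ}{E})$ with $f$ monotone on each $\mS_D^\mL$, and simply unfold the arbitrage definition. Suppose $D \vdash \bQ_2 \dtr \bQ_1$, and set $E_i = \bQ_i(D)$. Lemma~\ref{lem:equiv:dtr} gives $\dagr{\bQ_2}{E_2} \supseteq \dagr{\bQ_1}{E_1}$; both conflict sets live in the same semilattice $\mS_D^\mL$ by construction (they both arise as $\dagr{\bQ_i}{\bQ_i(D)}$), and so monotonicity yields $f(\dagr{\bQ_2}{E_2}) \geq f(\dagr{\bQ_1}{E_1})$, i.e.\ $p(\bQ_2,E_2) \geq p(\bQ_1,E_1)$. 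This is immediate once the pieces are lined up.

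For the more interesting direction $(1) \Rightarrow (2)$, the plan is to define $f$ by the formula $f(\dagr{\bQ}{E}) := p(\bQ,E)$ on the union $\bigcup_{D \in \mI} \mS_D^\mL$, and then verify both well-definedness and monotonicity. Monotonicity on each $\mS_D^\mL$ is then a symmetric restatement of the previous direction: if $\dagr{\bQ_1}{E_1} \subseteq \dagr{\bQ_2}{E_2}$ with both $E_i = \bQ_i(D)$, Lemma~\ref{lem:equiv:dtr} gives $D \vdash \bQ_2 \dtr \bQ_1$, and no information arbitrage yields the desired inequality.

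The main obstacle is well-definedness: if two pairs $(\bQ_1,E_1)$ and $(\bQ_2,E_2)$ have the same conflict set, we must show $p(\bQ_1,E_1) = p(\bQ_2,E_2)$. The plan here is to observe that equal conflict sets imply equal (and, crucially, nonempty, since $E_i$ is achievable) agreement sets $\agr{\bQ_1}{E_1} = \agr{\bQ_2}{E_2}$. Picking any witness $D$ in this common agreement set, we have simultaneously $E_1 = \bQ_1(D)$ and $E_2 = \bQ_2(D)$, so both pairs sit inside the same $\mS_D^\mL$. Applying Lemma~\ref{lem:equiv:dtr} in both directions gives $D \vdash \bQ_1 \dtr \bQ_2$ and $D \vdash \bQ_2 \dtr \bQ_1$, and invoking the no-information-arbitrage assumption twice forces $p(\bQ_1,E_1) = p(\bQ_2,E_2)$, which is exactly what we need for $f$ to be a single-valued function on conflict sets. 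With well-definedness in hand, the monotonicity check outlined above closes the argument.
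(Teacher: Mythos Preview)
Your proposal is correct and follows essentially the same approach as the paper, which proves the two directions as Lemmas~\ref{lem:help:conflict} and~\ref{lem:help:rev} using Lemma~\ref{lem:equiv:dtr} exactly as you do. The only cosmetic difference is that the paper first defines an auxiliary function $g$ on agreement sets and then sets $f(S) = g(\mI \setminus S)$, whereas you define $f$ directly on conflict sets; your explicit observation that a common witness $D$ exists in the shared (nonempty) agreement set is in fact a detail the paper's proof glosses over.
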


We next present the proof of~\autoref{thm:arbitrage-free:1} using two lemmas, one for each direction of the equivalence. 

\begin{lemma} \label{lem:help:conflict}
Let $ p(\bQ, E) = f(\dagr{\bQ}{E})$ be a pricing function.  If $f$ is a monotone function over every semilattice $\mS_D^\mL$, then $p$ has no information arbitrage.
\end{lemma}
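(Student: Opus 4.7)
The plan is to unfold the definition of information arbitrage and reduce it directly to the monotonicity hypothesis on $f$, using \autoref{lem:equiv:dtr} as the bridge between determinacy statements on queries and inclusion statements on conflict sets.

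First I would fix an arbitrary database $D \in \mI$ and an arbitrary pair of query bundles $\bQ_1, \bQ_2 \in B(\mL)$ satisfying the hypothesis $D \vdash \bQ_2 \dtr \bQ_1$, and set $E_i = \bQ_i(D)$ for $i = 1,2$. The goal is then just to show $p(\bQ_1, E_1) \leq p(\bQ_2, E_2)$.

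Next I would invoke \autoref{lem:equiv:dtr} to translate the determinacy assumption $D \vdash \bQ_2 \dtr \bQ_1$ into the set-inclusion statement $\dagr{\bQ_1}{E_1} \subseteq \dagr{\bQ_2}{E_2}$. Observe that both sets $\dagr{\bQ_1}{E_1}$ and $\dagr{\bQ_2}{E_2}$ are, by their very construction, elements of the join-semilattice $\mS_D^\mL$ (since $\bQ_1, \bQ_2 \in B(\mL)$ and $E_i = \bQ_i(D)$).

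Finally, since $f$ is monotone over the semilattice $\mS_D^\mL$ (ordered by $\subseteq$), the inclusion above yields $f(\dagr{\bQ_1}{E_1}) \leq f(\dagr{\bQ_2}{E_2})$. Unfolding the definition of $p$ gives exactly $p(\bQ_1, E_1) \leq p(\bQ_2, E_2)$, which is the no-information-arbitrage condition for $D, \bQ_1, \bQ_2$. Since $D, \bQ_1, \bQ_2$ were arbitrary subject to the determinacy hypothesis, we are done. There is essentially no obstacle here: the work has already been done in \autoref{lem:equiv:dtr}, and the only subtlety is remembering that both conflict sets live in the \emph{same} semilattice $\mS_D^\mL$, which is immediate since both are evaluated at the same $D$.
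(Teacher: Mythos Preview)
Your proposal is correct and follows essentially the same approach as the paper: fix $D$, invoke \autoref{lem:equiv:dtr} to convert $D \vdash \bQ_2 \dtr \bQ_1$ into the inclusion $\dagr{\bQ_1}{E_1} \subseteq \dagr{\bQ_2}{E_2}$, then apply monotonicity of $f$ on $\mS_D^\mL$. If anything, you are slightly more explicit than the paper in noting that both conflict sets lie in the same semilattice $\mS_D^\mL$, which is the only point that needs to be checked before invoking monotonicity.
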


\begin{proof}
Consider two query bundles $\bQ_1, \bQ_2 \in B(\mL)$ such that $D \vdash \bQ_2 \dtr \bQ_1$. From Lemma~\ref{lem:equiv:dtr} this implies that $\dagr{\bQ_1}{E_1}\subseteq \dagr{\bQ_2}{E_2}$, where $E_i = \bQ_i(D)$ for $i=1,2$. Since $f$ is monotone on the semilattice $\mS_D^\mL$, we have
$$ p(\bQ_1, E_1) = f(\dagr{\bQ_1}{E_1} ) \leq f(\dagr{\bQ_2}{E_2} ) =  p(\bQ_2, E_2).$$
This completes the proof.
\end{proof}

\begin{lemma} \label{lem:help:rev}
Let $p$ be an answer-dependent pricing function with no information arbitrage. Then, $p$ is of the form $p(\bQ, E) = f(\overline \mS_{\bQ}(E))$, where $f$ is a monotone function over every semilattice $\mS_D^\mL$.
\end{lemma}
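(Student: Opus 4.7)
The plan is to define $f$ directly from $p$ by setting $f(\dagr{\bQ}{E}) := p(\bQ, E)$, and then show this is both well-defined and monotone on each semilattice $\mS_D^\mL$. Both properties will reduce, via \autoref{lem:equiv:dtr}, to the no-information-arbitrage hypothesis.

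First I would address well-definedness, which I expect to be the main subtlety. The function $p$ takes a pair $(\bQ, E)$ as input, whereas $f$ is to be a function of a set. So I need to verify that whenever two pairs $(\bQ_1, E_1)$ and $(\bQ_2, E_2)$ yield the same conflict set $\dagr{\bQ_1}{E_1} = \dagr{\bQ_2}{E_2}$ (with $E_i = \bQ_i(D)$ for some common $D \in \mI$), we have $p(\bQ_1, E_1) = p(\bQ_2, E_2)$. Equality of conflict sets is equivalent to equality of agreement sets, so by \autoref{lem:equiv:dtr} we get both $D \vdash \bQ_1 \dtr \bQ_2$ and $D \vdash \bQ_2 \dtr \bQ_1$. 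Two applications of the no-information-arbitrage hypothesis then give $p(\bQ_1, E_1) \le p(\bQ_2, E_2)$ and $p(\bQ_2, E_2) \le p(\bQ_1, E_1)$, forcing equality. Hence $f$ is unambiguously defined on $\bigcup_D \mS_D^\mL$, and by construction $p(\bQ, E) = f(\dagr{\bQ}{E})$ for every $(\bQ, E)$ arising from some $D \in \mI$.

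Next I would establish monotonicity of $f$ on each semilattice $\mS_D^\mL$. Fix $D \in \mI$ and consider two elements $\dagr{\bQ_1}{E_1}, \dagr{\bQ_2}{E_2} \in \mS_D^\mL$ with $\dagr{\bQ_1}{E_1} \subseteq \dagr{\bQ_2}{E_2}$, where $E_i = \bQ_i(D)$. By \autoref{lem:equiv:dtr}, this inclusion is equivalent to $D \vdash \bQ_2 \dtr \bQ_1$, and the no-information-arbitrage assumption then yields $p(\bQ_1, E_1) \le p(\bQ_2, E_2)$, i.e., $f(\dagr{\bQ_1}{E_1}) \le f(\dagr{\bQ_2}{E_2})$.

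The main conceptual hurdle is really just the well-definedness step, since it requires recognizing that two different bundles with the same conflict set on $D$ are mutually determining under $D$, and hence must receive the same price from any arbitrage-free $p$. Once that symmetry argument is in place, monotonicity is an immediate one-directional application of the same lemma. No further machinery is needed, and the two lemmas together establish \autoref{thm:arbitrage-free:1}.
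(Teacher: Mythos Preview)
Your proposal is correct and follows essentially the same approach as the paper's proof: establish well-definedness of $f$ via the symmetric application of \autoref{lem:equiv:dtr} and no information arbitrage, then monotonicity via the one-directional application. The only cosmetic difference is that the paper first defines an auxiliary function $g$ on agreement sets and then sets $f(\mS) = g(\mI \setminus \mS)$, whereas you define $f$ directly on conflict sets; since complementation is a bijection, the two are equivalent.
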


\begin{proof}
We prove this lemma in two steps. In the first step, we prove that $p(\bQ, E) = g( \mS_{\bQ}(E))$ for some function $g$. To prove this statement, we will show that for any two queries $\bQ_1, \bQ_2$,  $\mS_{\bQ_1}(E_1) = \mS_{\bQ_2}(E_2)$ implies that they have the same price. From the fact that $\mS_{\bQ_2}(E_2) \subseteq \mS_{\bQ_1}(E_1)$ and Lemma~\ref{lem:equiv:dtr} we obtain that $D \vdash \bQ_2 \dtr \bQ_1$. Since $p$ has no information arbitrage, it must be that $p(\bQ_2, E_2) \geq p(\bQ_1, E_1)$. Using a symmetric argument, we can also prove that $p(\bQ_2, E_2) \leq p(\bQ_1, E_1)$, which implies that the prices are indeed the same: $p(\bQ_2, E_2) = p(\bQ_1, E_1)$. This proves the existence of such a function $g$.

Define now the function $f(\mS) = g(\mathcal{I} \setminus \mS)$ for every $\mS \subseteq \mathcal{I}$. Then we can write
$$ p(\bQ, E) = g( \mS_{\bQ}(E)) = f(\mathcal{I} \setminus \mS_{\bQ}(E)) =  f(\dagr{\bQ}{E}).$$

In the second step, we will prove the monotonicity of the function $f$ on every semilattice $\mS_D^\mL$ for $D \in \mI$. Suppose $A \subseteq B$, where $A,B \in \mS_D^\mL$. By the definition of $\mS_D^\mL$, there exist $\bQ_A, \bQ_B \in B(\mL)$ such that $\dagr{\bQ_A}{\bQ_A(D)} = A$ and $\dagr{\bQ_B}{\bQ_B(D)} = B$.
%
%
Notice now that since $A \subseteq B$, we have $\overline \mS_{\bQ_A}(E_A)  \subseteq \overline \mS_{\bQ_B}(E_B) $, which by Lemma~\ref{lem:equiv:dtr} implies $D \vdash \bQ_B \dtr \bQ_A$. Since $p$ has no information arbitrage, $f(A) =  f(\overline \mS_{\bQ_A}(E_A)) =  p(\bQ_A, E_A) \leq p(\bQ_B, E_B) = f(B)$. We have thus shown that $A \subseteq B$ implies $f(A) \leq f(B)$.
\end{proof}

We have shown that in order to avoid information arbitrage it suffices to restrict the function to be monotone. We next demonstrate a similar connection of bundle arbitrage to the property of subadditivity. 

\begin{theorem}
\label{thm:arbitrage-free:2}
Let $p(\bQ,E) = f(\dagr{\bQ}{E})$ be a pricing function, where $f$ is a set function. Then, the following two statements are equivalent:
\begin{packed_enum}
\item $p$ has no bundle arbitrage. 
\item $f$ is subadditive over every semilattice $\mS_D^\mL$.
\end{packed_enum} 
\end{theorem}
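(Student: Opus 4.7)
The plan is to exploit the structural lemma already established, namely Lemma~\ref{lem:bundle:char}, which identifies the join of two conflict sets in the semilattice $\mS_D^\mL$ with their set union, since the partial order on the semilattice is $\subseteq$. Once this identification is made, the equivalence between the bundle arbitrage condition and subadditivity becomes essentially a translation between two languages: the language of query bundles $\bQ = \bQ_1, \bQ_2$ on one side, and the language of joins $A \vee B$ in the semilattice on the other.

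For the direction $(1) \Rightarrow (2)$, I would fix an arbitrary $D \in \mI$ and any two elements $A, B \in \mS_D^\mL$. By the very definition of the semilattice, there exist bundles $\bQ_1, \bQ_2 \in B(\mL)$ with $A = \dagr{\bQ_1}{E_1}$ and $B = \dagr{\bQ_2}{E_2}$, where $E_i = \bQ_i(D)$. Setting $\bQ = \bQ_1, \bQ_2$ and $E = \bQ(D)$, Lemma~\ref{lem:bundle:char} yields $\dagr{\bQ}{E} = A \cup B = A \vee B$. Applying the no-bundle-arbitrage hypothesis at the database $D$ gives
\[
 f(A \vee B) = p(\bQ, E) \leq p(\bQ_1, E_1) + p(\bQ_2, E_2) = f(A) + f(B),
\]
which is exactly subadditivity of $f$ on $\mS_D^\mL$.

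For the converse $(2) \Rightarrow (1)$, I would take an arbitrary bundle decomposition $\bQ = \bQ_1, \bQ_2$ and any database $D \in \mI$, with $E, E_1, E_2$ defined as before. The two conflict sets $\dagr{\bQ_1}{E_1}$ and $\dagr{\bQ_2}{E_2}$ lie in $\mS_D^\mL$, and their join in this semilattice is $\dagr{\bQ}{E}$ again by Lemma~\ref{lem:bundle:char}. Then subadditivity of $f$ on $\mS_D^\mL$ directly produces
\[
 p(\bQ, E) = f(\dagr{\bQ}{E}) \leq f(\dagr{\bQ_1}{E_1}) + f(\dagr{\bQ_2}{E_2}) = p(\bQ_1, E_1) + p(\bQ_2, E_2),
\]
which is the bundle-arbitrage-free condition.

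The proof is essentially routine once Lemma~\ref{lem:bundle:char} is in hand, so there is no real obstacle; the only subtle point to double-check is the quantification. The hypothesis of no bundle arbitrage quantifies over pairs of bundles and a single database $D$, whereas subadditivity must hold on every semilattice $\mS_D^\mL$. Both directions carry the same free $D$ throughout, so nothing escapes. The analogous structural role played by Lemma~\ref{lem:equiv:dtr} in the proof of Theorem~\ref{thm:arbitrage-free:1} is played here by Lemma~\ref{lem:bundle:char}, which is exactly why the argument decouples so cleanly into the two symmetric directions above.
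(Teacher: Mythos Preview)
Your proof is correct and follows essentially the same approach as the paper: both directions pivot on Lemma~\ref{lem:bundle:char} to identify the join in $\mS_D^\mL$ with the union of conflict sets, and then translate between the bundle-arbitrage inequality and subadditivity. The only cosmetic difference is that the paper handles the direction $(2)\Rightarrow(1)$ first.
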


\begin{proof}
For the direction $2 \implies 1$, fix some database $D \in \mI$. Suppose that $p(\bQ, E) = f(\dagr{\bQ}{E})$ and $f$ is subadditive over $\mS_D^\mL$. Consider the bundle $\bQ = \bQ_1, \bQ_2$. Let $A_1 =  \mS_{\bQ_1}(E_1)$, $A_2 =  \mS_{\bQ_2}(E_2)$ and $A = \mS_{\bQ}(E)$. Applying Lemma~\ref{lem:bundle:char}, we have that $\overline A = \overline{ A_1} \cup \overline{A_2}$. Since $f$ is a subadditive function:
\begin{align*}
p(\bQ, E)  = f(\overline A) \leq f(\overline{A_1}) + f(\overline{A_2}) 
=  p(\bQ_1, E_1)  + p(\bQ_2, E_2) 
\end{align*}

For the direction $1 \implies 2$, let $A_1,A_2 \in \mS_D^\mL$. By the definition of the
semilattice, there exist query bundles $\bQ_1, \bQ_2 \in B(\mL)$ such that the conflict sets are $A_1, A_2$ respectively. Let $\bQ = \bQ_1, \bQ_2$, and notice that $\bQ \in B(\mL)$. Then: 
\begin{align*}
f(A_1 \cup A_2) & = f(\overline \mS_{\bQ_1}(E_1) \cup \overline \mS_{\bQ_2}(E_2) )  = f(\overline \mS_{\bQ}(E) ) \\
& = p(\bQ, E)  \leq p(\bQ_1, E_1)  + p(\bQ_2, E_2)  \\
& =  f(\overline \mS_{\bQ_1}(E_1)) +  f(\overline \mS_{\bQ_2}(E_2)) \\
&  = f(A_1) + f(A_2)
\end{align*}
This concludes the equivalence proof.
\end{proof}

Observe that if a function $f$ is monotone and subadditive over $2^\mI$, it will also be monotone and subadditive over every semilattice $\mS_D^\mL$. Hence, as a corollary we can describe a general family of arbitrage-free pricing functions.

\begin{corollary} \label{cor:arbitrage}
Let $f$ be a monotone and subadditive set function $f$. Then, the function $p(\bQ, E) = f(\overline \mS_{\bQ}(E))$ is an answer-dependent pricing function that is arbitrage-free.
\end{corollary}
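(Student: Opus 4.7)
The plan is to observe that this corollary is essentially a straightforward consequence of the two characterization theorems just proved, with the only subtle point being a domain-restriction argument for the semilattices $\mS_D^\mL$. First I would note that $p(\bQ, E) = f(\dagr{\bQ}{E})$ is by construction answer-dependent, since its value is determined entirely by the query bundle $\bQ$ and the output $E$, with no direct dependence on $D$. So the only thing to verify is that $p$ satisfies both arbitrage-freeness conditions.

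Next I would address the domain issue. Theorems \ref{thm:arbitrage-free:1} and \ref{thm:arbitrage-free:2} require monotonicity and subadditivity of $f$ on every semilattice $\mS_D^\mL$, not on all of $2^\mI$. However, each semilattice $\mS_D^\mL$ is a subfamily of $2^\mI$, closed under the set-union join operation (by \autoref{lem:bundle:char}) and ordered by $\subseteq$. Hence, if $f$ is monotone with respect to $\subseteq$ on the full powerset $2^\mI$, then in particular $A \subseteq B$ implies $f(A) \le f(B)$ whenever $A,B \in \mS_D^\mL$, so $f$ restricted to $\mS_D^\mL$ is monotone. Similarly, subadditivity of $f$ on $2^\mI$ means $f(A \cup B) \le f(A) + f(B)$ for all $A,B \subseteq \mI$; applying this to $A,B \in \mS_D^\mL$ and noting that $A \cup B \in \mS_D^\mL$ gives subadditivity on the semilattice.

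With this restriction step in hand, applying \autoref{thm:arbitrage-free:1} to $f$ immediately yields that $p$ has no information arbitrage, and applying \autoref{thm:arbitrage-free:2} yields that $p$ has no bundle arbitrage. Combining these two conclusions gives that $p$ is arbitrage-free, as desired.

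There is no real obstacle here; the only thing one has to be careful about is not conflating the global monotonicity/subadditivity of $f$ on $2^\mI$ with the ``per-semilattice'' version demanded by the theorems, and making explicit that the former implies the latter by restriction. The proof is therefore essentially a two-line application of the preceding theorems, and I would write it as such.
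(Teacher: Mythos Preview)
Your proposal is correct and follows exactly the paper's own reasoning: the paper observes just before stating the corollary that monotonicity and subadditivity of $f$ over $2^\mI$ trivially restrict to every semilattice $\mS_D^\mL$, and then invokes Theorems~\ref{thm:arbitrage-free:1} and~\ref{thm:arbitrage-free:2}. Your write-up is, if anything, slightly more explicit about the restriction step than the paper itself.
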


\subsection{Explicit Constructions of Pricing Functions}

We have so far described a general class of functions that are both information and bundle arbitrage-free. Since any submodular function is also subadditive, any monotone submodular set function $f$ will also produce a desired pricing function. We give some concrete examples of arbitrage-free pricing functions below.

\begin{corollary}
Suppose that we assign a weight of $w_D$ to each $D \in \mathcal{I}$, such that  $\sum_{D \in \mathcal{I}} w_D < \infty$. Then, the following pricing functions are arbitrage-free:
\begin{packed_enum}
\item the weighted coverage function: $\sum_{D: \bQ(D) \neq E} w_{D}$.
\item the supremum function: $ \sup_{D: \bQ(D) \neq E} w_D$.\footnote{The supremum becomes equivalent to the $\max$ function if $\mI$ is finite.}
\item the budget-limited weighted coverage function for some  $B \geq 0$:
$\min \{ B, \sum_{D: \bQ(D) \neq E} w_D \}$.
\end{packed_enum}
\end{corollary}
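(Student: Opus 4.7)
The plan is to invoke Corollary~\ref{cor:arbitrage}: it suffices to show that for each of the three choices, the underlying set function $f:2^{\mathcal{I}}\to\mathbb{R}_+$ (so that $p(\bQ,E)=f(\overline{\mS}_{\bQ}(E))$) is both monotone and subadditive on the full powerset $2^{\mathcal{I}}$. The finiteness assumption $\sum_{D\in\mathcal{I}} w_D<\infty$ guarantees that each $f(S)$ is a well-defined nonnegative real. I will implicitly assume $w_D\ge 0$, which is what makes the sum meaningful.

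For the weighted coverage $f_1(S)=\sum_{D\in S}w_D$, monotonicity is immediate because all weights are nonnegative, and subadditivity follows from $f_1(A\cup B)=\sum_{D\in A\cup B}w_D\le \sum_{D\in A}w_D+\sum_{D\in B}w_D=f_1(A)+f_1(B)$, where the inequality just drops the double-counted terms on $A\cap B$. For the supremum function $f_2(S)=\sup_{D\in S}w_D$, monotonicity is clear since enlarging $S$ can only increase a supremum, and subadditivity holds because $f_2(A\cup B)=\max(f_2(A),f_2(B))\le f_2(A)+f_2(B)$ as all weights are nonnegative.

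The only case that requires a brief argument is the budget-limited weighted coverage $f_3(S)=\min\{B,\sum_{D\in S}w_D\}$. Monotonicity is inherited from $f_1$, since $x\mapsto\min(B,x)$ is nondecreasing. For subadditivity, set $x=\sum_{D\in A}w_D$ and $y=\sum_{D\in B}w_D$, and use $\sum_{D\in A\cup B}w_D\le x+y$ together with monotonicity of $\min(B,\cdot)$ to reduce the claim to $\min(B,x+y)\le\min(B,x)+\min(B,y)$ for $x,y,B\ge 0$. This inequality is handled by a two-line case split: if $\min(B,x)+\min(B,y)\ge B$, the left-hand side is at most $B$ and we are done; otherwise both $\min(B,x)=x$ and $\min(B,y)=y$, so the right-hand side equals $x+y$, which dominates $\min(B,x+y)$.

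There is no real obstacle here — the main point is simply to observe that Corollary~\ref{cor:arbitrage} reduces the entire claim to checking two elementary lattice-theoretic properties, and that the only mildly non-trivial verification is the truncation inequality $\min(B,x+y)\le \min(B,x)+\min(B,y)$ used for $f_3$.
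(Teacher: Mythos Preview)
Your proposal is correct and matches the paper's intended approach: the paper states this corollary without proof, immediately after Corollary~\ref{cor:arbitrage}, as a direct consequence of the fact that each of the three set functions is monotone and subadditive (the preceding sentence even notes that monotone submodular functions suffice). Your explicit verification of monotonicity and subadditivity for each $f_i$, including the small case split for the truncation $\min(B,\cdot)$, fills in exactly the routine details the paper omits.
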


We can construct richer pricing functions by combining the weighted coverage function
with a concave function $g$. Indeed, we can show that  $p(\bQ, E) = g(\sum_{D \in \dagr{\bQ}{E}} w_D)$ is arbitrage-free for any concave function $g$. If $\mI$ is finite, we can assign to each database $D \in \mathcal{I}$ an equal weight, in which case we obtain the arbitrage-free function
$p(\bQ, E) = g(|\dagr{\bQ}{E}|)$.





\begin{corollary} \label{corollary:concave}
Suppose that we assign a weight of $w_D$ to each $D \in \mathcal{I}$, such that  $\sum_{D \in \mathcal{I}} w_D < \infty$.  Then, the pricing function $p(\bQ, E) = g(\sum_{D \in \dagr{\bQ}{E}} w_D)$ is arbitrage-free for any concave function $g$.
\end{corollary}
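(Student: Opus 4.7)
The plan is to reduce this corollary to the conditions of Corollary~\ref{cor:arbitrage}: it suffices to show that the set function $f(S) = g\bigl(\sum_{D \in S} w_D\bigr)$, defined for $S \subseteq \mathcal{I}$, is monotone and subadditive. Once that is established, the result follows immediately by taking $S = \dagr{\bQ}{E}$. I would implicitly assume $g$ is nondecreasing on $[0,\infty)$ and satisfies $g(0) \ge 0$, which are the natural regularity conditions that make the composition behave like a price (these are essentially needed for the weighted coverage example from the previous corollary to be a special case).

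First, I would introduce the auxiliary quantity $W(S) = \sum_{D \in S} w_D$, which is well-defined and finite because of the summability assumption $\sum_{D \in \mathcal{I}} w_D < \infty$ (and $w_D \ge 0$). The set function $W$ is trivially monotone under $\subseteq$, and it is additive: $W(S_1 \cup S_2) = W(S_1) + W(S_2) - W(S_1 \cap S_2) \le W(S_1) + W(S_2)$. Monotonicity of $f$ then follows by composition: if $S \subseteq T$, then $W(S) \le W(T)$, so $f(S) = g(W(S)) \le g(W(T)) = f(T)$ by monotonicity of $g$.

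The main step is the subadditivity of $f$. Here I would invoke the standard fact that a concave function $g:[0,\infty) \to \mathbb{R}$ with $g(0) \ge 0$ is subadditive: for any $a,b \ge 0$ not both zero, writing $a = \tfrac{a}{a+b}(a+b) + \tfrac{b}{a+b} \cdot 0$ and applying concavity yields $g(a) \ge \tfrac{a}{a+b} g(a+b) + \tfrac{b}{a+b} g(0) \ge \tfrac{a}{a+b} g(a+b)$, and symmetrically for $g(b)$; summing gives $g(a) + g(b) \ge g(a+b)$. Applying this with $a = W(S_1 \setminus S_2)$ and $b = W(S_2)$, together with the fact that $W(S_1 \cup S_2) \le W(S_1) + W(S_2)$ and monotonicity of $g$, gives $f(S_1 \cup S_2) = g(W(S_1 \cup S_2)) \le g(W(S_1) + W(S_2)) \le g(W(S_1)) + g(W(S_2)) = f(S_1) + f(S_2)$.

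Having shown $f$ is monotone and subadditive as a set function over $2^{\mathcal{I}}$, it is in particular monotone and subadditive over each sub-join-semilattice $\mS_D^\mL$, since these properties restrict to any sub-semilattice closed under union (which $\mS_D^\mL$ is, by Lemma~\ref{lem:bundle:char}). Corollary~\ref{cor:arbitrage} then implies that $p(\bQ, E) = f(\dagr{\bQ}{E}) = g\bigl(\sum_{D \in \dagr{\bQ}{E}} w_D\bigr)$ is arbitrage-free. The main technical obstacle is the concave-implies-subadditive step, which is the only place the structural hypothesis on $g$ is used in a nontrivial way; everything else is bookkeeping on monotonicity and reduction to Corollary~\ref{cor:arbitrage}.
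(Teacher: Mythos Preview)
Your proof is correct and in fact more self-contained than the paper's. The paper argues in one line by invoking two standard facts: that the composition of a concave function with a modular set function is submodular, and (stated just before the corollary) that any monotone nonnegative submodular function is subadditive. You instead bypass submodularity entirely and prove subadditivity of $f = g \circ W$ directly, by first showing that a nondecreasing concave $g$ with $g(0) \ge 0$ is subadditive on $[0,\infty)$ and then pushing this through the monotone additive map $W$. Your route is more elementary and makes explicit the regularity conditions on $g$ (nondecreasing, $g(0) \ge 0$) that the paper's phrasing ``any concave $g$'' glosses over; the paper's route gives the slightly stronger intermediate conclusion of submodularity, which is not needed here.

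One minor writing slip: you announce the choice $a = W(S_1 \setminus S_2)$, $b = W(S_2)$, but the displayed chain that follows actually uses $a = W(S_1)$, $b = W(S_2)$. The latter is exactly what is needed and the argument is sound; just drop the stray set-difference decomposition.
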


\begin{proof}
We know that if $f(A)$ is a modular set function and $g$ is concave, then $g(f(A))$ is a submodular function. Notice that $f(A) = \sum_{i \in A} w_i$ is a modular function for any
choice of weights $w_i$.
\end{proof}

The pricing functions we have presented thus far are constructed by assigning a weight to each database in $\mI$. Another type of construction starts by specifying a family $\mathcal{F}$ of subsets of $\mI$. For each subset $S \in \mathcal{F}$, we assign a weight $w_S$. Finally, we pick some real number $B \geq \max_{S \in \mathcal{F}} w_S$. We define the  {\em weighted set cover} function $f(A)$ as the cost of the minimum set cover for $A$ if such a set exists, otherwise $f(A) = B$.

\begin{lemma}\label{lem:aps:weighted}
The weighted set cover pricing function is arbitrage-free.
\end{lemma}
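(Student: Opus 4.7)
The plan is to reduce everything to Corollary~\ref{cor:arbitrage}: once I verify that the weighted set cover function $f$ is monotone and subadditive on $2^\mI$, the pricing function $p(\bQ,E) = f(\dagr{\bQ}{E})$ is automatically both information- and bundle-arbitrage free, so the entire task reduces to checking these two properties for $f$.

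For monotonicity, I would take $A \subseteq A'$ and use the simple observation that any cover $\mathcal{C}\subseteq\mathcal{F}$ of $A'$ is also a cover of $A$, because $A \subseteq A' \subseteq \bigcup\mathcal{C}$. Hence, whenever $A'$ admits a cover, so does $A$, and the minimum-cost cover of $A$ costs no more than the one witnessing $f(A')$, yielding $f(A) \leq f(A')$. When $A'$ admits no cover we have $f(A')=B$, and here the argument relies on reading $B$ as an upper cap dominating any minimum-cover cost we could assign to $A$; this is exactly the budget-cap role played by $B$ in the budget-limited weighted coverage function defined just above the statement.

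For subadditivity, given $A_1, A_2 \subseteq \mI$ with minimum covers $\mathcal{C}_1, \mathcal{C}_2 \subseteq \mathcal{F}$, the union $\mathcal{C}_1\cup\mathcal{C}_2$ covers $A_1\cup A_2$. Since weights are nonnegative,
$$f(A_1\cup A_2) \;\leq\; \sum_{S\in\mathcal{C}_1\cup\mathcal{C}_2} w_S \;\leq\; \sum_{S\in\mathcal{C}_1} w_S + \sum_{S\in\mathcal{C}_2} w_S \;=\; f(A_1)+f(A_2).$$
If instead one of $A_1, A_2$ admits no cover, then neither does $A_1\cup A_2$, so the left side equals $B$ while the right side already contains at least one $B$ term, so the inequality is immediate.

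The delicate step, and the one where I would spend most care, is the ``no cover exists'' branch in the monotonicity argument: the convention on the default value $B$ must be strong enough that $f(A) \leq B$ holds even in the pathological case where $A'$ is uncoverable while $A$ is coverable. Once this convention is fixed, monotonicity and subadditivity of $f$ hold on $2^\mI$, hence on every conflict-set join-semilattice $\mS_D^\mL \subseteq 2^\mI$, and Corollary~\ref{cor:arbitrage} directly gives that the weighted set cover pricing function is arbitrage-free.
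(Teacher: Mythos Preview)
Your proposal is correct and follows essentially the same route as the paper: reduce to Corollary~\ref{cor:arbitrage}, then verify monotonicity (a cover of the larger set is a cover of the smaller) and subadditivity (union of two covers is a cover of the union), handling the uncoverable case via the default value $B$. Your discussion of the ``no cover'' branch is in fact slightly more careful than the paper's, which glosses over why the cap $B$ dominates any achievable minimum-cover cost.
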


\begin{proof}
From~\autoref{cor:arbitrage}, it suffices to show that the set cover function is monotone and subadditive. Indeed, let $A_1 \subseteq A_2$. If $A_1$ is minimally covered by a subset $F \subseteq \mathcal{F}$, this subset also covers $A_2$, so the covering cost for $A_2$ will be at most that of $A_1$. If $A_1$ can not be covered, $A_2$ will also not be covered, so they both have value $B$. 

For subadditivity, let $A = A_1 \cup A_2$. Let $F_1, F_2$ be the minimum covers for $A_1, A_2$ respectively. Then, $F_1 \cup F_2$ is a cover for $A$ with cost at most $f(A_1) + f(A_2)$ (since some sets may overlap). If $A_1$ is not covered, then $f(A_1) +f(A_2) \geq B \geq f(A)$, since $B$ is always greater than the maximum weight.
\end{proof}

The weighted set cover function generalizes the approach from~\cite{KUBHS12}, where explicit prices are specified for certain views, and the price of the query is computed as the cheapest set of views that determine the query.  Indeed, if we are given explicit price points $(\bQ_i, p_i)$ for $i=1, \dots, m$, we can define the following family of sets: $\mathcal{F} = \setof{\dagr{\bQ_i}{\bQ_i(D)}}{i=1, \dots, m}$, where each set $\dagr{\bQ_i}{\bQ_i(D)}$ is assigned a weight of $p_i$. Since $D \vdash \bQ_{i_1}, \dots, \bQ_{i_\ell} \dtr \bQ$ is equivalent to saying that the union of the conflict sets of $\bQ_{i_1}, \dots, \bQ_{i_\ell}$ is a superset of the conflict set of $\bQ$, the minimum set cover for $\dagr{\bQ}{E}$ corresponds to the cheapest set of views that determine $\bQ$ under database $D$.

\subsubsection{Information Gain as a Pricing Function}

A natural mechanism for pricing is to start from a probabilistic point of view and compute the price as the reduction in uncertainty, or {\em information gain}, using some notion of entropy. 

Formally, consider an initial probability distribution over the set $\mathcal{I} $ of possible databases: in other words, assign a probability $p_D$ to each database $D \in \mathcal{I}$. This probability distribution may reflect public information about the database (for example some value might be more probable than some other value). 
Let $X$ be a random variable such that $P(X = D) = p_D$. Given some entropy measure $H(\cdot)$ of a random variable, such as Shannon entropy or min-entropy, we can set the price as the {\em information gain}: the initial entropy $H(X)$ minus the entropy of the new distribution, which is now conditioned on the event $\bQ(X)=E$. Formally, 
%
we define the price as $p(\bQ, E) = H(X) - H(X \mid \bQ(X) = E)$.
%
We can now plug standard uncertainty measures to obtain a pricing function. For example, we can use the Shannon entropy $ H(X) = - \sum_{D \in \mathcal{I}} p_D \log(p_D)$, or the min-entropy $H_\infty(X) = -\log(\max_D p_D)$.

\begin{lemma} \label{lem:counterexample}
There exists a probability distribution $p_D$ over $\mathcal{I}$ such that the answer-dependent entropy function has information-arbitrage.
\end{lemma}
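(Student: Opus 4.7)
The plan is to exhibit an explicit counterexample. The conceptual core is that although the \emph{averaged} conditional entropy satisfies $\sum_y P(Y=y)\,H(X\mid Y=y)\le H(X)$, the conditional entropy $H(X\mid Y=y)$ on a single outcome $y$ is not monotone under refinement of the conditioning event: a more informative query can leave a more balanced (hence higher-entropy) posterior on some outcome than a less informative one. Since $p(\bQ,E)=H(X)-H(X\mid \bQ(X)=E)$, this translates directly into the violation of information arbitrage we are after. I will instantiate this with Shannon entropy; the same recipe goes through for min-entropy with a tweak to the distribution.

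Operationally, I would work inside the running Example~\ref{ex:intro} with $\mathcal{I}=\{D_{00},D_{01},D_{10},D_{11}\}$ and choose two boolean CQs: $\bQ_1 = R(a_1,0)\wedge R(a_2,0)$, which is true exactly on $D_{00}$, and $\bQ_2 = R(a_1,1)$, which is true exactly on $\{D_{10},D_{11}\}$. For the database $D=D_{10}$ we have $\agr{\bQ_2}{\bQ_2(D_{10})}=\{D_{10},D_{11}\}\subsetneq\{D_{01},D_{10},D_{11}\}=\agr{\bQ_1}{\bQ_1(D_{10})}$, so Lemma~\ref{lem:equiv:dtr} gives $D_{10}\vdash\bQ_2\dtr\bQ_1$, and hence any arbitrage-free pricing function must satisfy $p(\bQ_2,E_2)\ge p(\bQ_1,E_1)$.

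The main step is then to choose a prior that makes this inequality fail. I would take the skewed distribution $p_{D_{00}}=0.5$, $p_{D_{01}}=0.4$, $p_{D_{10}}=p_{D_{11}}=0.05$. A short calculation yields $H(X\mid\bQ_1(X)=E_1)=H(0.8,0.1,0.1)\approx 0.922$ while $H(X\mid\bQ_2(X)=E_2)=H(0.5,0.5)=1$, so $p(\bQ_2,E_2)-p(\bQ_1,E_1)=H(X\mid\bQ_1(X)=E_1)-H(X\mid\bQ_2(X)=E_2)<0$, which is precisely the information-arbitrage violation we wanted (and both prices are positive with this choice of $H(X)$, so the example does not rely on a degenerate negative-price artifact).

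The main obstacle is recognizing that a uniform or only mildly skewed prior cannot furnish a counterexample: on a uniform prior, conditioning on any subset yields a uniform conditional with entropy $\log|S|$, so a subset of a superset always has strictly smaller entropy, and the pricing function respects the order. The counterexample only emerges when the prior concentrates its mass on the element(s) of the coarser agreement set $\{D_{01},D_{10},D_{11}\}$ that lie \emph{outside} the finer agreement set $\{D_{10},D_{11}\}$ — here, the database $D_{01}$ — so that the coarse conditional is sharply peaked (entropy below $1$) while the fine conditional is perfectly balanced (entropy exactly $1$). Once this asymmetry is in place, the rest of the argument is a direct substitution into the definitions.
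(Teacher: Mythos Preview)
Your proposal is correct and rests on the same idea as the paper's proof: choose a prior that concentrates mass on an element of the larger agreement set that is excluded from the smaller one, so that the more informative query leaves a \emph{more balanced} (higher-entropy) posterior on the relevant outcome, forcing $p(\bQ_2,E_2)<p(\bQ_1,E_1)$ despite $D\vdash\bQ_2\dtr\bQ_1$.

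The only noteworthy difference is packaging. You instantiate the construction concretely inside the running Example~\ref{ex:intro} with four databases and two explicit boolean CQs, then verify the inequality numerically. The paper instead builds a parametric family: agreement sets $B\subsetneq A$ with $A\setminus B=\{D_0\}$, $|A|=m$, probabilities $p_{D_0}=1-m\epsilon$ and $p_D=\epsilon$ on $B$, and shows $p(\bQ_A,E)-p(\bQ_B,E)=\tfrac12\log m-1$, which is positive for $m$ large. Your version is self-contained and ties neatly to the paper's running example; the paper's version additionally shows the arbitrage gap can be made arbitrarily large. For an existence lemma either suffices.
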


\begin{proof}
Consider two sets $B \subseteq A \subseteq \mathcal{I}$, such that $A \setminus B = \{D_0\}$. Assume that the probabilities are set as follows: for every $D \in B$ we have $p_D = \epsilon$, and $p_{D_0} = 1 - m\epsilon$, where $m = |A|$. Define now two queries $\bQ_{A}$ and $\bQ_{B}$ such that $\mS_{\bQ_{A}}(E) = A$ and $\mS_{\bQ_{B}}(E) = B$. In this case, we have:
\begin{align*}
p(\bQ_{B}, E) & = H(D) + \sum_{i=1}^m \frac{1}{m} \log(1/m)  = H(D) - \log(m) \\
p(\bQ_{A}, E) & = H(D) + m \epsilon \log(\epsilon) + (1-m\epsilon)\log(1-m\epsilon)
\end{align*}
Further, $0 < m\epsilon < 1$. To create a counterexample, we choose $m\epsilon = \frac{1}{2}$, and now we have:
\begin{align*}
p(\bQ_{A}, E) & - p(\bQ_{B}, E) = \\
 & =m \epsilon \log(\epsilon) + (1-m\epsilon)\log(1-m\epsilon) + \log(m) \\
 & = \frac{1}{2} \log(\epsilon) - \frac{1}{2} + \log(m) = \frac{1}{2} \log(m)-1
\end{align*}
By picking $m$ large enough, we can make this quantity strictly positive, hence violating the information arbitrage condition.
\end{proof}

The intuition in the above proof is the following: the result for query $\bQ_A$ will have a somewhat small entropy, because $D_0$ is much more probable than the other databases. However, by asking $\bQ_B$ we learn that $D_0$ cannot be the actual database, and now the probability is equally distributed among the rest of the candidates; hence, the entropy grows! 

The information gain, even though it seems a natural candidate, is not a well-behaved pricing function for \APS, since it exhibits both information and bundle arbitrage (see Lemma~\ref{lem:counterexample} for such an example of information arbitrage). As we will see in Section~\ref{sec:framework_instance_independent} though, we can use information gain to construct arbitrage-free functions for \QPS.
In the case where the probabilities $p_D$ are all equal, the information gain based on Shannon entropy has no information arbitrage (but can still exhibit bundle arbitrage).

\begin{lemma} \label{arbitrage:uniform}
If the probability distribution $p_D$ over $\mathcal{I}$ is uniform, the information gain based on Shannon entropy has no information arbitrage.
\end{lemma}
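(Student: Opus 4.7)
The plan is to reduce the claim to \autoref{thm:arbitrage-free:1}: it suffices to express the information gain, under a uniform prior, as $f(\dagr{\bQ}{E})$ for some set function $f$ that is monotone with respect to subset inclusion. Once this is done, monotonicity on every semilattice $\mS_D^{\mL}$ is automatic, and information arbitrage follows immediately from the theorem.

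First, I would compute the conditional entropy explicitly. Since $p_D = 1/|\mathcal{I}|$ for every $D \in \mathcal{I}$ (implicitly restricting to finite $\mathcal{I}$, since only then is the uniform prior well-defined), conditioning on the event $\bQ(X) = E$ yields the uniform distribution over the agreement set $\agr{\bQ}{E}$. Hence
\[
H(X \mid \bQ(X) = E) \;=\; \log |\agr{\bQ}{E}|, \qquad H(X) \;=\; \log |\mathcal{I}|,
\]
and using $|\agr{\bQ}{E}| = |\mathcal{I}| - |\dagr{\bQ}{E}|$ the price becomes
\[
p(\bQ, E) \;=\; \log |\mathcal{I}| \;-\; \log\bigl(|\mathcal{I}| - |\dagr{\bQ}{E}|\bigr).
\]

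Next, I would define $f : 2^{\mathcal{I}} \setminus \{\mathcal{I}\} \to \mathbb{R}_+$ by $f(A) = \log|\mathcal{I}| - \log(|\mathcal{I}| - |A|)$, so that $p(\bQ, E) = f(\dagr{\bQ}{E})$ as required by the form in \autoref{thm:arbitrage-free:1}. If $A \subseteq B$, then $|A| \leq |B|$, hence $|\mathcal{I}| - |A| \geq |\mathcal{I}| - |B| > 0$, and monotonicity of $\log$ gives $f(A) \leq f(B)$. Therefore $f$ is monotone over $2^{\mathcal{I}} \setminus \{\mathcal{I}\}$, and in particular over every conflict-set semilattice $\mS_D^{\mL}$. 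Applying \autoref{thm:arbitrage-free:1} yields that $p$ has no information arbitrage, completing the proof.

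The argument is essentially a one-line calculation followed by an appeal to the characterization theorem, so no real obstacle arises; the only subtlety worth flagging is that the uniform prior forces $\mathcal{I}$ to be finite, and the monotonicity argument relies on the log being a strictly increasing function on the positive reals.
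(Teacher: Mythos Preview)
Your proposal is correct and follows essentially the same approach as the paper: both compute the price under the uniform prior as $\log|\mathcal{I}| - \log(|\mathcal{I}| - |\dagr{\bQ}{E}|)$ and observe that this is a monotone set function of the conflict set. You spell out the appeal to \autoref{thm:arbitrage-free:1} explicitly, whereas the paper leaves that step implicit, but the substance is identical.
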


\begin{proof}
Let $n = |\mI|$. Then, the pricing function can be written as $ p(\bQ,E) = \log(n) - \log(|\mS_{\bQ}(E)|) = \log \left( \frac{n }{n - |\dagr{\bQ}{E}|} \right)$, which is a monotone set function on  $\dagr{\bQ}{E}$.
\end{proof}

\subsection{A Tradeoff for Arbitrage-Free \APS}


\begin{example}
Continuing Example~\ref{ex:intro}, consider the query $Q(x) = R(a,x)$ and the pricing function
$p_2(\bQ, E) = \log(|\dagr{\bQ}{E}|)$. Notice that, independent of the actual database $D$, the conflict set has always size $2^{n-1}$. In this case, $p_2(Q,E) = n-1$. Notice that the price  for learning the whole database is $\log(2^n-1)$, which means that for learning a single tuple we pay almost as much as the whole database.
\end{example}

We will show here that the above example is not a random occurrence, and that the requirement that a pricing function has no bundle arbitrage gives rise to the phenomenon of assigning high prices (w.r.t. to the price of the whole dataset) to queries that reveal only a small amount of information.

\begin{lemma} \label{lem:tradeoff:1}
Let $p(\bQ, E) = f(\dagr{\bQ}{E})$ be an answer-dependent pricing function where $f$ is monotone and subadditive over $2^{\mI}$. Then, for every non-constant query $\bQ \in B(\mL)$ there exists a database $D \in \mI$ such that $p(\bQ, \bQ(D))$ is at least half the price of  $D$.
\end{lemma}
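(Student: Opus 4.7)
The plan is to exploit the two-piece decomposition of $\mI$ induced by a non-constant query, together with the subadditivity of $f$. Since $\bQ$ is non-constant, there exist two databases $D_1, D_2 \in \mI$ with $E_1 = \bQ(D_1) \neq \bQ(D_2) = E_2$. Interpreting the ``price of $D$'' as $f(\mI \setminus \{D\})$ (the cost of a query whose answer pins down $D$ entirely), assume without loss of generality that $f(\mI \setminus \{D_1\}) \geq f(\mI \setminus \{D_2\})$.

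The key observation is that $\dagr{\bQ}{E_1}$ and $\dagr{\bQ}{E_2}$ together cover all of $\mI$: for any $D' \in \mI$, the value $\bQ(D')$ can equal at most one of $E_1, E_2$, so $D'$ must lie in at least one of the two conflict sets. Hence $\dagr{\bQ}{E_1} \cup \dagr{\bQ}{E_2} = \mI$, and in particular this union contains $\mI \setminus \{D_1\}$.

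Combining monotonicity and subadditivity of $f$ over $2^{\mI}$ then yields
\[
f(\mI \setminus \{D_1\}) \;\leq\; f\bigl(\dagr{\bQ}{E_1} \cup \dagr{\bQ}{E_2}\bigr) \;\leq\; f(\dagr{\bQ}{E_1}) + f(\dagr{\bQ}{E_2}).
\]
By pigeonhole, at least one of the two summands is at least $\tfrac{1}{2}f(\mI \setminus \{D_1\})$. If the first one is, take $D = D_1$ and observe that $p(\bQ, E_1) = f(\dagr{\bQ}{E_1}) \geq \tfrac{1}{2} f(\mI \setminus \{D_1\})$. Otherwise, take $D = D_2$ and invoke the WLOG ordering: $p(\bQ, E_2) = f(\dagr{\bQ}{E_2}) \geq \tfrac{1}{2} f(\mI \setminus \{D_1\}) \geq \tfrac{1}{2} f(\mI \setminus \{D_2\})$.

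I expect the step that requires the most care is recognizing which pair of conflict sets to apply subadditivity to. Once you notice that any two distinct answer-values automatically cover $\mI$, the rest is a textbook pigeonhole split; the WLOG ordering of the two ``full-reveal'' prices is precisely what is needed to route the second case of the pigeonhole back to a database $D$ for which the desired inequality is meaningful.
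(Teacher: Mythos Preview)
Your proof is correct and follows essentially the same route as the paper's: pick two databases with distinct outputs, observe that their two conflict sets cover $\mI$, apply monotonicity and subadditivity of $f$ to bound the full-reveal price by the sum of the two query prices, and finish with a pigeonhole split. The only cosmetic difference is that the paper derives the symmetric pair of inequalities $p(\bQ,E_1)+p(\bQ,E_0)\geq r_0$ and $\geq r_1$ and then splits, whereas you impose a WLOG ordering up front; both organize the same inequality.
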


\begin{proof}
Consider a query bundle $\bQ$. Since $\bQ$ is not constant, we can find two databases $D_0$ and $D_1$ such that $\bQ(D_0) \neq \bQ(D_1)$. Let $\mathcal{I}_0 = \overline \mS_{\bQ}(E_0)$ and $\mathcal{I}_1 = \overline \mS_{\bQ}(E_1)$ denote the conflict set for the query $\bQ$ for the cases when $D_0$ and $D_1$ is the actual database respectively. Note that $\overline \mS_{\bQ}(E_0) = \{D' | \bQ(D_0) \neq \bQ(D')\}$ and $\overline \mS_{\bQ}(E_1) = \{D'' | \bQ(D_1) \neq \bQ(D'')\}$. Clearly, $D_0, D_1 \notin \mathcal{I}_0 \cap \mathcal{I}_1$. For every other database $D'$, we note that it belongs to $\mathcal{I}_0 \cup \mathcal{I}_1 = \mathcal{I}$.

Let $p(\bQ,E)$ denote the pricing function. Since $p$ is arbitrage-free, it can be written in the form $p(\bQ, E) = f(\dagr{\bQ}{E})$, where $f$ is subadditive and monotone. Then we have:
%
\begin{align*}
p(\bQ, E_1) + p(\bQ, E_0) & = f(\mI_1) + f(\mI_0) 
 \geq  f(\mI_1 \setminus \{ D_1\}) + f(\mI_0) \nonumber 
 \geq f(\mI - \{ D_1\})\nonumber
\end{align*}
where the first inequality comes from monotonicity, and the second from subadditivity.  We should note here that the requirement that $f$ is monotone and subadditive over all possible subsets of $\mI$ is crucial for the proof. The value $f(\mI - \{ D_1\})$ is equal to the price of the database $D_1$, which we denote by $r_1$. We can similarly show that $p(\bQ, E_1) + p(\bQ, E_0) \geq f(\mI - \{ D_0\}) = r_0$. Summing the two inequalities, we obtain $p(\bQ, E_1) + p(\bQ, E_0) \geq (r_0+r_1)/2$. This implies in turn that either $p(\bQ, E_0) \geq r_0/2$, or $p(\bQ, E_1) \geq r_1/2$. In other words, for either $D_0$ or $D_1$, the price of query $\bQ$ is at least half the price of the whole database, independent of the number of tuples in the database.
\end{proof}



To see that the bundle-arbitrage requirement cause the problem, consider the function $p(\bQ,E) = \log(|\mI|) - \log(|\agr{\bQ}{E}|)$, for which we showed that it exhibits no information arbitrage, but can still have bundle arbitrage. Continuing our example, we can see that $p(Q,E) = \log(2^n) - \log(2^{n-1}) = 1$; thus, learning about one of the $n$ tuples is priced reasonably to $1/n$ of the price of the whole database. Our analysis demonstrates an important tradeoff in the design space of answer-dependent pricing functions: {\em ensuring no bundle arbitrage implies that the pricing function will charge disproportionately high prices for little information}. 

It is also instructive to note that while Lemma~\ref{lem:tradeoff:1} guarantees that existence of database $D \in \mathcal{I}$ that behaves badly, it does not say anything about the number of such databases. In fact, for our example we can show that for query $Q$ at least half of the databases in $\mI$ will exhibit this undesirable behavior.

\section{Instance-Independent Pricing}
\label{sec:framework_instance_independent}

We study here the structure of {\em instance-independent} pricing schemes. In a \QPS, the pricing function is of the form $p(\bQ)$, depending only on the query.  We first formalize the conditions under which the pricing function has no information arbitrage and no bundle arbitrage. 

\begin{definition}
We say that $\bQ_2$ determines $\bQ_1$, denoted $\bQ_2 \dtr \bQ_1$, if for every database database $D'$ and $D''$, $\bQ_2(D') = \bQ_2(D'')$ implies  $\bQ_1(D') = \bQ_1(D'')$.
\end{definition}

In contrast to answer-dependent pricing functions, where we used a notion of determinacy that depends on the database, here we use the standard notion of {\em information-theoretic determinacy}.\footnote{Here we should note that there exists a slight difference, since the databases we consider can come only from $\mI$, and not be any database.} We can now describe the formal definition for information arbitrage.

\begin{definition}[\QPS\ Information Arbitrage]
The pricing function $p$ has no {\em information arbitrage} if for any two query bundles  
$\bQ_1, \bQ_2$ such that $\bQ_2 \dtr \bQ_1$, we have $ p(\bQ_2) \geq  p(\bQ_1)$.
\end{definition}

\begin{definition}[\QPS\ Bundle arbitrage]
Let the query bundle $\bQ = \bQ_1, \bQ_2$. We say that the pricing function $p$ has no {\em bundle arbitrage} if we have $p(\bQ) \leq p(\bQ_1) + p(\bQ_2)$.
\end{definition}

\subsection{Serendipitous Arbitrage}

Consider two query bundles $\bQ_1$ and $\bQ_2$ such that $\bQ_1 \not \dtr \bQ_2$, but for some $D \in \mI$, $D \vdash \bQ_1 \dtr \bQ_2$. For example, consider the boolean query $Q_1() = R(x,y)$ over the binary relation $R(A,B)$. Let $Q_2(x,y) = R(x,y)$. Clearly, for all databases $D$ other than the empty database, $D \vdash Q_1 \not \dtr Q_2$.  However, for the database $D_0 = \emptyset$, note that $D_0 \vdash Q_1 \dtr Q_2$. In this case, if $p(Q_1) > p(Q_2)$, the data buyer would have an arbitrage opportunity. However, this opportunity would arise by chance, since the buyer does not know the underlying database and thus does not know that asking for $\bQ_2$ can lead to learning $\bQ_1$ for a lower price. We call this phenomenon \emph{serendipitous arbitrage}~\cite{LK14}.
Our definition of \QPS\ information arbitrage does not capture serendipitous arbitrage. The next result demonstrates a second tradeoff in the design space of pricing functions: {\em any non-trivial \QPS\ will exhibit serendipitous arbitrage}. 

\begin{theorem} \label{qps:arbitrage}
Let $\mL = UCQ$. If a \QPS\ exhibits no serendipitous arbitrage, then the price of any non-constant query bundle $\bQ$ is equal to the price of asking for the whole database.
\end{theorem}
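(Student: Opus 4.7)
The plan is to prove both directions of the equality $p(\bQ) = p(\bQ^\star)$, where $\bQ^\star$ is a UCQ bundle that returns the entire database (e.g., for each relation $R_j$ in the schema, include the identity rule that returns all of $R_j$). The easy direction $p(\bQ^\star) \ge p(\bQ)$ follows from information arbitrage, which is implicitly required of any sensible \QPS\ and applies here since $\bQ^\star$ globally determines $\bQ$.

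The substance of the theorem is the reverse inequality $p(\bQ) \ge p(\bQ^\star)$, and my strategy is to exhibit a specific database $D^\star \in \mI$ with $D^\star \vdash \bQ \dtr \bQ^\star$. Equivalently, I look for a $D^\star \in \mI$ whose equivalence class $[D^\star]_\bQ$ is a singleton, so that the single answer $\bQ(D^\star)$ already pins down $D^\star$ among all members of $\mI$. Given any such $D^\star$, the no-serendipitous-arbitrage condition applied to the pair $(\bQ, \bQ^\star)$ forces $p(\bQ) \ge p(\bQ^\star)$, closing the argument. This mirrors precisely the mechanism in the running example immediately preceding the theorem: the boolean query ``is $R$ nonempty?'' uniquely identifies the empty database, and is therefore forced to be priced at least as expensively as the full-database bundle under no-serendipitous-arbitrage.

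To produce $D^\star$ in general, I would use UCQ monotonicity together with the non-constancy of $\bQ$. Non-constancy yields $D_a, D_b \in \mI$ with $\bQ(D_a) \ne \bQ(D_b)$; walking between them one tuple at a time and invoking monotonicity localises the change to a boundary pair $(D_0, D_0 \cup \{t\})$ with $\bQ(D_0) \subsetneq \bQ(D_0 \cup \{t\})$. Starting from this boundary behaviour, $D^\star$ can be inflated to an extremal instance whose $\bQ$-image is realised by no other member of $\mI$, using the characterisation of CQ answers via minimal homomorphic witnesses so that sufficiently rich structure in $D^\star$ forces uniqueness.

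The main obstacle is making this existence argument watertight: one must rule out the possibility that every equivalence class of $\bQ$ on $\mI$ contains at least two members. This is where the specific structure of \textrm{UCQ} must be used non-trivially, and a careful combinatorial or inductive argument on the syntactic form of $\bQ$ is likely unavoidable. If a direct construction of $D^\star$ proves elusive for some $\bQ$, a fallback is to build a chain $\bQ = \bQ_0, \bQ_1, \dots, \bQ_k$ of UCQ bundles with a witness $D_i \vdash \bQ_i \dtr \bQ_{i+1}$ at each step, iteratively refining equivalence classes until $\bQ_k \equiv \bQ^\star$, and then composing the no-serendipitous-arbitrage inequalities along the chain.
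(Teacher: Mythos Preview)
Your primary strategy has a genuine gap: it is \emph{not} true that every non-constant UCQ bundle $\bQ$ admits a database $D^\star$ with $[D^\star]_\bQ$ a singleton. Take the schema with a single unary relation $R(A)$ over domain $\{1,2,3\}$, so that $\mI = 2^{\{1,2,3\}}$, and let $\bQ$ be the boolean CQ $\bQ() \,{:}{-}\, R(1)$. This query is non-constant, yet its two equivalence classes are $\{D : 1 \in D\}$ and $\{D : 1 \notin D\}$, each of size four. There is no $D^\star$ with $D^\star \vdash \bQ \dtr \bQ^\star$, so the direct route from $\bQ$ to $\bQ^\star$ is blocked. The monotonicity/boundary argument you sketch cannot repair this, since the issue is not about locating where $\bQ$ changes but about whether $\bQ$ can ever isolate a single instance---and for queries that ignore part of the schema it simply cannot.

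The paper sidesteps this entirely by introducing an \emph{intermediate} bundle $\bQ^0$, the UCQ that tests whether the database is empty. The point is that $\bQ^0$ is the one fixed query guaranteed to have a singleton class, namely $\{D_0\}$ with $D_0 = \emptyset$; hence $D_0 \vdash \bQ^0 \dtr \bQ$ for \emph{every} $\bQ$, giving $p(\bQ^0) \geq p(\bQ)$. Conversely, for any non-constant $\bQ$ there is some $D$ with $\bQ(D) \neq \bQ(D_0)$, and on that $D$ every $D'$ with $\bQ(D') = \bQ(D)$ is nonempty, so $D \vdash \bQ \dtr \bQ^0$ and $p(\bQ) \geq p(\bQ^0)$. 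Thus all non-constant bundles---including $\bQ^\star$ itself---share the common price $p(\bQ^0)$. This is exactly a length-two instance of your fallback chain idea, but with the crucial insight that the intermediate node should be the \emph{coarsest} non-trivial partition (empty vs.\ nonempty), not a refinement toward $\bQ^\star$. You were trying to push $\bQ$ up to the top of the partition lattice; the paper instead pulls everything down to the element just above the bottom.
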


\label{proof:ii}
\begin{proof}
To incorporate serendipitous arbitrage in \QPS, the pricing function must be such that if there exists any database $D \in \mI$ such that $D \vdash \bQ_2 \dtr \bQ_1$, then we must have $p(\bQ_2) \geq p(\bQ_1)$.

Suppose that the database schema consists of the relations $R_1, \dots, R_k$. Consider the query bundle $\bQ^1$ that returns the whole database; we can always express this as a bundle of conjunctive queries, where each query $Q^1_i$ returns a relation $R_i$ of the schema (i.e. $Q^1_i(\vec{x}) = R_i(\vec{x})$). Consider also the query bundle $\bQ^0$ that checks whether the database is empty; we can express this as a bundle with a single query that is a union of conjunctive queries  (each query in the union is the boolean query $Q^0_i() = R_i(\vec{x})$).


Observe now that for the empty database $D_0$, $D_0 \vdash \bQ^0 \dtr \bQ$ for every query bundle $\bQ$. Indeed, since $\bQ^0(D_0) = False$ the data buyer knows that the database is empty and thus can determine the answer for any query bundle $\bQ$. In this case, because of the serendipitous arbitrage, we have to enforce that
$p(\bQ^0) \geq p(\bQ)$ for every query bundle $\bQ$. 

Next consider any query bundle $\bQ$ that is not constant. Then, there must exist a database $D \in \mI$ such that for some query $Q \in \bQ$, $Q(D) \neq Q(D_0)$. But in this case the data buyer knows that $D \neq D_0$, and thus can determine that $\bQ^0(D) = True$. Thus, $D \vdash \bQ \dtr \bQ^0$, and because of serendipitous arbitrage we must have $p(\bQ) \geq p(\bQ^0)$.

We have just shown that for every query bundle $\bQ$ that is not-constant, $p(\bQ) = p(\bQ^0)$. This implies that if we require that serendipitous arbitrage does not exist, every query bundle must have exactly the same price, and in particular the price of the whole database, which is equal to $p(\bQ^1)$.
%
\end{proof}

\subsection{How to Find a Pricing Function}

To characterize the structure of instance-independent pricing functions, we exploit the fact that we can equivalently view a query as a {\em partition} of the set of possible databases $\mI$. 

\subsubsection{The Partition Lattice}

Fix some query language $\mathcal{L}$. Recall that for a query bundle $\bQ \in B(\mathcal{L})$, $\mP_{\bQ}$ is the partition that is induced by the following {\em equivalence relation}: $D \sim D'$ iff $\bQ(D) = \bQ(D')$.

\begin{lemma} \label{lem:partition:mon}
Let $\bQ_1, \bQ_2 \in \mL$ be two query bundles. The following are equivalent:

\begin{packed_enum}
\item $\bQ_1 \dtr \bQ_2$
\item $\mP_{\bQ_1} \succeq \mP_{\bQ_2}$, i.e. $\mP_{\bQ_1}$ refines $\mP_{\bQ_2}$
\end{packed_enum}
\end{lemma}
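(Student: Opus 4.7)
The plan is to unpack both sides directly from their definitions and show that each condition is a restatement of the other, using only the fact that the partition $\mP_{\bQ}$ is induced by the equivalence relation $D \sim_\bQ D' \iff \bQ(D) = \bQ(D')$. So the blocks of $\mP_\bQ$ are precisely the equivalence classes $[D]_\bQ$.

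For the direction $(1) \Rightarrow (2)$, I would assume $\bQ_1 \dtr \bQ_2$ and take an arbitrary block $B$ of $\mP_{\bQ_1}$. By the definition of $\mP_{\bQ_1}$, any two $D', D'' \in B$ satisfy $\bQ_1(D') = \bQ_1(D'')$. Applying the determinacy assumption gives $\bQ_2(D') = \bQ_2(D'')$, so every pair of databases in $B$ lies in the same block of $\mP_{\bQ_2}$. Fixing any $D \in B$, this means $B \subseteq [D]_{\bQ_2}$, which is a block of $\mP_{\bQ_2}$. Hence $\mP_{\bQ_1}$ refines $\mP_{\bQ_2}$.

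For the converse direction $(2) \Rightarrow (1)$, I would assume $\mP_{\bQ_1} \succeq \mP_{\bQ_2}$ and take arbitrary $D', D'' \in \mI$ with $\bQ_1(D') = \bQ_1(D'')$. Then $D', D''$ lie in the same block $B$ of $\mP_{\bQ_1}$. By refinement, $B$ is contained in a single block $B'$ of $\mP_{\bQ_2}$, so $D', D'' \in B'$, which by definition of $\mP_{\bQ_2}$ means $\bQ_2(D') = \bQ_2(D'')$. This is exactly the statement $\bQ_1 \dtr \bQ_2$.

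There is no real obstacle here; the lemma is essentially a translation between two equivalent ways of encoding the same information (a function on $\mI$ up to equality of outputs). The only thing to be mildly careful about is to keep the direction straight: a finer partition determines a coarser one, so $\bQ_1 \dtr \bQ_2$ corresponds to $\mP_{\bQ_1}$ being the refining (finer) partition, not the refined one.
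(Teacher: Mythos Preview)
Your proof is correct and follows essentially the same approach as the paper: both directions are proved by unpacking the definitions of determinacy and of the partition induced by a query, showing that a block of $\mP_{\bQ_1}$ sits inside a block of $\mP_{\bQ_2}$ precisely when $\bQ_1$-equivalence implies $\bQ_2$-equivalence. Your closing remark about keeping the direction of refinement straight is apt and matches the paper's convention.
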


\begin{proof}
$1 \implies 2$. Suppose $B_1 \in \mP_{\bQ_1} $. Let $D \in B_1$ and let $B_2$ the unique block in $\mP_{\bQ_2} $ for which $D \in B_2$. We will show that $B_1 \subseteq B_2$. Indeed, consider any other $D' \in B_1$. Then, $\bQ_1(D') = \bQ_1(D)$. Since $\bQ_1 \dtr \bQ_2$, we have $\bQ_2(D') = \bQ_2(D)$	and thus $D' \in B_2$ as well.

$2 \implies 1.$ Let $D', D'' \in \mI$ such that $\bQ_1(D') = \bQ_1(D'')$. Then, $D', D''$ both belong in the same block $B_1 \in \mP_{\bQ_1} $. Since $\mP_{\bQ_1} $ is a refinement of $\mP_{\bQ_2} $, there exists a block $B_2 \in \mP_{\bQ_2} $ such that $B_1 \subseteq B_2$. Thus, $D', D''$ belong in the same block in $\mP_{\bQ_2} $, which implies that $\bQ_2(D') = \bQ_2(D'')$.
\end{proof}

The refinement relation defines a partial order on the set $\Pi_{\mI}^{\mathcal{L}}$ of all partitions of $\mI$ induced by any bundle $\bQ \in B(\mathcal{L})$. An equivalent way to define the partial order is through the {\em distinction set} of a partition
$ dit(\mP) = \bigcup_{B,B' \in \mP: B \neq B'} B \times B'.$
Intuitively, the distinction set contains all pairs of elements that are not in the equivalence relation. It is straightforward to see that
$ dit(\mP_{\bQ}) =  \setof{(D', D'') \in \mI \times \mI}{\bQ(D') \neq \bQ(D'')}$.
Furthermore, $\mP_1 \succeq \mP_2$ if and only if $dit(\mP_1) \supseteq dit(\mP_2)$ and thus one can  use the inclusion of the distinction sets to define a partial order on the partitions.

The partial order induced by $\succeq$ on $\Pi_{\mI}^{\mathcal{L}}$ forms a {\em join-semilattice}. The bottom element of the semilattice is the partition $\{\mI\}$, which corresponds to a query that returns a constant answer. The top element is the partition where each block is a singleton set: this corresponds to a query that informs about the whole database. The {\em join} $\mP_1 \vee \mP_2$  is a new partition whose blocks are the non-empty intersections of any two blocks from $\mP_1, \mP_2$. The lemma below proves that the algebraic structure we defined is indeed a semilattice.

\begin{lemma} \label{lem:partition:sub}
Let $\bQ = \bQ_1, \bQ_2$, where $\bQ_1, \bQ_2 \in B(\mL)$. Then, $\mP_{\bQ} = \mP_{\bQ_1} \vee \mP_{\bQ_2}$.
\end{lemma}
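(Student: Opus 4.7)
The plan is to show equality of partitions by showing their induced equivalence relations coincide, which is the cleanest route since both $\mP_{\bQ}$ and $\mP_{\bQ_1} \vee \mP_{\bQ_2}$ are already described in the paper in terms of which pairs of databases end up in the same block. Two partitions over $\mI$ agree if and only if they induce the same equivalence relation, so it suffices to characterize $D \sim D'$ for each side and match them up.

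First I would unpack the bundle side. By the definition of a query bundle in Section~2, $\bQ(D) = (\bQ_1(D), \bQ_2(D))$ (or, equivalently, the pair of outputs), so $\bQ(D) = \bQ(D')$ holds if and only if $\bQ_1(D) = \bQ_1(D')$ and $\bQ_2(D) = \bQ_2(D')$. Thus the equivalence relation defining $\mP_{\bQ}$ is exactly the intersection of the equivalence relations defining $\mP_{\bQ_1}$ and $\mP_{\bQ_2}$.

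Next I would unpack the join side using the explicit description given just before the lemma: the blocks of $\mP_{\bQ_1} \vee \mP_{\bQ_2}$ are the non-empty intersections $B_1 \cap B_2$ with $B_1 \in \mP_{\bQ_1}$ and $B_2 \in \mP_{\bQ_2}$. Hence $D$ and $D'$ lie in the same block of $\mP_{\bQ_1} \vee \mP_{\bQ_2}$ exactly when they lie in the same block of $\mP_{\bQ_1}$ and in the same block of $\mP_{\bQ_2}$, which by definition of $\mP_{\bQ_i}$ means $\bQ_1(D) = \bQ_1(D')$ and $\bQ_2(D) = \bQ_2(D')$. Matching this with the characterization of $\mP_{\bQ}$ from the previous paragraph gives the equality.

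There is really no hard step here; the only subtlety is being careful that the description of the join as ``non-empty intersections of blocks'' truly coincides with the join in the semilattice ordered by refinement. I would briefly justify this by the universal property: the intersection partition refines both $\mP_{\bQ_1}$ and $\mP_{\bQ_2}$, and any partition refining both must, on each pair of blocks $B_1, B_2$, have its equivalence classes contained in $B_1 \cap B_2$, so the intersection partition is indeed the least upper bound. Alternatively, one can give a self-contained proof using Lemma~\ref{lem:partition:mon}: since $\bQ \dtr \bQ_1$ and $\bQ \dtr \bQ_2$ trivially hold, $\mP_{\bQ}$ refines both $\mP_{\bQ_i}$, so $\mP_{\bQ} \succeq \mP_{\bQ_1} \vee \mP_{\bQ_2}$; for the reverse, any $D \sim D'$ in $\mP_{\bQ_1} \vee \mP_{\bQ_2}$ satisfies both $\bQ_i(D)=\bQ_i(D')$, hence $\bQ(D)=\bQ(D')$, giving $\mP_{\bQ_1} \vee \mP_{\bQ_2} \succeq \mP_{\bQ}$.
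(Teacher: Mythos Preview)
Your proposal is correct and follows essentially the same approach as the paper: both arguments show that two databases lie in the same block of $\mP_{\bQ}$ if and only if they lie in the same block of $\mP_{\bQ_1} \vee \mP_{\bQ_2}$, using that $\bQ(D)=\bQ(D')$ is equivalent to $\bQ_i(D)=\bQ_i(D')$ for $i=1,2$ and the explicit description of the join as non-empty intersections of blocks. The paper's proof is slightly terser (it fixes $D$, shows the two blocks through $D$ coincide, and leaves one inclusion as ``similar''), while you spell out both directions and add an optional justification via the universal property, but the substance is the same.
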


\begin{proof}
For some database $D \in \mI$, let $B$ be the unique block that contains $D$ in $\mP_{\bQ}$, and $T$ the corresponding block in $\mP_{\bQ_1} \vee \mP_{\bQ_2}$. Note that the $\bQ \in B(\mL)$ since both $\bQ_1, \bQ_2 \in B(\mL)$. We now show that $B = T$. Indeed, let $D' \in B$. Then, $\bQ(D) = \bQ(D')$, which implies that $\bQ_i(D) = \bQ_i(D')$ for $i=1,2$. Thus, there exists a set $B_1 \in \mP_{\bQ_1}$ (respectively $B_2 \in \mP_{\bQ_2}$) that contains both $D,D'$. But then $\{D,D'\} \subseteq B_1 \cap B_2 \subseteq T$, so $D' \in T$. The reverse direction is similar.
\end{proof}

We now present an example to illustrate the mechanics of how partition lattice works.

\begin{example} \label{ex:ii}
 Consider the relation $R(\underline{A},B)$ in~\autoref{ex:intro} with $n=2$ tuples. The partition  join-semilattice is depicted in~\autoref{fig:pl}, where we encode the database $D_{ij}$ with its decimal representation (for example, the element $0$ corresponds to the database $D_{00}$).

As before, we consider the query $Q(x) = R(a_1,x)$. It is easy to see that $\mP_{Q} = \{\{D_{00}, D_{01}\}, \{D_{10}, D_{11}\}\}$, which is the element $01\vert 23$ in the lattice. For the query $Q'() = R(x,0)$,  $\mP_{Q'} = \set{ \{D_{00}, D_{01}, D_{10} \}, \{D_{11}\}}$, which is the element $012 \vert 3$. One can see in the lattice that the join of the two partitions is the element $01\vert2\vert3$. The reader can check that the bundle $(Q,Q')$ indeed induces the partition $01\vert2\vert3$.
\end{example} 

\begin{figure}
\centering
\begin{tikzpicture}[scale=0.7, transform shape]
  \node (max) at (0,-8) { $0123$ };
  \node (a) at (-6,-5) { $03\vert12$ };
  \node (b) at (-4,-5) { $0\vert123$ };
  \node (c) at (-2,-5) { $013\vert2$ };
  \node (d) at (0,-5) { $02\vert13$ };  
  \node (e) at (2,-5) { $012\vert3$ };
  \node (f) at (4,-5) { $023\vert1$ };
  \node (g) at (6,-5) { $01\vert23$ };
  
   \node (h) at (-6,-2) { $0\vert12\vert3$ };
  \node (i) at (-4,-2) { $03\vert1\vert2$ };
  \node (j) at (-2,-2) { $0\vert13\vert2$ };  
  \node (k) at (2,-2) { $02\vert1\vert3$ };
  \node (l) at (4,-2) { $01\vert2\vert3$ };
  \node (m) at (6,-2) { $0\vert1\vert23$ };
  \node (min) at (0,1) { $0\vert1\vert2\vert3$ };
  
  \draw (min) -- (h) -- (a) -- (max)
  (max) -- (b)
  (min) -- (i) -- (a)
  (min) -- (j) -- (c) -- (max)  
  (min) -- (k) -- (d) -- (max)
  (min) -- (l) -- (e) -- (max)
  (min) -- (m) -- (f) -- (max)
  (min) -- (m) -- (g) -- (max)
  (h) -- (b) 
  (h) -- (e)
  (i) -- (c)
  (i) -- (f)
  (j) -- (b)
  (j) -- (d)
  (k) -- (e)
  (k) -- (f)
  (l) -- (c)
  (l) -- (g)
  (m) -- (b) ;
\end{tikzpicture}
\caption{The partition join-semilattice for Example~\ref{ex:ii}.} \label{fig:pl}
\end{figure}
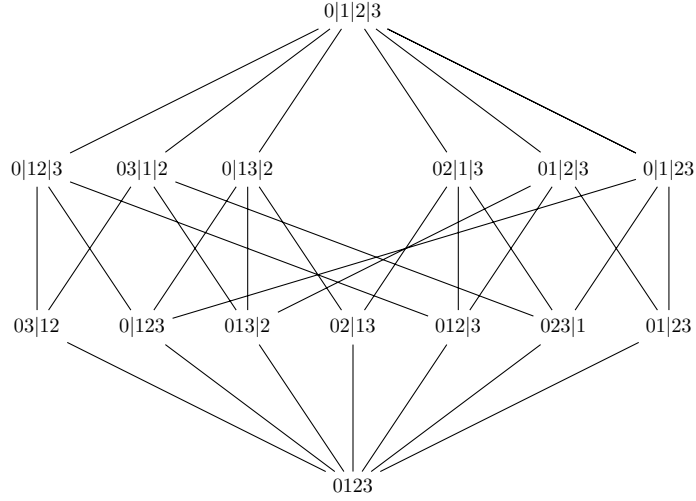

If we define the partial order as the inclusion of distinction sets, $dit(\mP_{\bQ}) = dit(\mP_{\bQ_1} \vee \mP_{\bQ_2} ) = dit(\mP_{\bQ_1}) \cup dit(\mP_{\bQ_2})$, the join operator is simply the union of the distinction sets. 

\begin{lemma} \label{lem:equiv:dtr:ii}
Let $\bQ_1, \bQ_2$ be two query bundles.
The following two statements are equivalent:
\begin{packed_enum}
\item $\bQ_2 \dtr \bQ_1$
\item $dit(\mP_{\bQ_2}) \supseteq dit(\mP_{\bQ_1})$
\end{packed_enum}
\end{lemma}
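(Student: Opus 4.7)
The plan is to prove the equivalence directly by unpacking definitions, since the statement is really just a translation of determinacy into the language of distinction sets. Both directions will be essentially immediate from the definitions, so I do not anticipate a significant obstacle; this lemma is more of a bookkeeping bridge between Lemma~\ref{lem:partition:mon} and the later characterization theorems that will work with $dit$ rather than with refinement.

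For the direction $2 \implies 1$, I would start with two databases $D', D'' \in \mI$ such that $\bQ_2(D') = \bQ_2(D'')$, which by definition of the distinction set means $(D', D'') \notin dit(\mP_{\bQ_2})$. Using the hypothesis $dit(\mP_{\bQ_2}) \supseteq dit(\mP_{\bQ_1})$, this forces $(D', D'') \notin dit(\mP_{\bQ_1})$, so $\bQ_1(D') = \bQ_1(D'')$. This is exactly $\bQ_2 \dtr \bQ_1$.

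For the direction $1 \implies 2$, I would argue by contrapositive on individual pairs. Take any $(D', D'') \in dit(\mP_{\bQ_1})$, meaning $\bQ_1(D') \neq \bQ_1(D'')$. If it were the case that $\bQ_2(D') = \bQ_2(D'')$, then $\bQ_2 \dtr \bQ_1$ would give $\bQ_1(D') = \bQ_1(D'')$, a contradiction. Hence $\bQ_2(D') \neq \bQ_2(D'')$, i.e.\ $(D', D'') \in dit(\mP_{\bQ_2})$, yielding the inclusion.

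Alternatively, the whole proof can be obtained by chaining Lemma~\ref{lem:partition:mon} (which identifies $\bQ_2 \dtr \bQ_1$ with $\mP_{\bQ_2} \succeq \mP_{\bQ_1}$) with the already-noted fact that $\mP_1 \succeq \mP_2 \iff dit(\mP_1) \supseteq dit(\mP_2)$; the direct argument above is essentially an expansion of this chain and is just as short. No nontrivial step arises, and the only thing worth being careful about is keeping the direction of refinement consistent with the direction of the distinction-set inclusion.
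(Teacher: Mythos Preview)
Your proposal is correct and matches the paper's own proof essentially line for line: both directions are handled by unwinding the definitions of determinacy and of $dit(\cdot)$, with the $1 \Rightarrow 2$ direction phrased via a contrapositive on pairs. Your remark that the lemma also follows immediately by chaining Lemma~\ref{lem:partition:mon} with the equivalence $\mP_1 \succeq \mP_2 \iff dit(\mP_1) \supseteq dit(\mP_2)$ is a valid and slightly slicker alternative that the paper does not spell out.
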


\begin{proof}
$1 \Rightarrow 2$. Consider a pair $(D',D'') \notin dit(\mP_{\bQ_2})$. Then, it must be that $\bQ_2(D') = \bQ_2(D'')$. By the definition of information-theoretic determinacy, this implies that $\bQ_1(D') = \bQ_1(D'') $, and thus it will be that $(D',D'') \notin dit(\mP_{\bQ_1})$.  \\

\noindent $2 \Rightarrow 1$. Consider databases $D',D''$ such that $\bQ_2(D') = \bQ_2(D'')$. Then, by definition $(D',D'') \notin dit(\mP_{\bQ_2})$, which implies that $(D',D'') \notin \mS_{\bQ_1}$. But then we have that $\bQ_1(D') = \bQ_1(D'')$. Thus, $\bQ_2 \dtr \bQ_1$. 
\end{proof}

\begin{lemma} \label{lem:bundle:ii}
Let $\bQ = \bQ_1, \bQ_2$ be a query bundle. Then
$ dit(\mP_{\bQ}) =  dit(\mP_{\bQ_1}) \cup dit(\mP_{\bQ_2})$.
\end{lemma}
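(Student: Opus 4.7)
The plan is to unwind the definition of the distinction set and then leverage the characterization of $\bQ = \bQ_1, \bQ_2$ as pairing together the outputs of $\bQ_1$ and $\bQ_2$. By the definition given just before the statement, we have $dit(\mP_{\bQ}) = \{(D',D'') \in \mI \times \mI : \bQ(D') \neq \bQ(D'')\}$, so both sides of the claimed identity live inside the same ambient set $\mI \times \mI$ and I only need to verify elementwise equivalence.

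The core observation is that, since $\bQ$ is the simultaneous asking of $\bQ_1$ and $\bQ_2$, we have $\bQ(D') = \bQ(D'')$ if and only if $\bQ_1(D') = \bQ_1(D'')$ and $\bQ_2(D') = \bQ_2(D'')$. Taking the contrapositive yields that $\bQ(D') \neq \bQ(D'')$ iff $\bQ_1(D') \neq \bQ_1(D'')$ or $\bQ_2(D') \neq \bQ_2(D'')$. Rewriting each disequality in terms of a distinction set immediately gives both inclusions $dit(\mP_{\bQ}) \subseteq dit(\mP_{\bQ_1}) \cup dit(\mP_{\bQ_2})$ and $dit(\mP_{\bQ}) \supseteq dit(\mP_{\bQ_1}) \cup dit(\mP_{\bQ_2})$, finishing the proof.

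As an alternative route (essentially equivalent but more structural), I could invoke \autoref{lem:partition:sub}, which already establishes $\mP_{\bQ} = \mP_{\bQ_1} \vee \mP_{\bQ_2}$, and then prove the auxiliary fact that for any two partitions $\mP, \mP'$ of $\mI$ one has $dit(\mP \vee \mP') = dit(\mP) \cup dit(\mP')$. This auxiliary fact follows because the blocks of $\mP \vee \mP'$ are precisely the nonempty intersections of blocks of $\mP$ with blocks of $\mP'$, so two databases share a block in the join iff they share a block in both partitions, which by contraposition is exactly the union-of-distinction-sets statement.

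I do not anticipate a genuine obstacle here: the lemma is an unfolding of definitions, and the only subtlety is being careful that $\bQ \in B(\mL)$ (so that $\mP_{\bQ}$ is indeed defined), which is immediate from $\bQ_1, \bQ_2 \in B(\mL)$ and the closure of $B(\mL)$ under bundle union as used in \autoref{lem:partition:sub}. I will therefore present the direct unfolding argument, as it is the shortest and most transparent.
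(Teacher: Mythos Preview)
Your proposal is correct. The paper actually states this lemma without proof, treating it as immediate from the preceding remark that the join of partitions corresponds to the union of distinction sets; your direct unfolding of the definition of $dit(\mP_\bQ)$ via the equivalence $\bQ(D') = \bQ(D'') \iff \bQ_1(D') = \bQ_1(D'') \text{ and } \bQ_2(D') = \bQ_2(D'')$ is exactly the argument implicitly intended, and your alternative route through \autoref{lem:partition:sub} is precisely the structural reformulation the paper alludes to in the sentence preceding the lemma.
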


\begin{lemma}
Let $ p(\bQ) = f(dit(\mP_{\bQ}))$. If $f$ is monotone, then $p$ has no information arbitrage.
\end{lemma}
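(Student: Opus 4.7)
The plan is to prove the contrapositive-style implication directly: assume the hypothesis of the information arbitrage definition, namely that $\bQ_2 \dtr \bQ_1$, and derive the conclusion $p(\bQ_2) \geq p(\bQ_1)$ by chaining two facts already available in the excerpt. Since the pricing function has the explicit factored form $p(\bQ) = f(dit(\mP_{\bQ}))$, showing a price inequality reduces to showing an inequality of $f$-values at two particular set arguments, which is exactly what monotonicity of $f$ is designed to provide, once we know the correct containment between those arguments.

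The first step is to translate determinacy into a containment of distinction sets. This is precisely the content of Lemma~\ref{lem:equiv:dtr:ii}: from $\bQ_2 \dtr \bQ_1$ one obtains $dit(\mP_{\bQ_2}) \supseteq dit(\mP_{\bQ_1})$. No new computation is needed here; I will just invoke the lemma. The second step is to apply the monotonicity hypothesis on $f$ with respect to set inclusion on subsets of $\mI \times \mI$, which yields $f(dit(\mP_{\bQ_2})) \geq f(dit(\mP_{\bQ_1}))$. By the definition of $p$, this is the same as $p(\bQ_2) \geq p(\bQ_1)$, which is exactly the condition for no information arbitrage in the \QPS\ setting.

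There is essentially no obstacle in this proof; it is a two-line composition of a previously established equivalence and the monotonicity assumption. The only minor point worth being explicit about is the domain over which $f$ is assumed monotone: since distinction sets live inside $2^{\mI \times \mI}$, the statement of monotonicity should be read as ``$A \subseteq B$ implies $f(A) \leq f(B)$'' for sets $A,B$ that can arise as distinction sets of partitions $\mP_{\bQ}$ with $\bQ \in B(\mL)$. Because both $dit(\mP_{\bQ_1})$ and $dit(\mP_{\bQ_2})$ are of this form, the hypothesis applies directly and the argument goes through without any case analysis or additional structural lemmas. This proof will be symmetric in spirit to Lemma~\ref{lem:help:conflict} from the answer-dependent setting, with $dit(\mP_{\bQ})$ playing the role of $\dagr{\bQ}{E}$ and Lemma~\ref{lem:equiv:dtr:ii} playing the role of Lemma~\ref{lem:equiv:dtr}.
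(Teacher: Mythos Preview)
Your proposal is correct and follows essentially the same approach as the paper: invoke Lemma~\ref{lem:equiv:dtr:ii} to convert $\bQ_2 \dtr \bQ_1$ into the containment $dit(\mP_{\bQ_1}) \subseteq dit(\mP_{\bQ_2})$, then apply monotonicity of $f$ to conclude $p(\bQ_2) \geq p(\bQ_1)$. Your observation that this mirrors Lemma~\ref{lem:help:conflict} with $dit(\mP_{\bQ})$ in place of $\dagr{\bQ}{E}$ is apt.
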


\begin{proof}
Consider two query bundles $\bQ_1, \bQ_2$ such that $\bQ_2 \dtr \bQ_1$. From Lemma~\ref{lem:equiv:dtr:ii} this is equivalent to $dit(\mP_{\bQ_1}) \subseteq dit(\mP_{\bQ_2})$. Since $f$ is monotone, we have
$ p(\bQ_2) = f(dit(\mP_{\bQ_2})) \geq f(dit(\mP_{\bQ_1})) =  p(\bQ_1).$
\end{proof}

\begin{lemma} 
Let $ p(\bQ) = f(dit(\mP_{\bQ}))$. If $f$ is subadditive, then $p$ has no bundle arbitrage.
\end{lemma}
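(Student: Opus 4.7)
The plan is to directly chain together the structural lemma for the bundle operation on the partition semilattice with the subadditivity hypothesis on $f$. Specifically, take an arbitrary bundle decomposition $\bQ = \bQ_1, \bQ_2$ and unfold the definition $p(\bQ) = f(dit(\mP_{\bQ}))$. Invoking Lemma~\ref{lem:bundle:ii}, the distinction set factors as $dit(\mP_{\bQ}) = dit(\mP_{\bQ_1}) \cup dit(\mP_{\bQ_2})$, which is precisely the join operation on the semilattice viewed through the distinction-set representation. Then subadditivity of $f$ on this semilattice yields $f(dit(\mP_{\bQ_1}) \cup dit(\mP_{\bQ_2})) \leq f(dit(\mP_{\bQ_1})) + f(dit(\mP_{\bQ_2}))$, and refolding the definition of $p$ on the right-hand side produces exactly $p(\bQ_1) + p(\bQ_2)$, which is the required no-bundle-arbitrage inequality.

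There is no real obstacle here since all of the substantive work has already been done: Lemma~\ref{lem:partition:sub} identified the join in $\Pi_{\mI}^{\mathcal{L}}$ with the partition induced by the bundle, and Lemma~\ref{lem:bundle:ii} translated that join into set-theoretic union at the level of distinction sets. One point worth making explicit in the write-up is that the subadditivity hypothesis on $f$ must be interpreted with respect to the join operation on the distinction-set semilattice, which is union; this matches the standard notion of subadditivity for set functions and is consistent with how subadditivity was used in Theorem~\ref{thm:arbitrage-free:2}. The proof therefore reduces to a two-line calculation, and no case analysis or additional machinery is needed.
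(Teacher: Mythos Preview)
The proposal is correct and follows essentially the same approach as the paper: invoke Lemma~\ref{lem:bundle:ii} to express $dit(\mP_{\bQ})$ as the union $dit(\mP_{\bQ_1}) \cup dit(\mP_{\bQ_2})$, then apply subadditivity of $f$ and refold the definition of $p$. Your additional remarks about interpreting subadditivity with respect to union are accurate but not needed for the argument itself.
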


\begin{proof}
Let $\bQ = \bQ_1, \bQ_2$. From Lemma~\ref{lem:bundle:ii},  $ dit(\mP_{\bQ}) =  dit(\mP_{\bQ_1}) \cup dit(\mP_{\bQ_2})$.
Since $f$ is subadditive, we  have:
\begin{align*}
p(\bQ)  = f(dit(\mP_{\bQ}))  & \leq f(dit(\mP_{\bQ_1})) + f(dit(\mP_{\bQ_2})) 
 =  p(\bQ_1)  + p(\bQ_2) 
\end{align*}
This concludes the proof.
\end{proof}

\begin{lemma}
Let $p$ be an answer-independent pricing function with no information arbitrage. 
Then, $p$ must be of the form $p(\bQ) = f(dit(\mP_{\bQ}))$, where $f$ is a set function over $\mathcal{I} \times \mathcal{I}$.
\end{lemma}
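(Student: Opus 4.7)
The plan is to mirror the argument of Lemma~\ref{lem:help:rev}, which established the analogous characterization for answer-dependent prices. The goal is to exhibit a set function $f$ such that $p(\bQ) = f(dit(\mP_{\bQ}))$, which reduces to showing that whenever two bundles induce the same distinction set, they receive the same price; once this is established, we can simply read off the values of $f$ from $p$.

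First I would show that if $dit(\mP_{\bQ_1}) = dit(\mP_{\bQ_2})$ for two bundles $\bQ_1, \bQ_2 \in B(\mL)$, then $p(\bQ_1) = p(\bQ_2)$. The equality of distinction sets implies, via Lemma~\ref{lem:equiv:dtr:ii}, both $\bQ_1 \dtr \bQ_2$ and $\bQ_2 \dtr \bQ_1$. Since $p$ has no information arbitrage, the first determinacy gives $p(\bQ_1) \geq p(\bQ_2)$ and the second gives $p(\bQ_2) \geq p(\bQ_1)$, so the two prices must coincide. Hence the price is a function only of the distinction set of $\bQ$.

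Next I would construct $f: 2^{\mathcal{I} \times \mathcal{I}} \to \mathbb{R}$ explicitly. For any subset $S \subseteq \mathcal{I} \times \mathcal{I}$ that equals $dit(\mP_{\bQ})$ for some $\bQ \in B(\mL)$, define $f(S) = p(\bQ)$; the previous step guarantees this is well-defined, since the choice of witnessing bundle $\bQ$ does not affect the value. For every other subset $S$ (those that do not arise as a distinction set of any bundle in the language), set $f(S)$ to any arbitrary value, say $0$. Then by construction $p(\bQ) = f(dit(\mP_{\bQ}))$ for every bundle $\bQ \in B(\mL)$, which is exactly the required representation.

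The main subtlety, rather than an obstacle, is the observation that $f$ need only be constrained on subsets of $\mathcal{I} \times \mathcal{I}$ that actually occur as distinction sets of bundles in $\mL$; the lemma only asserts the existence of \emph{some} set function, so the values on unreachable arguments can be filled in arbitrarily. This is why the statement merely requires $f$ to be defined over $\mathcal{I} \times \mathcal{I}$ without demanding any canonical extension, and it is the reason no appeal to subadditivity or monotonicity is needed at this stage---those properties would enter only if we additionally wanted to characterize the absence of bundle arbitrage.
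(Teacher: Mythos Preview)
Your proposal is correct and follows essentially the same approach as the paper: both arguments reduce the claim to showing that equal distinction sets force equal prices, invoke Lemma~\ref{lem:equiv:dtr:ii} in both directions to get mutual determinacy, and then use the no-information-arbitrage assumption symmetrically to conclude $p(\bQ_1)=p(\bQ_2)$. Your version simply spells out the explicit construction of $f$ on non-realized subsets, which the paper leaves implicit.
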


\begin{proof}
To prove the lemma, we will show that for any two queries $\bQ_1, \bQ_2$,  $dit(\mP_{\bQ_1})= dit(\mP_{\bQ_2})$ implies that they have the same price. From Lemma~\ref{lem:equiv:dtr:ii} we obtain that $\bQ_2 \dtr \bQ_1$. Since $p$ has no information arbitrage, it must be that $p(\bQ_2) \geq p(\bQ_1)$. Using a symmetric argument, we can also prove that $p(\bQ_2) \leq p(\bQ_1)$, which implies that the prices are indeed the same: $p(\bQ_2) = p(\bQ_1)$.
This proves the existence of a function $f$.
\end{proof}

\subsubsection{A Characterization of Arbitrage-Free \QPS}

We now consider the family of instance-independent pricing functions of the form $ p(\bQ) = f(\mP_{\bQ})$, where $f: \Pi_{\mI}^\mL \rightarrow  \mathbb{R}_+$ is a function that maps a partition to the positive real numbers. 

\begin{theorem} \label{thm:arb:1}
Let $p$ be an instance-independent pricing function. Then, the two statements are equivalent:
\begin{packed_enum}
\item $p$ has no information arbitrage.
\item $p(\bQ) = f(\mP_{\bQ})$, where $f$ is a monotone function over $\Pi_\mI^\mL$.
\end{packed_enum}
\end{theorem}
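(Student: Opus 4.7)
The plan is to decouple the theorem into its two implications and dispatch each using Lemma~\ref{lem:partition:mon}, which translates the determinacy relation $\dtr$ between query bundles into the refinement relation $\succeq$ on the partition join-semilattice $\Pi_\mI^\mL$. This lemma is the exact structural bridge we need, so the theorem reduces to routine bookkeeping once that bridge is in place.

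For $(2) \Rightarrow (1)$, assume $p(\bQ) = f(\mP_\bQ)$ with $f$ monotone on $(\Pi_\mI^\mL, \succeq)$. Given $\bQ_1, \bQ_2 \in B(\mL)$ with $\bQ_2 \dtr \bQ_1$, Lemma~\ref{lem:partition:mon} gives $\mP_{\bQ_2} \succeq \mP_{\bQ_1}$, and monotonicity of $f$ then yields $p(\bQ_2) \geq p(\bQ_1)$. This is just the partition-level rephrasing of the distinction-set lemma already proved earlier in this subsection.

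For $(1) \Rightarrow (2)$, I first argue that the price of $\bQ$ depends only on the induced partition $\mP_\bQ$. If two bundles $\bQ_1, \bQ_2$ satisfy $\mP_{\bQ_1} = \mP_{\bQ_2}$, then by Lemma~\ref{lem:partition:mon} both $\bQ_1 \dtr \bQ_2$ and $\bQ_2 \dtr \bQ_1$ hold, so the absence of information arbitrage forces $p(\bQ_1) \geq p(\bQ_2)$ and $p(\bQ_2) \geq p(\bQ_1)$, hence equality. This allows me to well-define $f \colon \Pi_\mI^\mL \to \mathbb{R}$ by $f(\mP) = p(\bQ)$ for any representative bundle $\bQ \in B(\mL)$ with $\mP_\bQ = \mP$ (such a representative exists by the definition of $\Pi_\mI^\mL$ as the set of partitions induced by bundles in $B(\mL)$). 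Monotonicity of $f$ is then immediate: if $\mP_1 \succeq \mP_2$ in $\Pi_\mI^\mL$, pick witnesses $\bQ_1, \bQ_2 \in B(\mL)$ with $\mP_{\bQ_i} = \mP_i$, apply Lemma~\ref{lem:partition:mon} to obtain $\bQ_1 \dtr \bQ_2$, and conclude $f(\mP_1) = p(\bQ_1) \geq p(\bQ_2) = f(\mP_2)$ by the hypothesis.

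The only step requiring a small amount of care is the well-definedness of $f$, since a priori $p$ is only guaranteed to behave nicely with respect to determinacy rather than with respect to equality of partitions; this is handled by exploiting the symmetric situation above. Everything else follows mechanically from the correspondence between $\dtr$ and $\succeq$, so I do not anticipate any substantive obstacle: the theorem is essentially a restatement of the pair of lemmas preceding it, transported from the distinction-set view to the partition view via the order-preserving bijection $\mP \mapsto dit(\mP)$.
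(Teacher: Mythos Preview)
Your proposal is correct and mirrors the paper's proof essentially step for step: both directions hinge on Lemma~\ref{lem:partition:mon} to translate between $\dtr$ and $\succeq$, with the $(1)\Rightarrow(2)$ direction first establishing well-definedness of $f$ via the symmetric determinacy argument and then deducing monotonicity by choosing representatives. There is no meaningful difference in approach.
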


\begin{proof}
$2 \implies 1$. Say $\bQ_1 \dtr \bQ_2$. Then, by~\autoref{lem:partition:mon}, $\mP_{\bQ_1} \succeq \mP_{\bQ_2} $. By monotonicity of $f$, we have $ p(\bQ_1) = f(\mP_{\bQ_1}) \geq f(\mP_{\bQ_2}) =  p(\bQ_2)$.

$1 \implies 2$. 
We will first show that if $\mP_{\bQ_1} = \mP_{\bQ_2}$, then $\bQ_1, \bQ_2$ have the same price. Indeed, since $\mP_{\bQ_1} \succeq \mP_{\bQ_2}$, by Lemma~\ref{lem:partition:mon} we have $\bQ_1 \dtr \bQ_2$, which implies $p(\bQ_1) \geq p(\bQ_2)$ since $p$ is information arbitrage-free. Similarly, $p(\bQ_1) \leq p(\bQ_2)$. Thus, there must exist some function $f$ such that $p(\bQ) = f(\mP_{\bQ})$.

Next, assume that $\mP_1 \succeq \mP_2$ for $\mP_1, \mP_2 \in \Pi_\mI^\mL$. Then, we can find $\bQ_1, \bQ_2$ such that $\mP_{\bQ_i} = \mP_i$ for $i=1,2$. From Lemma~\ref{lem:partition:mon} then, $\bQ_1 \dtr \bQ_2$. Thus, $f(\mP_1) = p(\bQ_1) \geq p(\bQ_2) = f(\mP_2)$.
\end{proof}

\begin{theorem} \label{thm:arb:2}
Let $p(\bQ) = f(\mP_\bQ)$ be an instance-independent pricing function, where $f$ is a function over $\Pi_\mI^\mL$. Then, the two statements are equivalent:
\begin{packed_enum}
\item $p$ has no bundle arbitrage.
\item $f$ is subadditive over $\Pi_\mI^\mL$.
\end{packed_enum}
\end{theorem}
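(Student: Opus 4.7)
The plan is to prove both directions by reducing bundle arbitrage directly to the join operation on the partition semilattice, using the already-established fact (Lemma~\ref{lem:partition:sub}) that for a bundle $\bQ = \bQ_1, \bQ_2$ we have $\mP_{\bQ} = \mP_{\bQ_1} \vee \mP_{\bQ_2}$. The proof should closely mirror the structure of the analogous result for \APS\ (Theorem~\ref{thm:arbitrage-free:2}), but with the partition semilattice playing the role that the conflict-set semilattice played there.

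For the direction $2 \Rightarrow 1$, I would fix an arbitrary bundle $\bQ = \bQ_1, \bQ_2$ with $\bQ_1, \bQ_2 \in B(\mL)$ (and hence $\bQ \in B(\mL)$). Applying Lemma~\ref{lem:partition:sub} gives $\mP_{\bQ} = \mP_{\bQ_1} \vee \mP_{\bQ_2}$, and then subadditivity of $f$ on $\Pi_\mI^\mL$ yields
\begin{align*}
p(\bQ) = f(\mP_{\bQ}) = f(\mP_{\bQ_1} \vee \mP_{\bQ_2}) \leq f(\mP_{\bQ_1}) + f(\mP_{\bQ_2}) = p(\bQ_1) + p(\bQ_2),
\end{align*}
which is exactly the no-bundle-arbitrage condition.

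For the direction $1 \Rightarrow 2$, I would pick any two elements $\mP_1, \mP_2$ of the semilattice $\Pi_\mI^\mL$. By definition of $\Pi_\mI^\mL$, each arises from some bundle, so there exist $\bQ_1, \bQ_2 \in B(\mL)$ with $\mP_{\bQ_i} = \mP_i$. Form the bundle $\bQ = \bQ_1, \bQ_2$, which again lies in $B(\mL)$. By Lemma~\ref{lem:partition:sub}, $\mP_{\bQ} = \mP_1 \vee \mP_2$, and applying the assumed no-bundle-arbitrage inequality for $p$ gives
\begin{align*}
f(\mP_1 \vee \mP_2) = f(\mP_{\bQ}) = p(\bQ) \leq p(\bQ_1) + p(\bQ_2) = f(\mP_1) + f(\mP_2),
\end{align*}
establishing subadditivity of $f$.

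There is no real obstacle here: the bulk of the work was already done in setting up the partition semilattice and proving Lemma~\ref{lem:partition:sub} and Lemma~\ref{lem:bundle:ii}. The only subtlety worth checking explicitly is that in the $1 \Rightarrow 2$ direction, the representatives $\bQ_1, \bQ_2$ must be chosen from $B(\mL)$ (so that their concatenation is also a valid bundle in $B(\mL)$ on which the pricing function is defined); this is automatic from how $\Pi_\mI^\mL$ is defined as partitions induced by bundles from $B(\mL)$.
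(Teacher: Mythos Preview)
Your proposal is correct and essentially identical to the paper's proof: both directions use Lemma~\ref{lem:partition:sub} to translate between bundle concatenation and the join on $\Pi_\mI^\mL$, then apply subadditivity or no-bundle-arbitrage directly. Your explicit remark that the representatives $\bQ_1,\bQ_2$ lie in $B(\mL)$ (so their concatenation does too) is a small clarification the paper leaves implicit.
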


\begin{proof}
$2 \implies 1$. Let $\bQ = \bQ_1, \bQ_2$. By Lemma~\ref{lem:partition:sub}, $\mP_{\bQ} =  \mP_{\bQ_1} \vee \mP_{\bQ_2} $. Since $f$ is subadditive over the join-semilattice:
\begin{align*} 
p(\bQ) = f(\mP_{\bQ}) = f(\mP_{\bQ_1} \vee \mP_{\bQ_2} ) 
 \leq f(\mP_{\bQ_1}) + f(\mP_{\bQ_2})=  p(\bQ_1) + p(\bQ_2) 
\end{align*}

$1 \implies 2$. 
Let $\mP = \mP_1 \vee \mP_2$. Then, we can find $\bQ_1, \bQ_2$ such that $\mP_{\bQ_i} = \mP_i$ for $i=1,2$. From Lemma~\ref{lem:partition:sub}, if $\bQ = \bQ_1, \bQ_2$, then $\mP_{\bQ} = \mP$. Thus, 
\begin{align*} 
f(\mP) = p(\bQ) \leq  p(\bQ_1) + p(\bQ_2) = f(\mP_1) + f(\mP_2).
\end{align*}
This concludes the proof. 
\end{proof}


\begin{corollary}
Let $f$ be a monotone and subadditive function over $\Pi_\mI^\mL$. Then, $p(\bQ) = f(\mP_\bQ)$ is an instance-independent pricing function that has no bundle or information arbitrage.
\end{corollary}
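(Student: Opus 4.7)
The plan is to derive this corollary as a direct consequence of Theorems~\ref{thm:arb:1} and~\ref{thm:arb:2}, since those two results give ``if and only if'' characterizations and the corollary is just the easy (``if'') direction of each. No new machinery is needed.

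First, I would observe that $p(\bQ) = f(\mP_\bQ)$ depends only on the query bundle $\bQ$ (through its induced partition $\mP_\bQ$) and not on the underlying database, so $p$ is by construction an instance-independent pricing function, i.e.\ a \QPS. Second, I would apply the direction $2 \implies 1$ of Theorem~\ref{thm:arb:1}: since $f$ is assumed monotone over $\Pi_\mI^\mL$, the resulting $p$ has no information arbitrage. Third, I would apply the direction $2 \implies 1$ of Theorem~\ref{thm:arb:2}: since $f$ is assumed subadditive over $\Pi_\mI^\mL$, the same $p$ has no bundle arbitrage. Combining the two conclusions yields that $p$ is arbitrage-free.

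There is no real obstacle here, since the heavy lifting (relating $\bQ_1 \dtr \bQ_2$ to refinement of partitions via Lemma~\ref{lem:partition:mon}, and bundling to the join operation via Lemma~\ref{lem:partition:sub}) has already been done in the two theorems. The only thing to be careful about is noting explicitly that both monotonicity and subadditivity are assumed over the same semilattice $\Pi_\mI^\mL$ on which $\mP_\bQ$ ranges, so the hypotheses of both theorems apply simultaneously to the same $f$. The proof can therefore be written in just a couple of lines: ``By Theorem~\ref{thm:arb:1}, monotonicity of $f$ implies no information arbitrage for $p$. By Theorem~\ref{thm:arb:2}, subadditivity of $f$ implies no bundle arbitrage for $p$. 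Hence $p$ is arbitrage-free.''
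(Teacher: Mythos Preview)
Your proposal is correct and matches the paper's approach exactly: the paper states this corollary immediately after Theorems~\ref{thm:arb:1} and~\ref{thm:arb:2} without giving an explicit proof, precisely because it is the conjunction of the $2 \implies 1$ directions of those two theorems. Your brief argument is all that is needed.
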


Alternatively, we could also define the pricing function as $p(\bQ) = f(dit(\mP_{\bQ}))$. Using the same type of arguments, we can show:

\begin{corollary}
Let $f$ be a monotone and subadditive set function. Then, $p(\bQ) = f(dit(\mP_\bQ))$ is an instance-independent pricing function that has no bundle or information arbitrage.
\end{corollary}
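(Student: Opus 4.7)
The plan is to observe that this corollary is an immediate consequence of the two preceding lemmas in the excerpt, which independently handled information arbitrage (monotonicity of $f$ suffices) and bundle arbitrage (subadditivity of $f$ suffices) for functions of the form $p(\bQ) = f(dit(\mP_\bQ))$. Since the hypotheses of both lemmas are satisfied by assumption, the conclusion follows by combining them.

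More concretely, I would first invoke the lemma stating that if $f$ is monotone then $p(\bQ) = f(dit(\mP_\bQ))$ has no information arbitrage. The underlying reasoning is that for any $\bQ_1, \bQ_2$ with $\bQ_2 \dtr \bQ_1$, Lemma~\ref{lem:equiv:dtr:ii} gives $dit(\mP_{\bQ_1}) \subseteq dit(\mP_{\bQ_2})$, and monotonicity of $f$ yields $p(\bQ_1) \leq p(\bQ_2)$. Then I would invoke the lemma stating that if $f$ is subadditive then $p(\bQ) = f(dit(\mP_\bQ))$ has no bundle arbitrage. Here, for $\bQ = \bQ_1, \bQ_2$, Lemma~\ref{lem:bundle:ii} gives $dit(\mP_{\bQ}) = dit(\mP_{\bQ_1}) \cup dit(\mP_{\bQ_2})$, so subadditivity of $f$ immediately yields $p(\bQ) \leq p(\bQ_1) + p(\bQ_2)$. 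Together these establish that $p$ is arbitrage-free.

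There is no real obstacle here: the corollary is essentially a packaging step. The only subtlety worth noting is that the set function $f$ is taken to be defined over subsets of $\mI \times \mI$ (not just over the image of $dit$), so the monotonicity and subadditivity assumptions are slightly stronger than what is strictly needed; in particular, we never need to verify that $dit(\mP_{\bQ_1}) \cup dit(\mP_{\bQ_2})$ is itself the distinction set of some query bundle when applying subadditivity, which is convenient. The proof is therefore just a two-line composition of the previously proved lemmas.
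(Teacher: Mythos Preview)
Your proposal is correct and matches the paper's approach: the corollary is presented in the paper without an explicit proof, preceded only by the remark ``Using the same type of arguments, we can show,'' which refers precisely to the two lemmas you invoke. Your observation about $f$ being defined on all of $2^{\mI \times \mI}$ (so that subadditivity applies directly to the union without needing it to be a distinction set) is accurate and is implicit in the paper's treatment.
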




\subsection{Construction of Pricing Functions From Answer-Dependent Prices}

We show first how we can design an instance-independent pricing function $p(\bQ)$ starting from an answer-dependent function $p(\bQ, E)$. Given a query bundle $\bQ$, the idea is to construct a vector of all prices $p(\bQ, \bQ(D))$ for all databases $D \in \mathcal{I}$. Formally, we define the {\em price vector} 
$ \vec{p}(\bQ) = \langle p(\bQ,\bQ(D)) \mid D \in \mathcal{I} \rangle$.
Then we can obtain an instance-independent pricing function by computing another function $g: \mathbb{R}_+^{|\mathcal{I}|} \rightarrow \mathbb{R}_+$ over the above vector, such that $p(\bQ) = g(\vec{p}(\bQ))$. The next lemma describes the conditions for $g$ under which the arbitrage-free property carries over.

\begin{lemma} \label{lem:average}
Let $p(\bQ, E)$ be an arbitrage-free pricing function. If $g$ is a monotone and subadditive function, then $p(\bQ) = g(\vec{p}(\bQ))$ is an arbitrage-free instance-independent function. 
\end{lemma}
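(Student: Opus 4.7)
The plan is to push everything down to the coordinates of the price vector and then invoke the two structural hypotheses on $g$ one at a time. Since $\vec{p}(\bQ)$ is defined coordinate by coordinate over $\mathcal{I}$, the two arbitrage properties of the given answer-dependent function $p(\bQ,E)$ yield componentwise inequalities between $\vec{p}(\bQ_1)$, $\vec{p}(\bQ_2)$ and $\vec{p}(\bQ_1,\bQ_2)$, which can then be converted into scalar inequalities by applying the monotonicity and subadditivity of $g$. Throughout the argument I interpret ``monotone'' for $g$ as componentwise monotone on $\mathbb{R}_+^{|\mathcal{I}|}$, and ``subadditive'' as $g(\vec{x}+\vec{y}) \leq g(\vec{x})+g(\vec{y})$, matching the intended use of $g$ as an aggregation over prices.

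For information arbitrage, I would fix query bundles $\bQ_1, \bQ_2$ with $\bQ_2 \dtr \bQ_1$. The instance-independent determinacy $\bQ_2 \dtr \bQ_1$ immediately implies $D \vdash \bQ_2 \dtr \bQ_1$ for every $D \in \mathcal{I}$, so by the hypothesis that $p(\bQ,E)$ has no \APS\ information arbitrage, $p(\bQ_2,\bQ_2(D)) \geq p(\bQ_1,\bQ_1(D))$ for every $D$. Hence $\vec{p}(\bQ_2) \geq \vec{p}(\bQ_1)$ componentwise, and monotonicity of $g$ yields
\[
p(\bQ_2) = g(\vec{p}(\bQ_2)) \;\geq\; g(\vec{p}(\bQ_1)) = p(\bQ_1).
\]

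For bundle arbitrage, let $\bQ = \bQ_1, \bQ_2$. For every $D \in \mathcal{I}$, the \APS\ bundle arbitrage-freeness of $p(\bQ,E)$ gives $p(\bQ,\bQ(D)) \leq p(\bQ_1,\bQ_1(D)) + p(\bQ_2,\bQ_2(D))$, so $\vec{p}(\bQ) \leq \vec{p}(\bQ_1) + \vec{p}(\bQ_2)$ componentwise. Applying monotonicity of $g$ first and then subadditivity,
\[
p(\bQ) = g(\vec{p}(\bQ)) \;\leq\; g\bigl(\vec{p}(\bQ_1) + \vec{p}(\bQ_2)\bigr) \;\leq\; g(\vec{p}(\bQ_1)) + g(\vec{p}(\bQ_2)) = p(\bQ_1) + p(\bQ_2).
\]

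The proof is essentially routine once the right vector-level reformulation is in place; the only subtle point is to be explicit about what ``monotone'' and ``subadditive'' mean for a function on $\mathbb{R}_+^{|\mathcal{I}|}$, since in earlier sections these notions were used for functions on a join-semilattice. I would add a one-line remark clarifying that here we use the componentwise order on price vectors and ordinary vector addition, so that natural choices such as weighted averages, $\ell_p$-norms, or suprema qualify as valid $g$, producing families of arbitrage-free \QPS\ directly from any arbitrage-free \APS.
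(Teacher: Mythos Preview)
Your proof is correct and follows essentially the same route as the paper: reduce both arbitrage conditions to componentwise inequalities on the price vectors via the \APS\ hypotheses, then apply monotonicity of $g$ (for information arbitrage) and monotonicity followed by subadditivity of $g$ (for bundle arbitrage). Your added remark clarifying that monotonicity and subadditivity of $g$ are with respect to the componentwise order and ordinary vector addition is a useful explicitness that the paper leaves implicit.
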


\begin{proof}
We first prove the information arbitrage property. Suppose that $\bQ_1 \dtr \bQ_2$. Then, for every database $D$ we have $D \vdash \bQ_1 \dtr \bQ_2$, which implies $p(\bQ_1, \bQ_1(D)) \geq p(\bQ_2, \bQ_2(D))$. Since $g$ is monotone and the price vector for $\bQ_2$ is smaller everywhere than the vector for $\bQ_1$, we have that $p(\bQ_1) \geq p(\bQ_2)$.

We next prove the bundle arbitrage property. Let $\bQ = \bQ_1, \bQ_2$. Since $p(\bQ, E)$ is bundle-arbitrage-free for every $E = \bQ(D)$, we have that for every database $D$, $p(\bQ, \bQ(D)) \leq p(\bQ_1, \bQ_1(D))  + p(\bQ_2, \bQ_2(D)) $.  Thus:
\begin{align*}
p(\bQ)  = g(\vec{p}(\bQ)) \leq g(\vec{p}(\bQ_1) + \vec{p}(\bQ_2)) 
 \leq g(\vec{p}(\bQ_1)) + g(\vec{p}(\bQ_2))
\end{align*} 
where the first inequality comes from the monotonicity of $g$, and the second inequality results from the subadditivity of $g$.
\end{proof}

We next present an application of Lemma~\ref{lem:average} to obtain arbitrage-free pricing functions. 

\begin{lemma}\label{lem:ii:weight}
Let $f$ be a monotone and subadditive set function. Let $w_D$ be a non-negative weight $w_D$ to each $D \in \mI$, and denote $w_B = \sum_{D \in B} w_D$. Then, the pricing functions 
$p_1(\bQ)  =  \max_{B \in \mP_\bQ} \{f(\mI \setminus B)\}$ and $p_2(\bQ) = \sum_{B \in \mP_\bQ} w_B \cdot  f(\mI \setminus B)$ are arbitrage-free.
%
\end{lemma}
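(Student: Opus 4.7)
The plan is to reduce both claims to Lemma~\ref{lem:average}, using as the underlying answer-dependent function $\bar{p}(\bQ, E) = f(\dagr{\bQ}{E})$. Since $f$ is monotone and subadditive by hypothesis, Corollary~\ref{cor:arbitrage} tells us that $\bar{p}$ is an arbitrage-free \APS. The key structural observation, which lets me relate both $p_1$ and $p_2$ to this $\bar{p}$, is that for any $D$ lying in the block $B = [D]_\bQ \in \mP_\bQ$, we have $\dagr{\bQ}{\bQ(D)} = \mI \setminus B$, so that $\bar{p}(\bQ, \bQ(D)) = f(\mI \setminus B)$. In particular, the price vector $\vec{p}(\bQ)$ is constant on each block of $\mP_\bQ$, taking value $f(\mI \setminus B)$ on block $B$.

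For $p_1$, I would rewrite
\[
p_1(\bQ) = \max_{B \in \mP_\bQ} f(\mI \setminus B) = \max_{D \in \mI} \bar{p}(\bQ, \bQ(D)) = g_1(\vec{p}(\bQ)),
\]
where $g_1(\vec{x}) = \max_{D \in \mI} x_D$. Monotonicity of $g_1$ is immediate. Subadditivity follows from the one-line inequality $\max_D (x_D + y_D) \leq \max_D x_D + \max_D y_D$, since each term on the left is bounded by the right-hand side. Lemma~\ref{lem:average} then yields that $p_1$ is an arbitrage-free \QPS.

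For $p_2$, I would use the fact that weights are constant on blocks to expand the sum database-by-database:
\[
p_2(\bQ) = \sum_{B \in \mP_\bQ} \Bigl( \sum_{D \in B} w_D \Bigr) f(\mI \setminus B) = \sum_{D \in \mI} w_D \cdot \bar{p}(\bQ, \bQ(D)) = g_2(\vec{p}(\bQ)),
\]
where $g_2(\vec{x}) = \sum_{D \in \mI} w_D x_D$. Since $w_D \geq 0$, $g_2$ is a non-negative linear combination, hence monotone; being linear, it is additive and therefore trivially subadditive. Another invocation of Lemma~\ref{lem:average} closes this case.

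The substantive content is essentially the reindexing that turns a sum over blocks of $\mP_\bQ$ into a sum over databases in $\mI$ (and analogously for the max), which is what allows the aggregation $g$ to be defined cleanly on the price vector. Once that is in place, the verification that $\max$ and a non-negative linear form are both monotone and subadditive is routine; the only mildly interesting step is subadditivity of $\max$, and that is a one-line pointwise argument.
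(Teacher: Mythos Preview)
Your proposal is correct and follows essentially the same approach as the paper: both start from the arbitrage-free answer-dependent function $p(\bQ,E)=f(\dagr{\bQ}{E})$ and then apply Lemma~\ref{lem:average} with $g$ taken to be the max norm for $p_1$ and the weighted linear form $g(\vec{x})=\sum_D w_D x_D$ for $p_2$. If anything, you are slightly more explicit than the paper in verifying that $\max$ and the non-negative linear form are monotone and subadditive.
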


\begin{proof}
Since $f$ is a subadditive and monotone set function, $p(\bQ,E) = f(\dagr{\bQ}{E})$ is arbitrage-free. 
For the function $p_1$, we apply Lemma~\ref{lem:average} with $g$ being the max norm.  In this case, we have
\begin{align*} 
p(\bQ)  = \max_{D \in \mI} \{p(\bQ,\bQ(D))\} = \max_{E} \{p(\bQ,E)\} 
 = \max_{E} \{ f(\bar \mS_\bQ(E))\} 
 = \max_{B \in \mP_\bQ} \{f(\mI \setminus B)\}
\end{align*}
For $p_2$, we apply Lemma~\ref{lem:average} with $g$ being the weighted norm $g(\vec{x}) = \sum_D w_D x_D$. Then:
\begin{align*} 
p(\bQ) & = \sum_{D \in \mI} w_D \cdot p(\bQ,\bQ(D)) 
 = \sum_{E} \left(\sum_{D:\bQ(D)=E} w_D \right) \cdot f(\dagr{\bQ}{E}) 
 = \sum_{B \in \mP_\bQ} w_B \cdot  f(\mI \setminus B)
\end{align*} 
This concludes the proof.
\end{proof}

\begin{example}
Consider the function $p_2$ with equal weights $w_D = 1$ and the set function $f(A) = |A|$. The resulting  arbitrage-free function is 
$p(\bQ) = \sum_{B \in \mP_\bQ} |B|(|\mI| - |B|)= |dit(\mP_\bQ)|$, 
which sets the price to be the size of the distinction set.
\end{example}

If $\sum_{D} w_D = 1$ for $p_2$, one can interpret the weights as a probability distribution over the set of databases $\mI$. In this case, we can write $p_2(\bQ) = \mathbb{E}_{B \in \mP_\bQ} [f(\mI \setminus B)]$, 
where each block $B$ has probability $w_B$. In other words, the pricing function is the expected price over all answer-dependent prices.  
The converse of Lemma~\ref{lem:average} does not hold: it is possible for $p(\bQ)$ to be arbitrage-free, and for some database $D$ it may not be the case. As we will see next, this allows us to construct arbitrage-free functions that are based on measures of uncertainty.

\subsection{Construction of Pricing Functions From Uncertainty Measures} 
\label{sec:entropy}

In this section, we describe arbitrage-free pricing functions that do not originate from answer-dependent functions. To construct such functions, we switch to a probabilistic view of the problem and then apply information-theoretic tools that are used to measure uncertainty. For the remainder of this section, we assume that each database $D$ is associated with a probability $p_D$. We denote by $X$ the random variable such that $P(X = D) = p_D$ and let $p_E = \sum_{D: \bQ(D)=E} p_D$. The detailed proofs in this section are presented in~\autoref{proof:ii}. \\

\noindent {\bf Shannon Entropy.}
The first measure of uncertainty we apply is the most commonly used form of entropy, and was proposed in~\cite{LK14} as a pricing function. In the answer-dependent context, we defined the price as the information gain after the output $E$ has been revealed. Since in this setting the price is independent of the output, we define the price as the {\em expected information gain} over all possible outcomes. Formally:
\begin{align} \label{eq:shannon:entropy}
p^H(\bQ) & = H(X) - \sum_E p_E \cdot H(X \mid \bQ(X) = E)
\end{align}
Equivalently, we can also express the price as
\begin{align*} 
p^H(\bQ)  = H(X) - H(X \mid \bQ(X))  = I(X ; \bQ(X)) 
 = H(\bQ(X)) - H(\bQ(X) \mid X) = H(\bQ(X))
\end{align*}
where $I(X;Y)$ is the {\em mutual information} between the random variables $X$ and $Y$. \cite{LK14} proves that $p^H$ is both bundle and information arbitrage-free, using the subadditivity of entropy and the data-processing inequality respectively. It is instructing to write $p^H$ as
$
p^H(\bQ)  
= - \sum_{S \in \mP_\bQ} p_S \cdot \log p_S 
 = \sum_D p_D \cdot p(\bQ, \bQ(D))
$
where $p(\bQ, E) = - \log \left( p_E \right)$ is now an answer-dependent pricing function. Notice that $p(\bQ,E)$ has no information arbitrage, and thus by applying Lemma~\ref{lem:average} we get an alternative proof that $p^H$ is information arbitrage-free. However, $p(\bQ,E)$ can have bundle arbitrage, and thus we cannot apply Lemma~\ref{lem:average} to show the subadditivity property as well: entropy is subadditive only in expectation. This example demonstrates that the converse of Lemma~\ref{lem:average} does not hold. \\

\noindent {\bf Tsallis Entropy.}
For a real number $q > 1$, the 
{\em Tsallis entropy}~\cite{tsallis-entropy}, or {\em $q$-entropy}, of a random variable $X$ is defined as
$ S_q(X) = \frac{1}{q-1} \cdot \left( 1- \sum_x P(X=x)^{q-1} \right)$.
Tsallis entropy is a generalization of Shannon entropy, since $\lim_{q \rightarrow 1} S_q(X) = H(X)$. We define the price as the Tsallis entropy of $\bQ(X)$:
\begin{align} \label{eq:tsallis:entropy}
p^T(\bQ)  = S_q(\bQ(X))  =\sum_{S \in \mP_\bQ} \frac{p_S}{q-1} \cdot (1-p_S^{q-1})
\end{align}

\begin{lemma}\label{lem:ii:tsallis}
The pricing function $p^T$ defined in Equation~\eqref{eq:tsallis:entropy} is arbitrage-free for $q>1$. \end{lemma}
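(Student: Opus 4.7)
I would appeal to Theorems~\ref{thm:arb:1} and~\ref{thm:arb:2} and show that the function $f(\mP) = \sum_{B \in \mP} \phi(p_B)$, with $\phi(p) = (p - p^q)/(q-1)$ and $p_B = \sum_{D \in B} p_D$ --- which equals $S_q(\bQ(X))$ when $\mP = \mP_\bQ$ --- is both monotone and subadditive on the partition join-semilattice $\Pi_\mI^\mL$.

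\textbf{Monotonicity.} The key observation is that $\phi$ is concave on $[0,1]$ (since $\phi''(p) = -q\,p^{q-2}<0$ for $p > 0$) and satisfies $\phi(0)=0$; any such function is subadditive: $\phi(a+b) \le \phi(a) + \phi(b)$ for $a,b \geq 0$. If $\mP_1 \succeq \mP_2$, each block $T \in \mP_2$ decomposes as a disjoint union $\bigsqcup_i S_i$ of blocks $S_i \in \mP_1$ with $p_T = \sum_i p_{S_i}$, so iterating subadditivity of $\phi$ yields $\phi(p_T) \le \sum_i \phi(p_{S_i})$; summing over $T \in \mP_2$ gives $f(\mP_2) \le f(\mP_1)$, as required.

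\textbf{Subadditivity.} For a bundle $\bQ = \bQ_1, \bQ_2$, Lemma~\ref{lem:partition:sub} gives $\mP_{\bQ} = \mP_{\bQ_1} \vee \mP_{\bQ_2}$, so setting $Y = \bQ_1(X)$ and $Z = \bQ_2(X)$ the target reduces to the subadditivity of Tsallis entropy, $S_q(Y,Z) \le S_q(Y) + S_q(Z)$. I would chain three ingredients. First, a direct expansion using $p(y,z) = p(y)\,p(z \mid y)$ in the definition of $S_q(Y,Z)$ produces the Tsallis chain rule $S_q(Y,Z) = S_q(Y) + \sum_y p(y)^q\, S_q(Z \mid Y = y)$, where $S_q(Z \mid Y=y) := \tfrac{1}{q-1}\bigl(1 - \sum_z p(z \mid y)^q\bigr) \ge 0$. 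Second, since $q > 1$ and $p(y) \le 1$ we have $p(y)^q \le p(y)$, so the conditional sum is bounded by $\sum_y p(y)\, S_q(Z \mid Y=y)$. Third, because $S_q$ is a sum of concave functions of the underlying distribution it is itself concave, and applying this to the mixture $p(z) = \sum_y p(y)\, p(z \mid y)$ yields $\sum_y p(y)\, S_q(Z \mid Y=y) \le S_q(Z)$. Chaining the three steps gives $S_q(Y,Z) - S_q(Y) \le S_q(Z)$.

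\textbf{Main obstacle.} Monotonicity drops out cleanly from concavity of $\phi$, so the substantive work is the subadditivity step. What makes it delicate is that the naive pointwise attempt --- trying to prove $p(y)^{q-1} + p(z)^{q-1} \le 1 + p(y,z)^{q-1}$ block by block --- in fact fails for some $q \in (1,2)$; the argument must therefore exploit the global averaging enforced by the chain rule, together with the concavity of $S_q$, rather than any purely local comparison between the blocks of $\mP_{\bQ_1} \vee \mP_{\bQ_2}$ and those of $\mP_{\bQ_1}, \mP_{\bQ_2}$.
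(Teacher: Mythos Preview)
Your proof is correct and follows the same overall strategy as the paper --- verify monotonicity and subadditivity on the partition join-semilattice and invoke Theorems~\ref{thm:arb:1} and~\ref{thm:arb:2} --- but differs in the details of both steps. For monotonicity, the paper rewrites $p^T(\bQ) = \frac{1}{q-1}\sum_D p_D\,(1 - p_{[D]_\bQ}^{q-1})$ and observes directly that $S \mapsto 1-p_S^{q-1}$ is antitone in $S$ (since $q>1$), so refining the partition shrinks each $[D]_\bQ$ and raises every summand; your route via concavity of $\phi$ and the resulting subadditivity $\phi(p_T)\le\sum_i\phi(p_{S_i})$ reaches the same conclusion but with a small detour. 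For bundle arbitrage, the paper simply \emph{cites} the subadditivity of Tsallis entropy for $q>1$ as a known fact, whereas you supply a self-contained proof via the Tsallis chain rule, the bound $p(y)^q\le p(y)$, and concavity of $S_q$ in the distribution. Your argument is thus more complete on the harder step; the paper's is terser but relies on an external reference.
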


\begin{proof}
To show that $p^T$ has no information arbitrage, notice that we can write the pricing function as $\frac{1}{q-1} \sum_D p_D f(D_\bQ)$, where $f$ is the set function $f(S) = 1-p_{S}^{q-1}$. Observe that $f$ is a decreasing function. Suppose now we have two partitions such that $\mP_1 \succeq \mP_2$. For some database $D$, let $S_1 \in \mP_1$ the set that contains $D$, and similarly define $S_2 \in \mP_2$. Since $S_1 \subseteq S_2$, we have $f(S_1) \geq f(S_2)$. Summing over all databases proves that the function is indeed information arbitrage-free.

To show that $p^T$ has no bundle arbitrage, we will use the property that $q$-entropy is subadditive for any $q > 1$~\cite{tsallis:subadditive}. We then can write:
\begin{align*} 
p^T(\bQ_1, \bQ_2) & = S_q(\bQ_1(X), \bQ_2(X)) 
 \leq S_q(\bQ_1(X)) +S_q(\bQ_1(X)) 
= p^T(\bQ_1) + p^T(\bQ_2)
\end{align*}
This concludes the proof that $p^T$ has no arbitrage.
\end{proof}

\noindent {\bf Guessing Entropy.} 
The guessing entropy measures the average number of successive guesses required by an optimum strategy until we correctly guess the value of the random variable $X$ (in our case the underlying database $D$). The guessing entropy was first introduced in~\cite{guessing-entropy}, and subsequently used in~\cite{kopf2007information} in the context of measuring leakage in side-channel attacks.
To compute the guessing entropy of $X$, suppose that we have ordered the databases in decreasing order of their probabilities, i.e. such that $p(X=D_i) \geq p(X=D_j) $ whenever $i \leq j$. Then, we define the {\em guessing entropy} as $G(X) = \sum_i i \cdot p_{D_i}$. The price is now defined as the initial entropy minus the expected conditional guessing entropy $G(X \mid \bQ(X) =E)$:
\begin{equation}  \label{eq:guessing:entropy}
p^G(\bQ) = G(X) - \sum_{E} p_E \cdot G(X \mid \bQ(X) = E)
\end{equation}

\begin{lemma}\label{lem:ii:guessing}
The pricing function $p^G$ defined in Equation~\eqref{eq:guessing:entropy} is arbitrage-free. 
\end{lemma}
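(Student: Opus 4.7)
My plan is to prove the two arbitrage conditions separately by casting $p^G$ in the partition-lattice language of Theorems~\ref{thm:arb:1} and~\ref{thm:arb:2}. With the notation $T(\mP) := \sum_{B \in \mP} p_B\, G(X \mid X \in B)$, where $p_B = \sum_{D \in B} p_D$, the pricing function becomes $p^G(\bQ) = G(X) - T(\mP_\bQ)$, so it suffices to show that $\phi(\mP) := G(X) - T(\mP)$ is monotone and subadditive on $(\Pi_\mI^\mL, \succeq)$.

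For the information-arbitrage half I will establish a data-processing inequality for the conditional guessing entropy: if $\mP_{\bQ_1} \succeq \mP_{\bQ_2}$, then for each block $B_2 \in \mP_{\bQ_2}$ I will restrict the optimal guessing order on $B_2$ to each sub-block $B_1 \in \mP_{\bQ_1}$ with $B_1 \subseteq B_2$. This induced strategy is a (possibly suboptimal) strategy on $B_1$ whose position function is pointwise at most the position inside $B_2$. Averaging over the sub-blocks and then over $B_2$ gives $T(\mP_{\bQ_1}) \leq T(\mP_{\bQ_2})$, which is precisely $p^G(\bQ_1) \geq p^G(\bQ_2)$.

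For the bundle-arbitrage half, using Lemma~\ref{lem:partition:sub} the condition $p^G(\bQ_1, \bQ_2) \leq p^G(\bQ_1) + p^G(\bQ_2)$ is equivalent to $G(X) + T(\mP_{\bQ_1} \vee \mP_{\bQ_2}) \geq T(\mP_{\bQ_1}) + T(\mP_{\bQ_2})$. The key manoeuvre is to fix a single global total order $\prec$ on $\mI$ that sorts by decreasing probability (with an arbitrary but globally fixed tiebreak) and define $r_S(x) := 1 + |\{y \in S : y \prec x\}|$. Because this order is optimal on every subset, $p_S \cdot G(X \mid X \in S) = \sum_{x \in S} r_S(x)\, p_x$. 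The desired inequality then reads $\sum_x p_x F(x) \geq 0$ with $F(x) := r_\mI(x) + r_{C(x)}(x) - r_{A_1(x)}(x) - r_{A_2(x)}(x)$, where $A_i(x) := [x]_{\mP_{\bQ_i}}$ and $C(x) := A_1(x) \cap A_2(x)$. Using the identity $r_S(x) - r_{S'}(x) = |\{y \in S \setminus S' : y \prec x\}|$ whenever $x \in S' \subseteq S$, a short set-theoretic manipulation collapses $F(x)$ into the cardinality $|\{y \prec x : y \notin A_1(x) \cup A_2(x)\}|$, which is manifestly nonnegative pointwise.

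I expect the bundle step to be the main obstacle. Guessing entropy is not subadditive in its argument in general (e.g.\ $G(X,Y) > G(X)+G(Y)$ for independent uniforms on a large alphabet), so the Shannon-style identity $p^H(\bQ) = H(\bQ(X))$ used for~\eqref{eq:shannon:entropy} has no direct analogue. Moreover, the pointwise nonnegativity of $F(x)$ is delicate: if one were allowed to reorder each block's optimal strategy independently, the four rank terms would not telescope and $F$ could be negative at individual points (as my initial sanity-check example illustrated). Committing to one total order is precisely what makes $F(x)$ reduce to a single nonnegative count, after which the expectation bound is immediate.
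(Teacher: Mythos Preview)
Your proposal is correct and follows essentially the same route as the paper's proof: both arguments hinge on fixing one global total order by decreasing probability (with a consistent tiebreak), so that the rank $r_S(x)$ in any subset $S$ equals the cardinality of a down-set, and the subadditivity inequality reduces to a pointwise set-theoretic bound. The paper writes the key step as $|C\setminus S| \le |C\setminus S_1| + |C\setminus S_2|$ with $C=\{y: y\preceq x\}$ and $S=S_1\cap S_2$, which is exactly your identity $F(x)=|\{y\prec x: y\notin A_1(x)\cup A_2(x)\}|\ge 0$ rearranged; your monotonicity argument via ``restrict the optimal order on $B_2$ to $B_1$'' is likewise the same observation the paper uses, just phrased operationally rather than through indices.
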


To prove that the guessing entropy is arbitrage-free, it will be convenient to rewrite the above pricing function in a simpler form. For a given set $S \subseteq \mI$, denote by $i_{S}(D)$ the position of $D$ in an ordering of the elements in $S$ in decreasing probability. Then, we can write:
\begin{align*}
p^G(\bQ)  = \sum_{D} p_D \cdot i_{\mI}(D) - \sum_{S \in \mP_\bQ} \sum_{D \in S} p_D \cdot i_{S}(D)  = \sum_{D} p_D \cdot ( i_{\mI}(D) -  i_{[D]_\bQ}(D)).
\end{align*}

We can now use the above form to prove that the guessing entropy is a well-behaved pricing function. \\

\begin{proof}
We will prove the lemma by using the characterization of arbitrage in terms if the monotonicity and subadditivity of the function applied on elements of the partition lattice. 

For monotonicity, suppose that $\mP_1 \succeq \mP_2$. Consider a database $D$ that belongs in set $S_1 \in \mP_1$ and $S_2 \in \mP_2$. Since $\mP_1$ refines $\mP_2$, it must be that $S_1 \subseteq S_2$. But then, the index of $D$ in $S_2$ will be at least as large (since the set has a superset of elements). Thus, $i_{S_1}(D) \leq i_{S_2}(D)$, which implies in turn that $i_{\mI}(D) - i_{S_1}(D) \geq i_{\mI}(D) - i_{S_2}(D)$. Summing over all databases $D \in \mI$ obtains the desired inequality.

To prove the subadditivity property, let $\mP = \mP_1 \vee \mP_2$. Consider a database $D$ that belongs in $S \in \mP$, and also $S_1 \in \mP_1, S_2 \in \mP_2$. Let us now denote by $C$ the set of databases that have index $\geq i_{\mI}(D)$ in the set $\mI$. The key observation is that the index of $D$ in any set will depend only on $C$. By the construction of $C$, we then have that for every set $S$, $i_{S}(D) = |S \cap C|$. Since $S = S_1 \cap S_2$, we have $C \cap S = (C \cap S_1) \cap (C \cap S_2)$, or equivalently $C \setminus S = (C \setminus S_1) \cup (C \setminus S_2)$. Now:
\begin{align*}
i_{\mI}(D) - i_{S}(D) & = |C \cap \mI| - |C \cap S| =  |C \setminus S| \\
& \leq |C \setminus S_1| + |C \setminus S_2| \\
& = (|C| - |C \cap S_1|) + (|C| - |C \cap S_2|)  \\
& = (i_{\mI}(D) - i_{S_1}(D)) + (i_{\mI}(D) - i_{S_2}(D))
\end{align*}
Summing over all databases $D \in \mI$ proves the desired inequality for subadditivity.
\end{proof}

\noindent {\bf Min-Entropy.} 
We apply here the notion of min-entropy, as it was introduced in~\cite{info-flow} to quantify information flow. The {\em min-entropy} of a random variable is $H_\infty(X) = - \log (\max_x P(X=x))$. The conditional min-entropy is defined as $H_\infty(X \mid Y) = - \log (\sum_y P(Y=y) \cdot  \max_x P(X = x \mid Y=y))$. Then, we can construct the price of a query as follows:
\begin{align} \label{eq:min:entropy}
p^M(\bQ)  = H_\infty(X) - H_\infty(X \mid \bQ(X))   
 =  - \log (\max_D p_D) + \log \left( \sum_E \max_{D: \bQ(D)=E} p_D \right) 
\end{align}

\begin{lemma}\label{lem:ii:min:entropy}
The pricing function $p^M$ defined in Equation~\eqref{eq:min:entropy} has no information arbitrage. 
\end{lemma}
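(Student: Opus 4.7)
The plan is to reduce the claim to the monotonicity characterization of Theorem~\ref{thm:arb:1}. Observe that the first term $H_\infty(X) = -\log(\max_D p_D)$ in the definition of $p^M(\bQ)$ does not depend on $\bQ$; the query-dependent piece is $\log\!\left(\sum_E \max_{D: \bQ(D)=E} p_D\right)$. Rewriting this using the partition viewpoint, we have
\begin{equation*}
p^M(\bQ) = -\log(\max_D p_D) + \log\!\left(\sum_{B \in \mP_\bQ} \max_{D \in B} p_D\right),
\end{equation*}
so $p^M(\bQ) = c + \log f(\mP_\bQ)$, where $c$ is a constant and $f(\mP) = \sum_{B \in \mP} \max_{D \in B} p_D$ is a function on the partition semilattice $\Pi_\mI^\mL$.

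By Theorem~\ref{thm:arb:1} (together with the observation that composing with the monotone function $\log$ and adding a constant preserves monotonicity), it suffices to show that $f$ is monotone with respect to the refinement order $\succeq$ on $\Pi_\mI^\mL$. So I would take two partitions $\mP_1 \succeq \mP_2$ and argue $f(\mP_1) \geq f(\mP_2)$. Since $\mP_1$ refines $\mP_2$, every block $B \in \mP_2$ is a disjoint union $B = \bigsqcup_i B_i$ of blocks $B_i \in \mP_1$. Then
\begin{equation*}
\max_{D \in B} p_D \;=\; \max_i \max_{D \in B_i} p_D \;\leq\; \sum_i \max_{D \in B_i} p_D,
\end{equation*}
since every $p_D$ is non-negative. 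Summing this inequality over all blocks $B \in \mP_2$ gives $f(\mP_2) \leq f(\mP_1)$.

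This monotonicity of $f$ on $\Pi_\mI^\mL$ yields the monotonicity of $p^M$ in the refinement order, and applying Theorem~\ref{thm:arb:1} concludes the argument. I do not anticipate a significant obstacle here: the entire argument rests on the elementary fact that a maximum over a disjoint union is bounded by the sum of the maxima over each piece, plus the monotonicity of $\log$. The one subtlety worth checking explicitly is that the constant $-\log(\max_D p_D)$ does not interfere with monotonicity (it does not, since it is the same for any two queries being compared), and that $\log$ is only applied to strictly positive quantities (guaranteed as long as $\mI$ has a database with nonzero probability, which must be the case for $H_\infty(X)$ itself to be finite).
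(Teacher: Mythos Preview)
Your proposal is correct and follows essentially the same approach as the paper: both reduce to showing that $f(\mP)=\sum_{B\in\mP}\max_{D\in B}p_D$ is monotone on the partition semilattice and then invoke Theorem~\ref{thm:arb:1}. The only cosmetic difference is in the monotonicity step itself: the paper observes that for each coarse block $S_2\in\mP_2$ there is a single fine block $S_1\subseteq S_2$ in $\mP_1$ with $\max_{D\in S_1}p_D=\max_{D\in S_2}p_D$, whereas you use the (slightly weaker but equally sufficient) bound $\max_{D\in B}p_D\le\sum_i\max_{D\in B_i}p_D$.
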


\begin{proof}
We will show that the function is monotone on the partition lattice. For partitions $\mP_1, \mP_2$ such that $\mP_1 \succeq \mP_2$, it suffices to show that $\sum_{S \in \mP_1} \max_{D \in S} \{ p_D \} \geq \sum_{S \in \mP_2} \max_{D \in S} \{p_D\}$. But now notice that for each set $S_2 \in \mP_2$, there exists a unique set $S_1 \in \mP_1$, such that $S_1 \subseteq S_2$ and $\max_{D \in S_1}  \{ p_D \}  = \max_{D \in S_2}  \{ p_D \} $.
\end{proof}

The min-entropy is not in general bundle arbitrage-free, as we show in Example~\ref{example:ba:violation} below.

\begin{example}\label{example:ba:violation}
Let $R(A,B)$ be a binary relation and assume that $\mI$ consists of the following four databases: $D_{00} = \{(a,0), (b,0) \}$, $D_{01} = \{(a,0), (b,1) \}$, $D_{10} = \{(a,1), (b,0) \}$, $D_{11} = \{(a,1), (b,1) \}$. We set the probability to $0.7$ for $D_{00}$, and $0.1$ for the other three databases.

The min-entropy of the initial distribution is $H_\infty(X) = -\log (0.7)$. Now, let $Q_1 = \sigma_{A=a}(R)$ and $Q_2 = \sigma_{A=b}(R)$. One can see that  $\mP_{Q_1} = \{ \{D_{00}, D_{01}\}, \{ D_{10}, D_{11}\} \}$ and $\mP_{Q_2} = \{ \{D_{00}, D_{10}\}, \{ D_{01}, D_{11}\} \}$. Thus, $H_\infty(X \mid Q_1(X)) = - \log (0.7+0.1) = -\log(0.8)$, from which we obtain $p^M(Q_1) = \log(8/7)$. Similarly, $p^M(Q_2) = \log(8/7)$. For the bundle $\bQ = Q_1, Q_2$, observe that the partition is $\mP_\bQ = \{ \{D_{00} \}, \{ D_{01}\}, \{ D_{10} \}, \{D_{11}\} \}$. Thus, $p^M(\bQ) = -\log(0.7) + \log(1) = \log(10/7)$. Notice finally that $\log(10/7) > 2\log(8/7)$, hence violates the bundle arbitrage condition. 
\end{example}

However, it becomes so when the initial distribution is uniform. Let $n= |\mI|$, in which case $p_D = 1/n$ for each database. Then, it is straightforward to see that the resulting function is the logarithm of the number of sets in the partition $\mP_\bQ$.
\begin{align} \label{eq:min:entropy:uniform}
p^{MU}(\bQ) = \log(n) + \log(|\mP_\bQ|/n) = \log(|\mP_\bQ|)
\end{align}
%

\begin{lemma}\label{lem:min:entropy:uniform}
The pricing function $p^{MU}(\bQ)$ defined in Equation~\eqref{eq:min:entropy:uniform} is arbitrage-free.
\end{lemma}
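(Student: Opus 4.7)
The plan is to apply the general characterization from Theorems~\ref{thm:arb:1} and~\ref{thm:arb:2}: it suffices to exhibit a function $f$ on the partition join-semilattice $\Pi_\mI^\mL$ such that $p^{MU}(\bQ) = f(\mP_\bQ)$ and $f$ is both monotone and subadditive. Here the natural choice is $f(\mP) = \log(|\mP|)$, so that $p^{MU}(\bQ) = f(\mP_\bQ)$ matches Equation~\eqref{eq:min:entropy:uniform} exactly.

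First I would verify monotonicity. Suppose $\mP_1 \succeq \mP_2$, i.e. $\mP_1$ refines $\mP_2$. By definition, every block of $\mP_1$ is contained in some block of $\mP_2$, and every block of $\mP_2$ is a disjoint union of blocks of $\mP_1$. Hence $|\mP_1| \geq |\mP_2|$, and because $\log$ is monotone on the positive reals, $f(\mP_1) = \log(|\mP_1|) \geq \log(|\mP_2|) = f(\mP_2)$. By Theorem~\ref{thm:arb:1}, $p^{MU}$ has no information arbitrage.

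Next I would verify subadditivity. Recall from the definition of the join in $\Pi_\mI^\mL$ that the blocks of $\mP_1 \vee \mP_2$ are precisely the non-empty intersections $B_1 \cap B_2$ with $B_1 \in \mP_1$ and $B_2 \in \mP_2$. In particular, $|\mP_1 \vee \mP_2| \leq |\mP_1| \cdot |\mP_2|$. Taking logarithms gives
\begin{align*}
f(\mP_1 \vee \mP_2) = \log(|\mP_1 \vee \mP_2|) \leq \log(|\mP_1|) + \log(|\mP_2|) = f(\mP_1) + f(\mP_2).
\end{align*}
By Theorem~\ref{thm:arb:2}, $p^{MU}$ has no bundle arbitrage.

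There is no real obstacle in this argument; the only thing to keep in mind is that the $p^{MU}$ obtained by specializing $p^M$ to the uniform distribution indeed has the clean closed form $\log(|\mP_\bQ|)$ (as already computed just above the lemma), so the reduction to the combinatorial bound $|\mP_1 \vee \mP_2| \leq |\mP_1| \cdot |\mP_2|$ is direct. Combining the two properties, $p^{MU}$ is arbitrage-free.
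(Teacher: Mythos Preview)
Your proof is correct and matches the paper's argument: the subadditivity step via $|\mP_1 \vee \mP_2| \leq |\mP_1|\cdot|\mP_2|$ is identical, and for monotonicity you argue directly that refinement cannot decrease the number of blocks, whereas the paper simply invokes the earlier Lemma~\ref{lem:ii:min:entropy} (no information arbitrage for $p^M$ in general). The difference is cosmetic; your version is slightly more self-contained.
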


\begin{proof}
Since we have already proved that $p^{MU}$ has no information arbitrage, we now show that it has no bundle arbitrage as well. Let $\mP = \mP_1 \vee \mP_2$. Since each set in $\mP$ is the unique intersection of one set from $\mP_1$ and one set from $\mP_2$, we have that $|\mP| \leq |\mP_1| \cdot |\mP_2|$. The desired result is obtained taking the $\log$ on each side of the equation.
\end{proof}

\noindent {\bf $\beta$-Success Rate.} This information measure, first introduced in~\cite{success-rate}, captures the expected success of guessing the database $D$ with $\beta$ tries. We will consider here only the case where the probability distribution is uniform, in which case the pricing functions becomes:
\begin{align} \label{eq:success:rate}
p^{\beta}(\bQ) = \log \left( \sum_{S \in \mP_\bQ} \min \{ \beta, |S| \} \right)
\end{align} 
Observe that for $\beta=1$ we have $p^\beta(\bQ) = p^{MU}(\bQ)$, hence this generalizes uniform min-entropy.

\begin{lemma}\label{lem:ii:beta}
The pricing function $p^{\beta}(\bQ)$ defined in Equation~\eqref{eq:success:rate} is arbitrage-free.
\end{lemma}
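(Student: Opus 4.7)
The plan is to invoke the characterization theorems (Theorems~\ref{thm:arb:1} and~\ref{thm:arb:2}): since $p^\beta(\bQ)=\log f(\mP_\bQ)$ where $f(\mP)=\sum_{S\in\mP}\min\{\beta,|S|\}$, it suffices to prove that $\log\circ f$ is monotone and subadditive over the partition join-semilattice $\Pi_\mI^\mL$. Because $\log$ is monotone, monotonicity of $\log\circ f$ reduces to monotonicity of $f$, and subadditivity of $\log\circ f$ is equivalent to the multiplicative inequality $f(\mP_1\vee\mP_2)\le f(\mP_1)\cdot f(\mP_2)$. I shall also note that $\beta\ge1$ (it is a number of guesses), which will be used for subadditivity. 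The base case $\beta=1$ already recovers $p^{MU}$, so this is consistent with Lemma~\ref{lem:min:entropy:uniform}.

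For monotonicity, I would take $\mP_1\succeq\mP_2$ and observe that every block $S_2\in\mP_2$ is a disjoint union of blocks $S_1^{(1)},\dots,S_1^{(k)}\in\mP_1$. The key inequality is
\[
\sum_{j=1}^{k}\min\{\beta,|S_1^{(j)}|\}\;\ge\;\min\{\beta,|S_2|\},
\]
which is verified by cases: if $|S_2|\le\beta$ all the mins equal the sizes and we get equality; if $|S_2|>\beta$, either some $|S_1^{(j)}|\ge\beta$ (so the LHS already contains a term equal to $\beta$) or all $|S_1^{(j)}|<\beta$, in which case the LHS sums to $|S_2|>\beta$. Summing over blocks of $\mP_2$ yields $f(\mP_1)\ge f(\mP_2)$.

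For subadditivity (the main technical step), I would use the fact that every block of $\mP_1\vee\mP_2$ is a non-empty intersection $S_1\cap S_2$ with $S_1\in\mP_1$, $S_2\in\mP_2$ (Lemma~\ref{lem:partition:sub}). Thus
\[
f(\mP_1\vee\mP_2)\;=\;\sum_{(S_1,S_2):\,S_1\cap S_2\neq\emptyset}\min\{\beta,|S_1\cap S_2|\},
\]
and it suffices to establish the pointwise bound
\[
\min\{\beta,|S_1\cap S_2|\}\;\le\;\min\{\beta,|S_1|\}\cdot\min\{\beta,|S_2|\}
\]
whenever $S_1\cap S_2\neq\emptyset$, since then the sum is bounded by $f(\mP_1)\cdot f(\mP_2)$ (the missing empty-intersection pairs only add non-negative terms on the right). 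Setting $x=|S_1|$, $y=|S_2|$, $z=|S_1\cap S_2|$ with $1\le z\le\min\{x,y\}$ and $\beta\ge 1$, I would verify this pointwise inequality by a short case analysis on whether $z,x,y$ exceed $\beta$, using $z\le y$, $x\ge 1$, and $\beta\ge 1$.

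The main obstacle is the pointwise multiplicative inequality for subadditivity; the case split is elementary but the role of $\beta\ge1$ and of the non-empty intersection assumption $z\ge1$ must both be used, as the claim would fail without them (e.g.\ for $\beta<1$ the inequality $\beta\le\beta^2$ reverses). Once this is secured, applying $\log$ converts the multiplicative inequality into the additive subadditivity required, and together with monotonicity the corollaries of Theorems~\ref{thm:arb:1} and~\ref{thm:arb:2} yield that $p^\beta$ is arbitrage-free.
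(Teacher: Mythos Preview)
Your proposal is correct and follows essentially the same route as the paper: both reduce arbitrage-freeness to monotonicity and subadditivity of $\log f$ on the partition semilattice, prove monotonicity by decomposing each block of the coarser partition into blocks of the finer one with the same case split, and for subadditivity convert to the multiplicative inequality $f(\mP_1\vee\mP_2)\le f(\mP_1)\,f(\mP_2)$. The only cosmetic difference is in how that product bound is established: the paper chains $\min\{\beta,|S_1|\}\cdot\min\{\beta,|S_2|\}\ge\min\{\beta,|S_1|\}\ge\min\{\beta,|S_1\cap S_2|\}$ (using $\min\{\beta,|S_2|\}\ge1$ and monotonicity of $t\mapsto\min\{\beta,t\}$), whereas you do a full four-case analysis of the same pointwise inequality; both arguments rely on $\beta\ge1$ and $|S_1\cap S_2|\ge1$ in exactly the same way.
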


\begin{proof}
We will use again the characterization of arbitrage in terms of the monotonicity and subadditivity of the function on elements of the partition lattice. 

Consider partitions $\mP_1 \succeq \mP_2$. We need to show that $\sum_{S \in \mP_1} \min \{\beta, |S|\} \geq \sum_{S' \in \mP_2} \min \{\beta, |S'|\}$. For a set $S' \in \mP_2$, consider all the sets $S \in \mP_1$ such that $S \subseteq S'$. We will show that $\sum_{S \in \mP_1: S \subseteq S'} \min \{\beta, |S| \} \geq \min \{\beta, |S'| \}$. Indeed, if some set $S \subseteq S'$ we have $\beta = \min \{ \beta, |S| \}$, then it must also be $\beta = \min \{ \beta, |S'| \}$, hence the inequality holds. Otherwise, the left hand side becomes equal to $|S'|$, in which case the inequality holds since trivially $|S'| \geq \min \{\beta, |S'| \}$.

To show no bundle arbitrage, let $\mP = \mP_1 \vee \mP_2$. We can now write:
\begin{align*}
& (\sum_{S \in \mP_1} \min \{\beta, |S|\}) \cdot (\sum_{S' \in \mP_2} \min \{\beta, |S'|\}) \geq \\
 \geq & \sum_{S_1 \cap S_2 \neq \emptyset} \min \{\beta, |S_1|\} \cdot \min \{\beta, |S_2|\}  \\
\geq & \sum_{S_1 \cap S_2 \neq \emptyset} \min \{\beta, |S_1|\} 
\geq \sum_{S_1 \cap S_2 \neq \emptyset} \min \{\beta, |S_1 \cap S_2|\}  \\
  = & \sum_{S \in \mP} \min \{\beta, |S|\} 
\end{align*}
The desired inequality is obtained by taking the logarithm of both sides.
\end{proof}

We should finally mention that several other entropy measures have been discussed in the broader literature. The Renyi entropy~\cite{renyi} is a generalization of both the Shannon entropy and the min-entropy. However, it is not subadditive, and thus not applicable as an arbitrage-free pricing function. Worst-case entropy measures~\cite{kopf2007information} can also be applied to measure information leakage, but they are also prone to bundle arbitrage.

\begin{table}
\centering
\begin{tabular}{|l|l|}\hline
Shannon Entropy & $p^H(\bQ) = \frac{1}{n} \sum_{B \in \mP_\bQ} |B| \log|B|$ \\[2ex]\hline
Guessing Entropy & $p^G(\bQ) = \frac{1}{2n}\left(n^2 - \sum_{B \in \mP_\bQ}  |B|^2\right)$  \\[2ex]\hline
Min-Entropy & $p^{MU}(\bQ) = \log \vert \mP_Q \vert$  \\[2ex]\hline
Tsallis Entropy & $p^T(\bQ)  = \frac{1}{q-1} \left(1 - \sum_{B \in \mP_\bQ}  (\frac{|B|}{n})^{q-1}\right)$  \\[2ex]\hline
$\beta$-Success Rate & $p^\beta(\bQ)  = \log \left( \sum_{B \in \mP_\bQ} \min \{ \beta, |B| \} \right) $  \\[2ex]\hline
\end{tabular}
\caption{The price of a query bundle $\bQ$ according to various entropy measures for the case of uniform probability distributions. We denote $n = |\mI|$.}
\label{tbl:entropy}
\end{table}

\section{Computing the Pricing Function}
\label{sec:algorithm}

So far we have studied how to construct pricing functions for both \APS\ and \QPS. In this section, we focus on the complexity of computing a pricing function.

\subsection{Support Sets}

We first start by discussing an generic approach that can construct efficiently computable arbitrage-free pricing functions for any query language $\mathcal{L}$ that can be computed efficiently. The key idea behind our construction is to define the pricing function on a smaller set 
$\mC \subseteq \mI$ of our choice, which we call {\em support}. The next two lemmas  show that this restriction still provides arbitrage-free answer-dependent and instance-independent pricing functions.

\begin{lemma} \label{lem:support:1}
Let $\mC \subseteq \mI$. If $f$ is a monotone and subadditive set function, the pricing function $p(\bQ,E) = f(\dagr{\bQ}{E} \cap \mC)$ is arbitrage-free.
\end{lemma}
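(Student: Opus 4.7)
The plan is to reduce the claim to \autoref{cor:arbitrage}, which already tells us that whenever $f$ is a monotone and subadditive set function on $2^\mI$, the pricing function $p(\bQ,E) = f(\dagr{\bQ}{E})$ is arbitrage-free. The natural move is to define the auxiliary set function $f'(A) = f(A \cap \mC)$ for $A \subseteq \mI$, and then rewrite the proposed pricing function as $p(\bQ, E) = f'(\dagr{\bQ}{E})$. Once this is done, it suffices to check that $f'$ inherits both monotonicity and subadditivity from $f$.

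For monotonicity, if $A \subseteq B$ then intersecting both sides with $\mC$ preserves containment, i.e. $A \cap \mC \subseteq B \cap \mC$, so by monotonicity of $f$ we get $f'(A) = f(A \cap \mC) \leq f(B \cap \mC) = f'(B)$. For subadditivity, I would use the distributivity identity $(A \cup B) \cap \mC = (A \cap \mC) \cup (B \cap \mC)$ and then apply subadditivity of $f$ to conclude
\begin{equation*}
f'(A \cup B) = f((A \cap \mC) \cup (B \cap \mC)) \leq f(A \cap \mC) + f(B \cap \mC) = f'(A) + f'(B).
\end{equation*}

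With $f'$ shown to be monotone and subadditive on $2^\mI$, \autoref{cor:arbitrage} directly yields that $p(\bQ,E) = f'(\dagr{\bQ}{E}) = f(\dagr{\bQ}{E} \cap \mC)$ has neither information arbitrage nor bundle arbitrage. There is no substantive obstacle here; the only thing to be careful about is that the claim is about monotonicity and subadditivity of $f'$ over the global powerset $2^\mI$ (which is what \autoref{cor:arbitrage} requires via \autoref{thm:arbitrage-free:1} and \autoref{thm:arbitrage-free:2}), rather than merely over a restricted semilattice $\mS_D^\mL$, so the verification above is stated at the right level of generality.
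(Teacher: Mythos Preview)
Your proof is correct and is essentially identical to the paper's own argument: the paper defines $g(A) = f(A \cap \mC)$, verifies monotonicity via $A \subseteq B \Rightarrow A \cap \mC \subseteq B \cap \mC$ and subadditivity via the distributive identity $(A_1 \cup A_2) \cap \mC = (A_1 \cap \mC) \cup (A_2 \cap \mC)$, exactly as you do. The only cosmetic difference is that you explicitly name \autoref{cor:arbitrage} as the concluding step, whereas the paper leaves that implicit.
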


\begin{proof}
It suffices to show that $g(A) = f(A \cap \mC)$ is monotone and subadditive. Indeed, if $A \subseteq B$, we have $A \cap \mC \subseteq B \cap \mC$, and hence by the monotonicity of $f$ we get $g(A) = f(A \cap \mC) \leq f(B \cap \mC) = g(B)$. For subadditivity, assume $A = A_1 \cup A_2$. Then, $A \cap \mC = (A_1 \cap \mC) \cup (A_2 \cap \mC)$, and thus $g(A) = f(A \cap \mC) = f((A_1 \cap \mC) \cup (A_2 \cap \mC)) \leq f(A_1 \cap \mC) + f(A_2 \cap \mC) = g(A_1) + g(A_2)$.
\end{proof}

Given a partition $\mP$ of the set $\mI$, we define the {\em restriction of $\mP$ to $\mC$}, denoted $\mP \cap \mC$, as the set $\setof{B \cap \mC}{B \in \mP, B \cap \mC \neq \emptyset}$.

\begin{lemma} \label{lem:support:2}
Let $\mC \subseteq \mI$. If $f$ is a monotone and subadditive function on the partition semilattice,  the pricing function $p(\bQ) = f(\mP_\bQ \cap \mC)$ is arbitrage-free.
\end{lemma}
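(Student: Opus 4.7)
The plan is to reduce Lemma~\ref{lem:support:2} to the characterization provided by Theorems~\ref{thm:arb:1} and~\ref{thm:arb:2}. Concretely, I define $g(\mP) = f(\mP \cap \mC)$, so that $p(\bQ) = g(\mP_\bQ)$ fits the template of an instance-independent pricing function driven by a function on the partition semilattice. It then suffices to show that $g$ inherits monotonicity and subadditivity from $f$, since that will immediately imply that $p$ is free of both information and bundle arbitrage via the two characterization theorems.

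For monotonicity, I would show that restriction to $\mC$ is order-preserving: if $\mP_1 \succeq \mP_2$, then $\mP_1 \cap \mC \succeq \mP_2 \cap \mC$. Indeed, any non-empty block $B_1 \cap \mC$ of $\mP_1 \cap \mC$ comes from some $B_1 \in \mP_1$, which lies inside a unique $B_2 \in \mP_2$; hence $B_1 \cap \mC \subseteq B_2 \cap \mC$, and the latter is a block of $\mP_2 \cap \mC$. Applying the assumed monotonicity of $f$ gives $g(\mP_1) = f(\mP_1 \cap \mC) \geq f(\mP_2 \cap \mC) = g(\mP_2)$.

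For subadditivity, the key observation is that restriction commutes with the join: $(\mP_1 \vee \mP_2) \cap \mC = (\mP_1 \cap \mC) \vee (\mP_2 \cap \mC)$. This holds because the blocks of $\mP_1 \vee \mP_2$ are precisely the non-empty intersections $B_1 \cap B_2$ for $B_1 \in \mP_1$, $B_2 \in \mP_2$, so intersecting further with $\mC$ yields $(B_1 \cap \mC) \cap (B_2 \cap \mC)$, which are exactly the non-empty pairwise intersections of blocks of the two restricted partitions. Using subadditivity of $f$ then gives
\begin{align*}
g(\mP_1 \vee \mP_2) &= f\bigl((\mP_1 \cap \mC) \vee (\mP_2 \cap \mC)\bigr) \\
&\leq f(\mP_1 \cap \mC) + f(\mP_2 \cap \mC) = g(\mP_1) + g(\mP_2).
\end{align*}
Combining the two properties, $g$ is monotone and subadditive on $\Pi_{\mI}^{\mL}$, so Theorems~\ref{thm:arb:1} and~\ref{thm:arb:2} yield that $p(\bQ) = g(\mP_\bQ)$ is arbitrage-free.

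No step of this proof poses a real obstacle; the only substantive point is verifying that restriction to $\mC$ is a semilattice homomorphism with respect to both refinement and join, after which the arguments from Lemma~\ref{lem:support:1} carry over almost verbatim in the partition setting.
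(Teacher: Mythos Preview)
Your proposal is correct and follows essentially the same approach as the paper: define $g(\mP) = f(\mP \cap \mC)$, show that restriction to $\mC$ preserves the refinement order and commutes with the join on partitions, and conclude that $g$ inherits monotonicity and subadditivity from $f$. The paper's proof is nearly identical in structure and detail, the only cosmetic difference being that it states ``it suffices to show $g$ is monotone and subadditive'' without explicitly naming Theorems~\ref{thm:arb:1} and~\ref{thm:arb:2}.
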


\begin{proof}
It suffices to show that $g(\mP) = f(\mP \cap \mC)$ is monotone and subadditive. Indeed, let $\mP_1 \succeq \mP_2$. Let $B_1 \in \mP_1 \cap \mC$. By the construction of the restriction, there must exist some $B_1' \in \mP_1$ such that $B_1 = B_1' \cap \mC$. Also, there exists a unique $B_2' \in \mP_2$ such that $B_1' \subseteq B_2'$. Notice that $B_2= B_2' \cap \mC \neq \emptyset$, which implies $B_2 \in \mP_2 \cap \mC$. But then, $B_1 \subseteq B_2$, so $\mP_1 \cap \mC \succeq \mP_2 \cap \mC $. The monotonicity of $g$ then follows from the monotonicity of $f$. 

For subadditivity, let $\mP = \mP_1 \vee \mP_2$. We will then show that $\mP \cap \mC = (\mP_1 \cap \mC) \vee (\mP_2 \cap \mC)$. Indeed, for any $B \in \mP$ such that $B \cap \mC \neq \emptyset$, we have that $B = B_1 \cap B_2$, where $B_1 \in \mP_1$ and $B_2 \in \mP_2$. But then $(B \cap \mC) = (B_1 \cap \mC) \cap (B_2 \cap \mC)$.
\end{proof}

The above results provide us with a method to design an efficient arbitrage-free pricing function for a query language $\mathcal{L}$. We start by choosing a support $\mC \subseteq \mI$. To compute the pricing function, we first compute $\dagr{\bQ}{E} \cap \mC$ for answer-dependent (or $\mP_\bQ \cap \mC$ for instance-independent). The observation is that we can achieve this by evaluating the query bundle $\bQ$ only on the databases $D \in \mC$. Hence, the running time of computing the price does not depend on $|\mI|$, but on $|\mC|$ and the complexity of evaluating the query bundle $\bQ$.

\begin{example}
Consider any set $\mC \subseteq \mI$. Then $p(\bQ,E) = \log |\{D \in \mC \mid \bQ(D) \neq E\}|$ is an arbitrage-free pricing function. Similarly,
$p(\bQ) = \log |\mP_\bQ \cap \mC|$ is also arbitrage-free.
\end{example}

The advantage of using support sets to construct pricing functions is that they provide us with a generic method that is independent of the language $\mathcal{L}$. On the other hand, the size and choice of the support $\mC$ is a challenging problem. We can always choose $\mC$ to contain a single database. The evaluation of the price will be very efficient, but any query will be assigned only one of two prices, and thus the pricing function will not be very successful in measuring the value of the data. If we instead choose a very large support, this leads to expensive and impractical price computation. We leave as an open research question how to choose a good support $\mC$ that is suitable for a practical implementation. \\

\subsection{The Complexity of Entropy-Based Pricing}

In a practical setting, the set $\mI$ will be given implicitly. For example, $\mI$ can be the infinite set of all databases, or the set of all subsets
of a given database $D_0$, $\mI = \{D \mid D \subseteq D_0\}$. One might think that since the
problem of determinacy (either query or data-dependent) is hard even for the class of
conjunctive queries, computing an arbitrage-free pricing function is always hard. However, as
we showed in the previous section about support sets, it is always possible to construct non-trivial pricing schemes that circumvent the computation of determinacy and thus can be computed efficiently. Here we will focus on the computational complexity for the pricing functions we introduced that are based on entropy.

The task necessary to compute an answer-dependent pricing function such as $p(\bQ,E) = \log(|\dagr{\bQ}{E}|)$, or any of the instance-independent functions in~\autoref{tbl:entropy} is the following: {\em  given a view extension $E$ and $\bQ$, compute $|\agr{\bQ}{E}|$, which is the number of databases in $\mI$ such that $\bQ(D)=E$}. If $\mI$ can be succinctly expressed as $\mI = \{D \mid D \subseteq D_0\}$, the task relates to the area of probabilistic databases. Indeed, we can view $D_0$ as a tuple-independent probabilistic database where each tuple has the same probability $1/2$. Then, we can write $|\agr{\bQ}{E}| = P(\bQ(D_0) = E) \cdot |\mI|$. 
%
%
Unfortunately, computing the probability $P(\bQ(D_0) = E)$ is in general a ${}^\#P$-hard problem (w.r.t. the size of $D_0$), even for the class of conjunctive queries~\cite{prob}. However, the task is known to be in polynomial time for certain  types of queries. For instance, in Example~\ref{ex:intro}, where $Q$ is a selection query over a single table, the size of the conflict set can be computed exactly in polynomial time.
We should note here that the problem of checking whether $\agr{\bQ}{E}$ is empty or not is equivalent to the problem of {\em view consistency}, which is shown to be NP-hard for the class of conjunctive queries~\cite{AD98} when $\mI$ ranges over all databases.




Even if $|\agr{\bQ}{E}|$ can be computed exactly, the number of blocks in the partition $\mP_\bQ$ may still be exponentially large, which would make computing the Shannon or Guessing entropy intractable. In this case, we can write the information gain as $ p(\bQ) = - \sum_{D \in \mI} \log | [D]_\bQ| $, and construct an estimator of the price that samples independently $m$ databases from $\mI$ and outputs their average: $\tilde{p}(\bQ) = \frac{1}{m} \sum_{i=1}^m \log | [D_i]_\bQ| $. In~\cite{Kopf:2010aa, entropy:approx}, the authors show that such an estimator can achieve an additive $\delta$-approximation of the price with a number of samples that is polynomial in $1/\delta, \log(|\mI|)$. We say that a pricing function is {\em $\varepsilon$-approximately arbitrage-free} if the arbitrage conditions are violated within an additive $\varepsilon$. It is straightforward to see that $\tilde{p}$ results in a $(3\delta)$-approximately arbitrage-free pricing scheme. This implies that we can compute in polynomial time an approximation of the entropy function that is as close to arbitrage-free as we would like to.

\section{Related Work}
\label{sec:related}

The problem of data pricing has been studied from a wide range of perspectives, including online markets and privacy~\cite{DworkMNS06,MS09}. \cite{jain:02} examined a variety of issues involved in pricing of information products and presented an economic approach to design of optimal pricing mechanism for online services. \cite{balazinska:11b} introduced the challenge of developing pricing functions in the context of cloud-based environments, where users can pay for queries without buying the entire dataset.  This work also outlines various research challenges, such as enabling fine-grained pricing and developing efficient and fair pricing models for cloud-based markets.

The first formal framework for query-based data pricing was introduced by Koutris et al.~\cite{KUBHS12}. The authors define the notion of arbitrage, and provide a framework that takes a set of fixed prices for views over the data identified by seller, and extends these price points to a pricing function over any query. The authors also show that evaluation of the prices can be done efficiently in polynomial time for specific classes of conjunctive queries and a restricted set of views that include only selections. Subsequently, the authors demonstrated how the framework can be implemented into a prototype pricing system called QueryMarket~\cite{KUBHS12b, KUBHS13}. Further work~\cite{LM12} discusses the pricing and complexity of pricing for the class of aggregate queries.
The work by Lin and Kifer~\cite{LK14} proposes several possible forms of arbitrage violations and integrates them into a single framework. The authors allow the queries to be randomized, and propose two potential pricing functions that are arbitrage-free across all forms. 

Data pricing is tightly connected to differential privacy~\cite{DBLP:journals/cacm/Dwork11}. Ghosh and Roth~\cite{ghosh2013selling} study the buying and selling of data by considering privacy as an entity. Their framework compensates the seller for the loss of privacy due to selling of private data. 
A similar approach to pricing in the context of privacy is discussed in~\cite{DBLP:journals/tods/LiLMS14}.

We should finally mention the close connection of query pricing to the measurement of information leakage in programs. In~\cite{kopf2007information}, the authors apply information-theoretic measures, including various entropy measures, to compute the leakage of information from a side-channel attack that attempts to gain access to secret information. \cite{Kopf:2010aa} uses similar ideas to quantify the flow of information in programs, and proposes various approximation techniques to efficiently compute them. 


\section{Conclusion}
\label{sec:conclusion}

In this paper, we explore in depth the design space of arbitrage-free pricing functions. 
We present a characterization of the structure for both answer-dependent and instance-independent
pricing functions, and propose several constructions. 
Our work opens several exciting research questions, including testing which pricing functions 
behave well in practical settings, and exploring the various tradeoffs when deploying a pricing scheme. 

\textbf{Acknowledgements.} We would like to thank Aws Albarghouthi for pointing out the close connection of our work to quantitative information flow and information leakage in side-channel attacks.

\clearpage
\bibliographystyle{abbrv}
\bibliography{ref}  

\end{document}